\numberwithin{equation}{section}
\theoremstyle{plain}
\newtheorem{thm}{Theorem}[section]
\newtheorem{corollary}{Corollary}[thm]
\newcommand{\trial}[1]{#1^{(n)}}
\newcommand{\mbf}[1]{\mathbf{#1}}
\newcommand{\bs}[1]{\boldsymbol{#1}}
\newtheorem{lemma}{Lemma}[section]
\newcommand{\sgn}{\text{\textnormal{sgn}}}
\title{Torus Graphs for Multivariate Phase Coupling Analysis 
\thanks{This is the peer reviewed version of the following
    article: Natalie Klein, Josue Orellana, Scott Brincat, Earl K. Miller, and Robert E. Kass, ``Torus Graphs for Multivariate Phase Coupling Analysis,'' \textit{in press at The Annals of Applied Statistics}. This article may be used for non-commercial purposes in accordance with IMS Journals and Conditions for Use of Self-Archived Versions.}}
\author{Natalie Klein$^\bot{}^\S$
\thanks{N.K. and J.O. contributed equally to this work.}
, Josue Orellana$^\P{}^\S{}^\dagger$, Scott Brincat$^\parallel$, Earl K. Miller$^\parallel$, and Robert E. Kass$^\bot{}^\P{}^\S$
\thanks{To whom correspondence should be addressed. E-mail: \texttt{kass@stat.cmu.edu}} \\ 
Department of Statistics and Data Science, Carnegie Mellon University$^\bot$,\\
Machine Learning Department, Carnegie Mellon University$^\S$, \\ 
Center for the Neural Basis of Cognition, Carnegie Mellon University and University of Pittsburgh$^\P$\\
and Department of Brain and Cognitive Science, Massachusetts Institute of Technology$^\parallel$ 
}
\begin{document}
\maketitle
\begin{abstract}
Angular measurements are often modeled as circular random variables, where there are natural circular analogues of moments, including correlation. 
Because a product of circles is a torus, a $d$-dimensional vector of circular random variables lies on a $d$-dimensional torus. 
For such vectors we present here a class of graphical models, which we call {\em torus graphs}, based on the full exponential family with pairwise interactions. 
The topological distinction between a torus and Euclidean space has several important consequences. 

Our development was motivated by the problem of identifying phase coupling among oscillatory signals recorded from multiple electrodes in the brain: oscillatory phases across electrodes might tend to advance or recede together, indicating coordination across brain areas. 
The data analyzed here consisted of 24 phase angles measured repeatedly across 840 experimental trials (replications) during a memory task, where the electrodes were in 4 distinct brain regions, all known to be active while memories are being stored or retrieved. 
In realistic numerical simulations, we found that a standard pairwise assessment, known as phase locking value, is unable to describe multivariate phase interactions, but that torus graphs can accurately identify conditional associations. 
Torus graphs generalize several more restrictive approaches that have appeared in various scientific literatures, and produced intuitive results in the data we analyzed. 
Torus graphs thus unify multivariate analysis of circular data and present fertile territory for future research.
\end{abstract}
\keywords{graphical models \and circular statistics \and network analysis}
\section{Introduction}
New technologies for recording electrical activity among large networks of neurons have created great opportunities to advance neurophysiology, and great challenges in data analysis \citep[e.g.,][]{steinmetz2018challenges}. 
One appealing idea, which has garnered substantial attention, is that under certain circumstances, long-range communication across brain areas may be facilitated through coordinated network oscillations \citep{buzsaki2004neuronal,ching2010thalamocortical,fell2011role,sherman2016neural}. 
To demonstrate coordination among oscillatory networks, computational neuroscientists have examined \textit{phase coupling} across replications (trials) of the experiment. 
That is, when the phase of an oscillatory potential at a particular location, and a particular latency from the beginning of the trial, is measured repeatedly it will vary; phase coupling refers to the tendency of two phases, measured simultaneously at two locations, to vary together, i.e., to be associated, across trials. 
The data we analyze here consist of 24 phase angles recorded simultaneously, on each of many trials, 
from several brain regions known to play a role in memory formation and recall \citep{brincat2015frequency, Brincat:2016cl},  prefrontal cortex (PFC) and three sub-areas of the hippocampus, the dentate gyrus (DG), subiculum (Sub) and CA3. 
Being angles, phases may be considered circular random variables. 
A commonly-applied measure of phase coupling, known as Phase Locking Value (PLV), is an estimator of the natural circular analogue of Pearson correlation under certain assumptions (which we review). 
PLV, however, like correlation, can not distinguish between direct association and indirect association via alternative pathways. 
Thus, a large PLV between PFC and DG does not distinguish between direct coupling and indirect coupling via a third area, such as via CA3 (neural activity in PFC could be coupled directly with that in CA3, and that in CA3 with that in DG). 
To draw such a distinction we need, instead, a circular measure that is analogous to partial correlation. 
More generally, we wish to construct circular analogues of Gaussian graphical models. 
Because key properties of Gaussian graphical models are inherited by exponential families, and the product of circles is a torus, we consider exponential families on a multidimensional torus and call the resulting models \textit{torus graphs}. 
We used torus graphs to provide a thorough description of associations among the 24 repeatedly-measured phases in the Brincat and Miller data and, in particular, we found strong evidence that the association between activity in PFC and DG is indirect, via both CA3 and Sub, rather than direct.

When circular random variables are highly concentrated around a central value, there is little harm in ignoring their circular nature, and multivariate Gaussian methods could be applied. 
However, in most of the neurophysiological data we have seen, including those analyzed here, the marginal distributions of phases are very diffuse, close to uniform, so the topological distinction between the circle and the real line is important. 
The torus topology is consequential not only for computation of probabilities but also for the interpretation of association. 
Figure \ref{FigRepeatedTrials_part1} displays the inability of rectangular coordinates to preserve the clustering of points around a diagonal line under strong positive association. 
Furthermore, unlike the Gaussian case where a single scalar, correlation, describes both positive and negative association, on the torus, positive and negative association each have both an amplitude and a phase, so each pairwise association is, in general, described by 2 complex numbers. 
Also, in the Gaussian case, it is possible to interpret the association of two variables without knowing their marginal concentrations. 
This is no longer true for torus graphs. 

After defining torus graphs and providing a few basic properties in Section \ref{sec:tgmodel}, in Section \ref{sec:tgsubmodels} we consider several important alternative families of distributions for multivariate circular data that have appeared in the literature, and show that they are all special cases of torus graphs. 
In Section \ref{sec:tgphasecoupling}, we step through the interpretation of phase coupling in torus graphs by considering in detail the bivariate and trivariate cases. 
In Section \ref{sec:tgest}, we provide estimation and inference procedures and, in Section \ref{sec:simstudy}, document via simulation studies the very good performance of these procedures in realistic settings. 
Our analysis of the data appears in Section \ref{sec:lfpdata} and we make a few closing remarks in Section \ref{sec:disc}.

\begin{figure}
    \centering
    \includegraphics[width=0.7\textwidth]{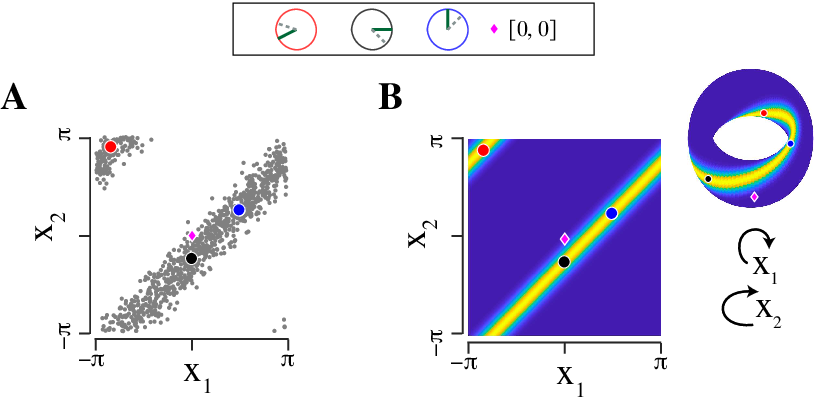}
    \caption{Rectangular coordinates are unable to accurately represent strong positive association between two circular random variables. (A) Scatter plot of simulated observations from a pair of dependent circular variables in rectangular coordinates, with three observations highlighted in red, black, and blue (simulated plot is similar to real data plots, but somewhat more concentrated for visual clarity; see Figure \ref{fig:bivscatterDG}). 
    The highlighted observations are shown on circles at the top of the figure (one angle as a dashed line, one as a solid line; positive dependence implies a consistent offset between the angles).
    While the black and blue points follow the diagonal line, the red point falls near the upper left corner due to conversion to rectangular coordinates. (B) Probability density representing the two variables, plotted both in rectangular coordinates and on the torus, with the same three points marked. 
    On the torus, there is a single band with high probability, which wraps around and connects to itself as a M\"obius strip, and all three points fall on this strip.}
    \label{FigRepeatedTrials_part1}
\end{figure}

\section{Torus graph model}
\label{sec:tgmodel}
Suppose $\mbf{X}$ is a $d$-dimensional random vector with $j$th element $X_j$ being a circular random variable, which may be expressed as an angle in $[0, 2\pi)$, though other choices of angular intervals, such as $[-\pi, \pi)$, are equivalent.
When $d = 2$, $\mbf{X}$ lies on the product of two circles (a torus), and in general it lies on a multidimensional torus.
When considering phase coupling in neural data, $\mbf{X}$ represents a vector of phase angle values extracted from oscillatory signals for a single time point from each of $d$ signals, with repeated trials providing multiple observations. 

The torus graph model may be developed by analogy to the multivariate Gaussian distribution, a member of the exponential family that models dependence between $d$ real-valued variables.
In general, for a random vector $\mbf{Y}$, an exponential family distribution is specified through a vector of natural parameters $\bs{\eta}$ that multiply a vector of sufficient statistics $\mbf{S}(\mbf{y})$ summarizing information from the data that is sufficient for the parameters
\citep{wainwright2008graphical} and has a density of the form:
\begin{align*}
    p(\mbf{y} ; \bs{\eta}) \propto \exp \left( \bs{\eta}^T \mbf{S}(\mbf{y}) \right).
\end{align*}

In the bivariate Gaussian distribution, $\mbf{Y} \in \mathbb{R}^2$ and the sufficient statistics corresponding to the natural parameters are $\mbf{y}$ and $\mbf{y}\mbf{y}^T$, which describe the first- and second-order behavior of the variates.
For a vector of angular variables $\mbf{X} \in [0, 2\pi)^2$, we follow \cite{Mardia:2005dl} by representing the angles using rectangular coordinates on the unit circle as $\mbf{Y}_1 = [\cos X_1, \sin X_1]$ and $\mbf{Y}_2 = [\cos X_2, \sin X_2]$.
The first-order sufficient statistics are $\mbf{y}_1$ and $\mbf{y}_2$.
The second-order behavior is described by: 
\begin{align*}
    \mbf{y}_1 \mbf{y}_2^T = \begin{bmatrix} \cos x_1 \cos x_2 & \cos x_1 \sin x_2 \\ \sin x_1 \cos x_2 & \sin x_1 \sin x_2 \end{bmatrix}.
\end{align*}
This choice of sufficient statistics leads to the following natural exponential family density parameterized by $\bs{\eta} = [\bs{\eta}_1, \bs{\eta}_2, \bs{\eta}_{12}]$:
\begin{align}
    p(\mbf{x} ;  \bs{\eta}) \propto \exp \left( \bs{\eta}_1^T 
    \begin{bmatrix} \cos x_1 \\ \sin x_1 \end{bmatrix} +
    \bs{\eta}_2^T 
    \begin{bmatrix} \cos x_2 \\ \sin x_2 \end{bmatrix} +
    \bs{\eta}_{12}^T 
    \begin{bmatrix} \cos x_1 \cos x_2 \\ \cos x_1 \sin x_2 \\ \sin x_1 \cos x_2 \\ \sin x_1 \sin x_2 \end{bmatrix} 
    \right) \label{eq:bivprodmodel}
\end{align}
The first two terms correspond to marginal circular means and concentrations of each variable, while the third term is a pairwise coupling term describing dependence between the variables.
In the absence of pairwise coupling, the marginal distributions are all von Mises \citep[p. 48]{NIFisher1993}, and if $d = 1$, the torus graph model is itself von Mises. 
Extending Equation \ref{eq:bivprodmodel} to $d > 2$ yields
\begin{align}
    p(\mbf{x} ;  \bs{\eta}) \propto \exp \left( \sum_{j=1}^d \bs{\eta}_j^T 
    \begin{bmatrix} \cos x_j \\ \sin x_j \end{bmatrix} +
    \sum_{j < k} \bs{\eta}_{jk}^T 
    \begin{bmatrix} \cos x_j \cos x_k \\ \cos x_j \sin x_k \\ \sin x_j \cos x_k \\ \sin x_j \sin x_k \end{bmatrix} 
    \right). \label{eq:ddimprodmodel}
\end{align}
The normalization constant is intractable, though numerical approximations may be used in the bivariate case \citep{Kurz:2015tr}.

Applying trigonometric product-to-sum formulas to the pairwise coupling terms of Equation \ref{eq:ddimprodmodel} yields an equivalent, alternative parameterization in terms of natural parameters $\bs{\phi}$:
\begin{align}
    p(\mbf{x} ;  \bs{\phi}) \propto \exp \left( \sum_{j=1}^d \bs{\phi}_j^T 
    \begin{bmatrix} \cos x_j \\ \sin x_j \end{bmatrix} +
    \sum_{j < k} \bs{\phi}_{jk}^T 
    \begin{bmatrix} \cos(x_j-x_k) \\ \sin(x_j-x_k) \\ \cos(x_j+x_k) \\ \sin(x_j+x_k) \end{bmatrix} 
    \right). \label{eq:tgdiff}
\end{align}
We define a \textit{$d$-dimensional torus graph} to be any member of the family of distributions specified by Equations \ref{eq:ddimprodmodel} or  \ref{eq:tgdiff}.
In the form of Equation \ref{eq:tgdiff}, the sufficient statistics involving only a single angle are
\begin{align*}
    \mbf{S}^1_j(\mbf{x}) &= [\cos (x_j), \, \sin (x_j)]^T,
\end{align*}
and the sufficient statistics involving pairs of angles are
\begin{align*}
    \mbf{S}^2_{jk}(\mbf{x}) &= [\cos (x_j - x_k), \, 
                                \sin (x_j - x_k), \,
                                \cos (x_j + x_k), \, 
                                \sin (x_j + x_k)]^T.
\end{align*}
We will use $\bs{\phi}$ and $\mbf{S} \equiv [\mbf{S}^1, \mbf{S}^2]$ to refer to the full vectors of parameters and sufficient statistics for all angles.
The natural parameter space is given by 
\begin{align*}
    \bs{\Phi} = \left\{ \bs{\phi} \, : \, \int_{[0, 2\pi)^d} \exp \left( \bs{\phi}^T \mbf{S}(\mbf{x}) \right) \, d\mbf{x} \, < \, \infty \right\}
\end{align*}
which implies that $\bs{\phi} \in \mathbb{R}^{2d^2}$ (because each angle has two marginal parameters, and each unique pair of angles has four coupling parameters, leading to $2d + 4[d(d-1)/2] = 2d^2$ parameters).

We prefer the parameterization of Equation \ref{eq:tgdiff} because it offers a simple interpretation: the sufficient statistics containing phase differences correspond to positive rotational dependence between the angles, while the sufficient statistics containing phase sums correspond to negative rotational (or reflectional) dependence. 
Positive rotational dependence occurs when phase differences are consistent across observations, that is, $X_j - X_k \approx \xi$ or $X_j \approx X_k + \xi$, for some angle $\xi$.
Then conditionally on $X_j = x_j$, $X_k$ is obtained by rotating from $x_j$ by approximately $\xi$.
Reflectional dependence instead refers to consistency in the phase sums so that $X_k \approx - X_j + \xi$, meaning that conditionally on $X_j = x_j$, $X_k$ is obtained by rotating from $-x_j$ by approximately $\xi$.
To demonstrate how each type of dependence might arise in repeated observations of neural oscillations, we show pairs of phase angles under each type of dependence in Figure \ref{fig:phasesum}; in addition, bivariate torus graph densities dominated by each type of dependence are displayed in Figure \ref{fig:bivar_posneg}.
While we have observed both kinds of dependence in neural phase angle data, rotational dependence appears to dominate in the data we analyze in this paper (see Section \ref{sec:lfpdata}).

Because the natural parameter space is $\mathbb{R}^{2d^2}$ and the $d$-dimensional torus is compact, the full $2d^2$-dimensional exponential family is regular (see \citealt[p. 2]{brown1986fundamentals}). We call the full family a {\em torus graph model} and we summarize its properties, given above, in the following theorem.

\begin{thm}[Torus graph model] \label{thm:tgnatural}
The $d$-dimensional torus graph model is a regular full exponential family. 
Equation \ref{eq:tgdiff} provides a reparameterization of the family in Equation \ref{eq:ddimprodmodel} in which the expectations of the sufficient statistics are the first circular moments and, for $d \ge 2$, the second circular moments represent rotational and reflectional dependence between pairs of variables.
In Equation \ref{eq:tgdiff}, the natural parameter has components $\bs{\phi}_j \in \mathbb{R}^2$ corresponding to the first circular moment of $X_j$ and $\bs{\phi}_{jk} \in \mathbb{R}^4$ corresponding to the second circular moments representing dependence between $X_j$ and $X_k$.
\end{thm}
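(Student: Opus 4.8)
The proof has three parts, matching the three clauses of the statement, and I would treat them in order.

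\emph{First}, to see that the family, taken over the full natural parameter space $\bs{\Phi}$, is regular and full, observe that every coordinate of $\mbf{S}(\mbf{x})$ is a product of at most two sines and cosines, hence bounded in modulus by $1$, while $[0,2\pi)^d$ has finite Lebesgue measure $(2\pi)^d$. Thus for every $\bs{\phi}\in\mathbb{R}^{2d^2}$ the integrand $\exp(\bs{\phi}^T\mbf{S}(\mbf{x}))$ is continuous and bounded above by $\exp(\|\bs{\phi}\|_1)$, so its integral over the torus is finite; therefore $\bs{\Phi}=\mathbb{R}^{2d^2}$, which is open, and by the definition of regularity this is a regular full exponential family \citep[p. 2]{brown1986fundamentals}. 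I would then check minimality, so that the dimension is exactly $2d^2 = 2d + 4\binom{d}{2}$ and $\bs{\phi}$ is identifiable, by noting that the coordinates of $\mbf{S}$ are the real and imaginary parts of the Fourier characters $e^{i\mbf{m}^T\mbf{x}}$ for the distinct nonzero frequency vectors $\mbf{m}\in\{\mbf{e}_j\}\cup\{\mbf{e}_j-\mbf{e}_k\}_{j<k}\cup\{\mbf{e}_j+\mbf{e}_k\}_{j<k}$; by orthogonality of these characters, no nontrivial linear combination of the coordinates of $\mbf{S}$ is almost everywhere constant.

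\emph{Second}, for the claim that Equation \ref{eq:tgdiff} reparameterizes Equation \ref{eq:ddimprodmodel}, I would apply the product-to-sum identities to the four coupling statistics $\cos x_j\cos x_k$, $\cos x_j\sin x_k$, $\sin x_j\cos x_k$, $\sin x_j\sin x_k$; these exhibit them, for each fixed pair $(j,k)$, as a fixed invertible linear image (a $4\times 4$ matrix of entries $\pm\tfrac12$ with nonzero determinant) of $\cos(x_j-x_k)$, $\sin(x_j-x_k)$, $\cos(x_j+x_k)$, $\sin(x_j+x_k)$, the single-angle statistics being unchanged. Hence there is a block-diagonal linear bijection of $\mathbb{R}^{2d^2}$ sending $\bs{\eta}$ to $\bs{\phi}$ under which the exponents of Equations \ref{eq:ddimprodmodel} and \ref{eq:tgdiff} coincide pointwise, so the two specify the same family of densities, with $\bs{\phi}_j=\bs{\eta}_j$ and $\bs{\phi}_{jk}$ the indicated linear combination of the entries of $\bs{\eta}_{jk}$.

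\emph{Third}, I would identify the mean-value parameters. The first circular moment of $X_j$ is $E[e^{iX_j}]=E[\cos X_j]+iE[\sin X_j]$, so $E[\mbf{S}^1_j(\mbf{X})]$ is precisely its real and imaginary parts; likewise $E[\mbf{S}^2_{jk}(\mbf{X})]$ collects the real and imaginary parts of the joint second circular moments $E[e^{i(X_j-X_k)}]$ and $E[e^{i(X_j+X_k)}]$, and their interpretation as rotational (phase-difference) and reflectional (phase-sum) dependence is exactly what was argued in the discussion preceding the theorem, which I would simply cite. (One may also invoke $E[\mbf{S}(\mbf{X})]=\nabla_{\bs{\phi}}A(\bs{\phi})$, with $A$ the log-partition function, to note these are the mean parameters dual to $\bs{\phi}_j$ and $\bs{\phi}_{jk}$, but that is not needed here.) The only step that is more than routine trigonometry or a boundedness estimate is the minimality argument in the first part; I therefore expect the write-up to be short.
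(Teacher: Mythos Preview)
Your proposal is correct and tracks the paper's argument closely: the regularity claim is handled exactly as in the text preceding the theorem (boundedness of $\mbf{S}$ on a compact domain gives $\bs{\Phi}=\mathbb{R}^{2d^2}$), the reparameterization via product-to-sum identities is identical, and your identification of $E[\mbf{S}^1_j]$ and $E[\mbf{S}^2_{jk}]$ with the real and imaginary parts of $E[e^{iX_j}]$, $E[e^{i(X_j-X_k)}]$, $E[e^{i(X_j+X_k)}]$ is precisely the content of the paper's proof in Section~\ref{sec:tgmodelcomplex}, which frames the same computation through complex covariance (rotational $E[Z_j\overline{Z_k}]$ versus reflectional $E[Z_jZ_k]$, citing \cite{schreier2010statistical}).

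The one substantive addition you make is the minimality argument via orthogonality of the Fourier characters $e^{i\mbf{m}^T\mbf{x}}$ for the distinct frequency vectors $\mbf{e}_j$, $\mbf{e}_j\pm\mbf{e}_k$. The paper does not spell this out; it implicitly treats the $2d^2$ statistics as affinely independent. Your argument is a clean way to justify that the family is genuinely $2d^2$-dimensional and $\bs{\phi}$ is identifiable, so it is worth keeping.
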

We prove Theorem \ref{thm:tgnatural} in Section \ref{sec:tgmodelcomplex} by writing the angles as complex numbers and considering the complex first moments and complex-valued covariances between the variables.

Because the torus graph is an exponential family distribution with sufficient statistics corresponding to first circular moments and to pairwise interactions between variables, it is similar to the multivariate Gaussian distribution, and, as in a Gaussian graphical model, the parameters correspond to a conditional independence graph structure.
Specifically, as we state in Corollary \ref{cor:tgprop} and prove in Section \ref{sec:s3}, the pairwise coupling parameters $\bs{\phi}_{jk}$ correspond to the structure of an undirected graphical model, where an edge is missing if, and only if, the corresponding pair of variables are conditionally independent given all the other random variables.
This suggests that an undirected graphical model structure may be learned through inference on the pairwise interaction parameters, or by applying regularization in high dimensions to shrink the pairwise interaction parameters.

\begin{corollary}[Torus graph properties] \label{cor:tgprop}
The $d$-dimensional torus graph model has the following properties:
\begin{enumerate}
    \item It is the maximum entropy model subject to constraints on the expected values of the sufficient statistics.
    \item In the torus graph model, the random variables $X_j$ and $X_k$ are conditionally independent given all other variables if and only if the pairwise interaction terms involving $X_j$ and $X_k$ vanish (that is, if the entire vector $\bs{\phi}_{jk} = \mbf{0}$ in the density of Equation \ref{eq:tgdiff}).
\end{enumerate}
\end{corollary}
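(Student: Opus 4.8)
The plan is to establish the two claims by standard exponential-family machinery, using only that the torus is compact (so the uniform base measure is finite) and that the torus graph of Equation \ref{eq:tgdiff} is a regular full exponential family whose sufficient statistics split into per-node blocks $\mbf{S}^1_j$ and per-edge blocks $\mbf{S}^2_{jk}$ (Theorem \ref{thm:tgnatural}). For Claim 1, take the uniform measure on $[0,2\pi)^d$ as the base measure, so differential entropy $H(q)=-\int q\log q$ is well defined, and fix a mean vector $\bs{\mu}$ in the interior of the mean-parameter space, so that a matching torus graph density $p_{\bs{\phi}}$ with $\mathbb{E}_{p_{\bs{\phi}}}[\mbf{S}(\mbf{X})]=\bs{\mu}$ exists and is unique by regularity. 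For any density $q$ with $\mathbb{E}_{q}[\mbf{S}(\mbf{X})]=\bs{\mu}$, nonnegativity of Kullback--Leibler divergence together with $\log p_{\bs{\phi}}=\bs{\phi}^{T}\mbf{S}-\log Z(\bs{\phi})$ gives $0\le \mathrm{KL}(q\,\|\,p_{\bs{\phi}}) = H(p_{\bs{\phi}})-H(q)$, where the cross term $\mathbb{E}_{q}[\bs{\phi}^{T}\mbf{S}]$ is replaced by $\mathbb{E}_{p_{\bs{\phi}}}[\bs{\phi}^{T}\mbf{S}]$ using the shared moments; hence $H(q)\le H(p_{\bs{\phi}})$ with equality iff $q=p_{\bs{\phi}}$. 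This is the usual Gibbs argument and needs nothing about the torus beyond finiteness of the base measure.

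For Claim 2, first rewrite Equation \ref{eq:tgdiff} as a product of node and edge potentials, $p(\mbf{x};\bs{\phi})\propto \prod_{j}\psi_{j}(x_{j})\prod_{j<k}\psi_{jk}(x_{j},x_{k})$ with $\psi_{j}=\exp(\bs{\phi}_{j}^{T}\mbf{S}^{1}_{j})$ and $\psi_{jk}=\exp(\bs{\phi}_{jk}^{T}\mbf{S}^{2}_{jk})$. Since the density is continuous and strictly positive on the compact torus, $X_j$ and $X_k$ are conditionally independent given the remaining coordinates if and only if the full conditional $p(x_{j}\mid\mbf{x}_{-j})$ does not depend on $x_{k}$ (the pairwise Markov characterization; its equivalence with the factorization structure is Hammersley--Clifford). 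Viewed as a function of $x_{j}$, that full conditional is proportional to $\psi_{j}(x_{j})\prod_{\ell\neq j}\psi_{j\ell}(x_{j},x_{\ell})$, and the only factor that can involve $x_{k}$ is $\psi_{jk}$; so if $\bs{\phi}_{jk}=\mbf{0}$ then $\psi_{jk}\equiv 1$ and the conditional is free of $x_{k}$, giving the ``if'' direction. For the converse, if the conditional is free of $x_{k}$ then, comparing its unnormalized logarithm at two values $x_{k},x_{k}'$, the difference $\bs{\phi}_{jk}^{T}\bigl(\mbf{S}^{2}_{jk}(x_{j},x_{k})-\mbf{S}^{2}_{jk}(x_{j},x_{k}')\bigr)$ must be constant in $x_{j}$. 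Expanding $\bs{\phi}_{jk}^{T}\mbf{S}^{2}_{jk}$ with the angle-sum identities collapses it to $\cos x_{j}\,A(x_{k})+\sin x_{j}\,B(x_{k})$, where $A$ and $B$ are first-order trigonometric polynomials in $x_{k}$ whose coefficients are invertible linear combinations of the four entries of $\bs{\phi}_{jk}$; linear independence of $\{1,\cos x_{j},\sin x_{j}\}$ forces $A,B$ to be constant, and linear independence of $\{1,\cos x_{k},\sin x_{k}\}$ then forces all four combinations, hence $\bs{\phi}_{jk}$, to vanish.

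The only nonroutine step is this converse: one must rule out a nonzero edge parameter that nonetheless yields a full conditional free of $x_{k}$, i.e. the possibility that the $x_{k}$-dependence of $\psi_{jk}$ cancels after conditioning or is absorbed into node and other-edge potentials. This is exactly where the particular trigonometric form of $\mbf{S}^{2}_{jk}$ matters; once the edge term is reduced to $\cos x_{j}\,A(x_{k})+\sin x_{j}\,B(x_{k})$ the conclusion is an elementary linear-independence argument, but verifying that the associated $4\times4$ coefficient map is nonsingular (so that $A$ and $B$ constant really does force $\bs{\phi}_{jk}=\mbf{0}$, and in particular that no nonzero $\bs{\phi}_{jk}$ produces an edge potential depending on $x_{j}$ alone) is the crux. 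A minor point to state carefully is strict positivity of the density, which justifies the pairwise Markov characterization and holds automatically here because $\mbf{S}$ is bounded on the compact torus.
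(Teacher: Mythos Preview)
Your proposal is correct. For Claim~1 you spell out the standard Gibbs/KL argument, which is exactly the content behind the paper's one-line citation to \cite{wainwright2008graphical}; there is no real difference in approach, only in level of detail.

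For Claim~2 the routes diverge. The paper simply notes that the density is strictly positive and factorizes into node and edge potentials, then invokes the Hammersley--Clifford theorem \citep{lauritzen1996graphical}; the converse (conditional independence forces $\bs{\phi}_{jk}=\mbf{0}$) is left implicit, relying on the fact that the torus graph is a \emph{full} regular exponential family (Theorem~\ref{thm:tgnatural}), so the sufficient statistics are affinely independent and the natural parameterization is identifiable. You instead bypass Hammersley--Clifford entirely for the converse and argue directly from the trigonometric structure: expanding $\bs{\phi}_{jk}^{T}\mbf{S}^{2}_{jk}(x_j,x_k)$ as $\cos x_j\,A(x_k)+\sin x_j\,B(x_k)$ and using linear independence of $\{1,\cos,\sin\}$ twice, together with the nonsingularity of the $4\times 4$ map $(\alpha,\beta,\gamma,\delta)\mapsto(\alpha+\gamma,\delta-\beta,\beta+\delta,\alpha-\gamma)$. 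This is more elementary and self-contained, and it makes explicit precisely the minimality fact that the paper's proof uses silently. The paper's version is shorter and more portable to other pairwise exponential families; yours is longer but requires no external structure theorems and exhibits concretely why a nonzero $\bs{\phi}_{jk}$ cannot be absorbed into node potentials.
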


Another interesting property of the torus graph model, given in Theorem \ref{thm:conditional} and proven in Section \ref{sec:s6}, is that the univariate conditional distributions of one variable given the rest are von Mises, enabling Gibbs sampling to be used to generate samples from the distribution.
In addition, Theorem \ref{thm:conditional} shows torus graphs are similar to other recent work in graphical modeling in which the joint distribution is specified through univariate exponential family conditional distributions \citep{chen2014selection,yang2015graphical}.

\begin{thm}[Torus graph conditional distributions]
	\label{thm:conditional}
	Let $\mbf{X}_{-k}$ be all variables except $X_k$.
	Under a torus graph model, the conditional density of $X_k$ given $\mbf{X}_{-k}$ is von Mises; specifically,
	\begin{align*} 
		p(x_k | \mbf{x}_{-k}; \bs{\phi}) = \frac{1}{2\pi I_0(A)} \exp(A \cos(x_k - \Delta))
	\end{align*}
	where $I_m$ denotes the modified Bessel function of the first kind of $m$th order and $A$ and $\Delta$ are defined as
	\begin{align*}
	    A &= \sqrt{\left(\sum_m L_m \cos(V_m)\right)^2 + \left(\sum_m L_m \sin(V_m)\right)^2}, \\
	    \Delta &= \arctan\left(\frac{\sum_m L_m \sin(V_m)}{\sum_m L_m \cos(V_m)}\right),
	\end{align*}
	where
	\begin{align*}
	    L = \left[ \kappa_k, \bs{\phi}_{\cdot k} \right], \;
	    V = [\mu_k, \mbf{x}_{-k}, \mbf{x}_{-k}+\mbf{h}\tfrac{\pi}{2}, -\mbf{x}_{-k}, -\mbf{x}_{-k}+\tfrac{\pi}{2}],
	\end{align*}
	and $\bs{\phi}_{\cdot k}$ denoting all coupling parameters involving index $k$, $\mbf{h}_j = -1$ if $j < k$ and $\mbf{h}_j = 1$ otherwise. 
\end{thm}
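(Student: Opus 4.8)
The plan is to read the conditional density straight off the joint density in Equation \ref{eq:tgdiff}. Conditioning on $\mbf{x}_{-k}$, every additive term in the exponent that does not contain $x_k$ is a constant and may be absorbed into the normalizing constant, so $p(x_k \mid \mbf{x}_{-k}; \bs{\phi})$ is proportional to $\exp$ of the sum of the $x_k$-dependent terms only: the single marginal term $\bs{\phi}_k^T [\cos x_k, \sin x_k]^T$ together with, for each $j \neq k$, the four-component pairwise term $\bs{\phi}_{jk}^T \mbf{S}^2_{jk}(\mbf{x})$ (with the pair written in increasing-index order, exactly as in Equation \ref{eq:tgdiff}).

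The key algebraic step is to rewrite each of these terms in the common form $L\cos(x_k - V)$. Writing $\bs{\phi}_k$ in polar coordinates as $\bs{\phi}_k = \kappa_k[\cos\mu_k, \sin\mu_k]^T$, the marginal term becomes $\kappa_k\cos(x_k - \mu_k)$. For the pairwise terms I would use $\cos(x_j \mp x_k) = \cos(x_k \mp (\pm x_j))$ and $\sin\theta = \cos(\theta - \tfrac{\pi}{2})$; concretely $\cos(x_j - x_k) = \cos(x_k - x_j)$, $\cos(x_j + x_k) = \cos(x_k - (-x_j))$, $\sin(x_j + x_k) = \cos(x_k - (-x_j + \tfrac{\pi}{2}))$, and $\sin(x_j - x_k) = \cos(x_k - (x_j - \tfrac{\pi}{2}))$ when $j < k$ while $\sin(x_k - x_j) = \cos(x_k - (x_j + \tfrac{\pi}{2}))$ when $j > k$ --- this last sign flip in the orientation of the phase difference being exactly what the vector $\mbf{h}$ records. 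Collecting the coefficients, reordered into the blocks (cosine-difference, sine-difference, cosine-sum, sine-sum), which is $\bs{\phi}_{\cdot k}$, together with the matching phases, shows the $x_k$-dependent exponent equals $\sum_m L_m \cos(x_k - V_m)$ with $L$ and $V$ precisely as in the statement.

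Finally I would collapse the sum of co-phased sinusoids into one. Expanding $L_m\cos(x_k - V_m) = (L_m\cos V_m)\cos x_k + (L_m\sin V_m)\sin x_k$ and summing over $m$ gives $C\cos x_k + S\sin x_k$ with $C = \sum_m L_m\cos V_m$ and $S = \sum_m L_m\sin V_m$, and then $C\cos x_k + S\sin x_k = A\cos(x_k - \Delta)$ with $A = \sqrt{C^2 + S^2} \ge 0$ and $\Delta = \arctan(S/C)$ taken in the appropriate quadrant. Hence $p(x_k \mid \mbf{x}_{-k}; \bs{\phi}) \propto \exp(A\cos(x_k - \Delta))$, and since $\int_0^{2\pi}\exp(A\cos(x_k - \Delta))\,dx_k = 2\pi I_0(A)$, the conditional density is the von Mises density with mean direction $\Delta$ and concentration $A$; Gibbs sampling is then immediate because this law is easy to sample.

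I expect no conceptual obstacle here: the proof is essentially the ``sum of sinusoids of a common frequency is a single sinusoid'' identity. The only thing demanding care is the bookkeeping needed to make the blocks of $L$ and $V$ match Equation \ref{eq:tgdiff} exactly --- tracking the orientation of each phase difference (hence the role of $\mbf{h}$) and the fixed $+\tfrac{\pi}{2}$ shifts introduced by the sine sufficient statistics --- together with the usual caveat about choosing the correct quadrant for $\Delta$, which is immaterial to the conclusion that the conditional law is von Mises.
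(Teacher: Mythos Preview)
Your proposal is correct and follows essentially the same route as the paper: factor out the $x_k$-independent terms, rewrite every remaining term as $L\cos(x_k-V)$ via $\sin\theta=\cos(\theta-\tfrac{\pi}{2})$ (with the $\mbf{h}$ sign tracking the index order), and then collapse the sum of sinusoids into a single $A\cos(x_k-\Delta)$ to recognize a von Mises density. The only cosmetic difference is that the paper packages the final step as an appeal to a stated ``Harmonic Addition Theorem,'' whereas you carry out that combination by hand via $C\cos x_k+S\sin x_k$; the content is identical.
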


\section{Important subfamilies of the torus graph model}
\label{sec:tgsubmodels}
In this section, we discuss some important subfamilies of the torus graph model that are particularly relevant to the application to neural data.
In particular, for neural phase angle data, the marginal distributions are often nearly uniform, prompting consideration of a \textit{uniform marginal model} in which the parameters $\bs{\phi}_j$ corresponding to the first moments of each variable are set to zero, resulting in a model with uniform marginal distributions.
In addition, while in our experience neural phase angle data exhibit both rotational and reflectional covariance, in many data sets, the primary form of dependence is rotational, prompting us to consider a submodel with parameters corresponding to reflectional dependence set to zero, which we will call the \textit{phase difference submodel} since all pairwise relationships are described by the sufficient statistics involving $x_j - x_k$.
A subfamily that combines these two (that is, restricts the torus graph to have marginal uniform distributions and no reflectional dependence) coincides with the models of \cite{Zemel:1993tu} and \cite{Cadieu:2010bx}, which were developed from Boltzmann machines and coupled oscillators, respectively.

The uniform marginal model, phase difference model, and a phase difference model with uniform margins all correspond to affine restrictions on the parameter space. 
This implies (see Section \ref{sec:s2}) that each is itself a regular exponential family, so that each inherits many nice properties, such as concavity of the loglikelihood function, as a function of the natural parameter.
Most previous work in multivariate circular distributions has focused on the so-called \textit{sine model} \citep[e.g.,][]{Mardia:2007kp}, which is again a subfamily, but it is not itself a full regular exponential family and does not, in general, have a concave loglikelihood function. 
As a result, estimation and inference are less straightforward than for either the torus graph model or the full regular exponential family submodels \citep{mardia2016score}. 
We summarize properties of these subfamilies in Theorem \ref{thm:tgsubmodels}, which is proven in Section \ref{sec:s2}.

\begin{thm}[Torus graph subfamilies] \label{thm:tgsubmodels}
The uniform marginal model, the phase difference model, and a model combining both parameter space restrictions, form full regular exponential families but the sine model does not.
\end{thm}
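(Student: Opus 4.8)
The plan is to separate the three positive claims from the negative one, since the three submodels reduce to a standard fact about affine restrictions while the sine model carries the actual content. Each of the uniform marginal model, the phase difference model, and the combined model is obtained from Equation~\ref{eq:tgdiff} by constraining a fixed block of coordinates of $\bs{\phi}$ to be zero: the uniform marginal model sets $\bs{\phi}_j = \mbf{0}$ for all $j$ (discarding the statistics $\mbf{S}^1$); the phase difference model sets the coefficients of $\cos(x_j+x_k)$ and $\sin(x_j+x_k)$ to zero for all $j<k$ (discarding the reflectional part of $\mbf{S}^2$); and the combined model imposes both. Writing $\mbf{T}$ for the retained subvector of sufficient statistics, each family has the form $p(\mbf{x};\bs{\theta}) \propto \exp(\bs{\theta}^T \mbf{T}(\mbf{x}))$ with $\bs{\theta}$ ranging over $\{\bs{\theta} : \int_{[0,2\pi)^d} \exp(\bs{\theta}^T \mbf{T}(\mbf{x}))\, d\mbf{x} < \infty\}$. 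Because the torus is compact and the integrand is continuous and bounded for every real $\bs{\theta}$, this set is all of $\mathbb{R}^K$, with $K = 2d(d-1)$, $K = d^2 + d$, and $K = d(d-1)$, respectively; as $\mathbb{R}^K$ is open, each family is regular and full. Equivalently, one invokes the standard fact \citep{brown1986fundamentals} that an affine restriction of the natural parameter of a regular full exponential family, equipped with the correspondingly projected sufficient statistics, is again a regular full exponential family. The retained trigonometric statistics are distinct nonzero Fourier frequencies on the torus and hence linearly independent together with the constant, so each reduced family is also minimal, which yields strict concavity of the loglikelihood in the natural parameter.

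For the sine model the plan is to show that, viewed inside the torus graph natural parameter space $\bs{\Phi} = \mathbb{R}^{2d^2}$, it is a genuinely curved set rather than an affine one, so it cannot be a full regular exponential family. Expanding $\kappa_j \cos(x_j - \mu_j)$ and $\lambda_{jk}\sin(x_j-\mu_j)\sin(x_k-\mu_k)$ by the angle-difference and product-to-sum identities rewrites the sine density in the sufficient statistics $\mbf{S}$ of Equation~\ref{eq:tgdiff}, with $\bs{\phi}_j = (\kappa_j\cos\mu_j, \kappa_j\sin\mu_j)$ and coupling coordinates $\phi_{jk}^{(1)} = \tfrac{\lambda_{jk}}{2}\cos(\mu_j-\mu_k)$, $\phi_{jk}^{(2)} = \tfrac{\lambda_{jk}}{2}\sin(\mu_j-\mu_k)$, $\phi_{jk}^{(3)} = -\tfrac{\lambda_{jk}}{2}\cos(\mu_j+\mu_k)$, $\phi_{jk}^{(4)} = -\tfrac{\lambda_{jk}}{2}\sin(\mu_j+\mu_k)$. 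These obey the quadratic identity $(\phi_{jk}^{(1)})^2 + (\phi_{jk}^{(2)})^2 = (\phi_{jk}^{(3)})^2 + (\phi_{jk}^{(4)})^2 = \lambda_{jk}^2/4$; in the language of Section~\ref{sec:tgphasecoupling}, the sine model forces the rotational and reflectional amplitudes of every pair to coincide. Restricting, say, to $\kappa \equiv \mbf{0}$ and a single nonzero $\lambda_{jk}$ shows that the set of realized parameter vectors contains $\mbf{0}$ yet is not closed under vector addition, hence is not an affine subspace of $\bs{\Phi}$.

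To finish I would record the converse of the fact used above: a subfamily of the torus graph that is itself a full regular exponential family must correspond to an affine subset of $\bs{\Phi}$, since its sufficient statistic (being a log-density difference of torus graph members) lies in the span of $\mbf{S}$ and the constants, so a linear reparametrization carries its open natural parameter space onto a relatively open piece of an affine subspace of $\bs{\Phi}$. Because the sine model's image is not affine, it is not a full regular exponential family; in particular its loglikelihood need not be concave, which is the source of the estimation difficulties mentioned in the text. The main obstacle is this last implication together with the bookkeeping of the sine-model expansion: identifying the induced $\bs{\phi}$-coordinates correctly and then exhibiting a concrete, checkable obstruction (the amplitude constraint, equivalently the failure of closure under addition) showing the image is not affine. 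The three positive statements, by contrast, are immediate once compactness of the torus is used to identify each natural parameter space with a Euclidean space.
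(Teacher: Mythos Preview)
Your proposal is correct and follows essentially the same approach as the paper: both arguments rest on the characterization that a subfamily of a regular full exponential family is itself regular and full if and only if it corresponds to an affine subspace of the natural parameter space (the paper cites Theorem~4.2.2 of \cite{kass2011geometrical} for this, while you cite \cite{brown1986fundamentals} for one direction and sketch the converse), observe that the three named submodels are obtained by zeroing blocks of $\bs{\phi}$ and hence are affine, and derive the identical quadratic constraint $\bs{\phi}_{jk,1}^2+\bs{\phi}_{jk,2}^2=\bs{\phi}_{jk,3}^2+\bs{\phi}_{jk,4}^2$ to show the sine model is not. Your additions---the explicit dimensions $K$, the minimality remark via Fourier independence, and the concrete non-closure-under-addition check---are correct embellishments but do not change the route.
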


We note that the \textit{sine model} may provide parsimonious fits to data for which the marginal distributions appear unimodal. 
Even though the torus graph is a full regular exponential family, and is therefore identifiable, when the data are highly concentrated it may be hard to estimate all four coupling parameters, a phenomena we explore with simulations in Figure \ref{fig:bivkappROCMSE}. 
Neural phase angle data, however, often tend to have low concentrations while still exhibiting strong pairwise dependence (Figure \ref{fig:eda}). 
As shown in \cite{Mardia:2007kp}, when the concentration is low relative to the pairwise interaction strength, the \textit{sine model} fitted density enters a regime of multimodality. 
In Section \ref{sec:lfpdata} we demonstrate lack of fit of the \textit{sine model} to our neural data.

\section{Phase coupling in torus graphs}
\label{sec:tgphasecoupling}
In this section, we discusss the distinction between bivariate measures of phase coupling, such as PLV, and multivariate measures.
In Section \ref{subsec:plv}, we briefly review bivariate phase coupling measures based on the marginal distributions of pairwise phase differences.
In Sections \ref{bivtgplv} and  \ref{trivtgplv} we investigate bivariate and trivariate examples analytically.
We show that when a trivariate distribution of angles follows a torus graph, the marginal distributions of pairwise phase differences may be influenced by the marginal distributions of each variable and by indirect coupling through other nodes. This fundamental limitation of bivariate phase coupling measures can produce inaccurate phase coupling descriptions in multivariate systems.
For the special case of phase difference models with uniform margins, in Section \ref{subsec:partialplv} we propose a transformation of the torus graph parameters that produces a generalization of PLV to multivariate data (where coupling between two variables is measured conditionally on all other variables), having the nice feature that, like PLV, it falls between 0 and 1.

\subsection{Bivariate phase coupling measures} \label{subsec:plv}
The most common bivariate phase coupling measure between angles $X_j$ and $X_k$ is the Phase Locking Value (PLV) \citep{Lachaux99}, defined by
\begin{align}
	\hat{P}_{jk} &= \left|\frac{1}{N}  \sum_{n=1}^N\exp\left\{i\left(\trial{x_j}-\trial{x_k}\right)\right\} \right| 
	\label{eq:plv}
\end{align}
where $\trial{x_j}$ is the $n$th observation of $X_j$. 

We have used the notation $\hat{P}_{jk}$ to indicate it may be viewed as an estimator of its theoretical counterpart $P_{jk}$. 
In Section \ref{sec:s7}, we show that $P_{jk}$  corresponds to a measure of positive circular correlation under the assumption of uniform marginal distributions.
The value of 
$P_{jk}$ falls between 0 and 1, with 0 indicating no consistency in phase differences across trials and 1 indicating identical phase differences across trials.

One way to assess significance of $\hat{P}_{jk}$ is Rayleigh's test for uniformity of the phase differences \citep[p. 268]{kass2014analysis}; other assessments of significance typically involve permutation tests or comparison to non-task recording periods \citep{rana2013fast}.

A similar approach to characterizing bivariate phase coupling follows from considering the univariate random variable $Y_{jk} = X_j - X_k$.
If $Y_{jk}$ is distributed as von Mises with concentration parameter $\kappa$, 
then
\begin{align}
\hat{P}_{jk} = \frac{I_1(\hat{\kappa})}{I_0(\hat{\kappa})} \label{eq:plvbessel}
\end{align} 
where $\hat{\kappa}$ is the maximum likelihood estimator for $\kappa$ \citep[p. 191]{CatherineForbes:2011ub}.
More generally, any measure of the concentration of the marginal distribution of phase differences around a mean direction may be used as a measure of bivariate phase coupling \citep{Aydore:2013bb}. 
We will refer to measures based on the marginal distribution of phase differences as \textit{bivariate phase coupling measures}.

\subsection{Marginal distribution of phase differences in a bivariate torus graph} \label{bivtgplv}
Because bivariate phase coupling measures are based on the marginal distributions of phase differences, we investigate here the form of the marginal phase difference distributions in a bivariate torus graph model to determine how the torus graph parameters influence the phase differences.
For the most straightforward and analytically tractable exposition, we consider the bivariate phase difference model.
We will use the notation 
\begin{align*}
    \bs{\phi}_{jk} = [\alpha_{jk}, \beta_{jk}, \gamma_{jk}, \delta_{jk}]^T
\end{align*} 
to refer to elements of the pairwise coupling parameter vector, and use trigonometric identities to write the marginal terms as a function of $\kappa$ and $\mu$ (see Section \ref{sec:s2} for details).
Then the bivariate phase difference model density is
\begin{align*}
	p(x_1,x_2) &\propto \exp \left\{  \alpha_{12}\cos(x_1-x_2)+ \beta_{12}\sin(x_1-x_2) + \sum_{j=1}^2\kappa_j \cos(x_j-\mu_j) \right\}. 
\end{align*}
Let $W = X_1 - X_2 \, (\text{mod} \, 2\pi)$ be the phase differences wrapped around the circle so that $W \in [0,2\pi)$. 
As shown in Section \ref{sec:s4}, the unnormalized theoretical distribution of $W$ is a product of two functions:
\begin{align}
	p_W(w) &\propto f(w; \bs{\kappa},\bs{\mu}) \cdot  g(w;\alpha_{12},\beta_{12}), \label{eq:phasediffex1}
\end{align}
where 
\begin{align*}
	f(w; \bs{\kappa},\bs{\mu}) = I_0\left( \sqrt{\kappa_1^2+\kappa_2^2+2\kappa_1\kappa_2\cos(w-(\mu_1-\mu_2))} \right)
\end{align*} 
and 
\begin{align}
	g(w;\bs{\phi_{12}})=\exp \left\{ \sqrt{\left(\alpha_{12}^2+ \beta_{12}^2\right)} \cos\left(w-\arctan\left(\tfrac{\beta_{12}}{\alpha_{12}} \right) \right) \right\}. \label{eq:gfunc}
\end{align} 
The first factor, $f$, is proportional to the density of the difference of two independent von Mises random variables with concentrations $\kappa_1, \kappa_2$ and means $\mu_1, \mu_2$  \citep[p. 44]{MardiaJupp1999} and reflects the influence of the marginal distributions of $X_1$ and $X_2$ on the phase differences. 
Such convolved densities are unimodal on $[0,2\pi)$ with mode $\mu_1-\mu_2 \, (\text{mod} \, 2\pi)$ and concentration increasing with the argument of $I_0(\cdot)$.
The second factor, $g$, is proportional to a von Mises density that depends only on the phase difference and the coupling parameters.
 
The functional forms of $f$ and $g$ show that the distribution of phase differences is influenced both by the coupling parameters and by the marginal concentration parameters, which implies that bivariate phase coupling measures reflect both coupling and marginal concentration.
In Figure \ref{FigBivTGPLV}, we illustrate effects on $\hat{P}_{jk}$ of pairwise dependence and marginal concentration.
Even when the variables are independent, if the marginal distributions are not uniform, the distribution of phase differences will have nonzero concentration due to the influence of $f$. Thus, PLV is only appropriate when the marginal distributions are uniform.
In contrast, torus graph parameters can separate the influence of marginal concentration and phase coupling to provide a measure of the dependence between angles.

 \begin{figure}
 \centering
	\includegraphics[width=0.7\linewidth]{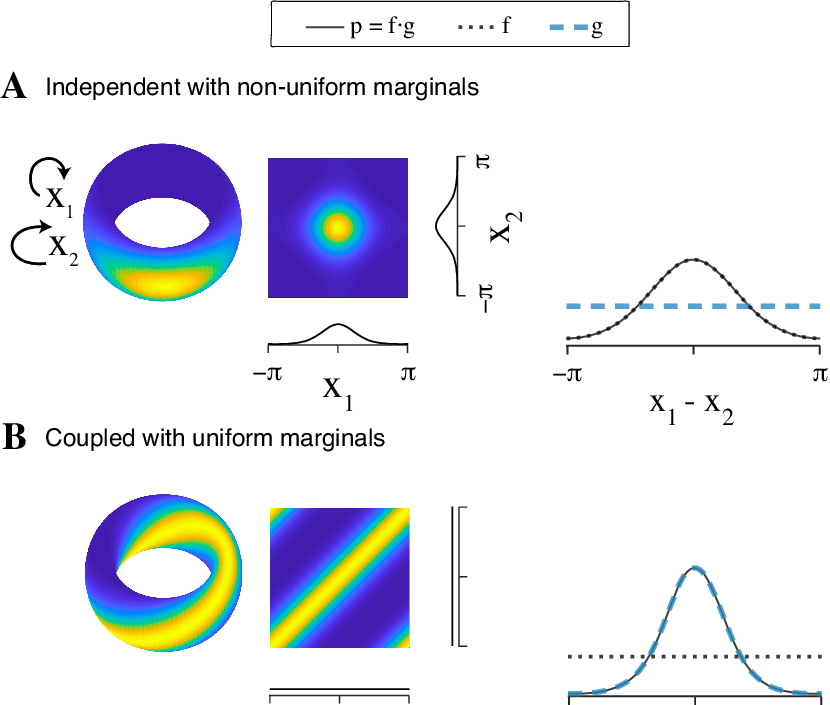}
	\caption{Examples of bivariate torus graph densities and the resulting marginal distributions of phase differences upon which bivariate phase coupling measures would be based. 
	As shown in Equation \ref{eq:phasediffex1}, 
	the density of phase differences, $p$, is affected not only by coupling (through $g$) but also by marginal concentration (through $f$). 
	As a result, bivariate phase coupling measures like PLV could give misleading results. 
	(A) Left: bivariate torus graph density with independent angles and non-uniform marginal distributions; density is shown on the torus and flattened on $[-\pi,\pi]$ with marginal densities on each axis. Right: analytical phase difference density ($p$, solid) which is a product of a direct coupling factor ($g$, dashed) and a marginal concentration factor ($f$, dotted). Here, $p$ is concentrated solely through the marginal concentration factor $f$, implying bivariate phase coupling measures would indicate coupling despite the independence of $X_1$ and $X_2$. (B) Similar to A, but with coupling between angles and uniform marginal distributions; only in this case does $p$ correctly reflect the coupling.}
	\label{FigBivTGPLV}
\end{figure}

\subsection{Marginal distribution of phase differences in a trivariate torus graph model} \label{trivtgplv}
While we have shown that torus graph models are preferable to bivariate phase coupling measures in the bivariate case, the biggest advantage of using torus graphs comes from the ability to work with multivariate data and determine unique associations between each pair of variables after conditioning on the other variables. 
For instance, in a trivariate torus graph model with direct coupling only from nodes 1 to 3 and nodes 2 to 3 (Figure \ref{FigTrivTGPLV}.A), if we were to apply bivariate phase coupling measures to all pairwise connections, we would likely infer a connection between 1 and 2 because we would be measuring the bivariate association between phase angles without taking into account node 3.

To demonstrate analytically how this happens, we consider a trivariate phase difference model with marginal concentrations equal to zero for simplicity, which has density
\begin{align}
	p(x_1,x_2,x_3)&\propto \exp \left\{ \sum_{(j,k)\in E} \begin{bmatrix} \alpha_{jk} \\ \beta_{jk} \end{bmatrix}^T 
				\begin{bmatrix} \cos (x_j - x_k) \\  \sin (x_j - x_k) \end{bmatrix} \right\}, \label{eq:unifphasediff}
\end{align}
where the edge set $E = \{(1,2),\ (1,3),\ (2,3)\}$. 
Letting $W = X_1 - X_2 \, (\text{mod} \, 2\pi)$ be the phase difference between nodes 1 and 2, we show in Section \ref{sec:s4} that the unnormalized density of $W$ is given by the product of two factors:
\begin{align} 
	p_W(w) \propto g(w;\bs{\phi_{12}}) \cdot h(w; \bs{\phi_{13}},\bs{\phi_{23}}). \label{eq:trivphasediff}
\end{align}
The first factor is the same as $g$ in Equation \ref{eq:gfunc} and reflects direct connectivity between $X_1$ and $X_2$ as it depends only on the coupling parameters for the pair, $\bs{\phi}_{12}$. 
The second factor reflects indirect connectivity through the other nodes, as it depends on the coupling parameters for the other pairs:
\begin{align*}
	h(w; \bs{\phi_{13}},\bs{\phi_{23}}) \propto I_0\left(\sqrt{s +2t\cos\left(w- u \right)}\right)
\end{align*}
where 
\begin{align*}
	s &= \alpha_{13}^2+\beta_{13}^2+\alpha_{23}^2+\beta_{23}^2, \\
	t &= \sqrt{(\alpha_{13}^2+\beta_{13}^2)(\alpha_{23}^2+\beta_{23}^2)}, \\
	u &= \arctan\left(\tfrac{\beta_{13}}{\alpha_{13}}\right)-\arctan\left(\tfrac{\beta_{23}}{\alpha_{23}}\right).
\end{align*}
Therefore, $h$ is proportional to the density of the difference of two independent von Mises random variables with concentrations $\sqrt{\alpha_{13}^2+\beta_{13}^2}$ and $\sqrt{\alpha_{23}^2+\beta_{23}^2}$ and mean directions $\arctan\left(\beta_{13}/\alpha_{13}\right)$ and $-\arctan\left(\beta_{23}/\alpha_{23}\right)$, respectively.

Equation \ref{eq:trivphasediff} implies that the density of the phase differences for one pair of variables depends on all of the coupling parameters, so a bivariate phase coupling measure such as PLV will be unable to distinguish between the effects of direct coupling and indirect coupling through other nodes.
Consequently, bivariate phase coupling measures will accurately represent the direct coupling between 1 and 2 only when there is no indirect path between 1 and 2 through the other nodes.
In the most extreme case, bivariate phase coupling measures could indicate coupling even when there are \textit{only} indirect connections between two nodes through the rest of the network.
In Figure \ref{FigTrivTGPLV}, we show examples to demonstrate how the phase difference distribution is affected by both direct and indirect connections, which may result not only in contributions to the observed phase difference concentration but also in shifts in the mean  phase difference.
This demonstrates that bivariate phase coupling measures generally reflect both direct and indirect coupling; in contrast, torus graph parameters identify direct coupling. 

\begin{figure}
\centering
   \includegraphics[width=0.7\linewidth]{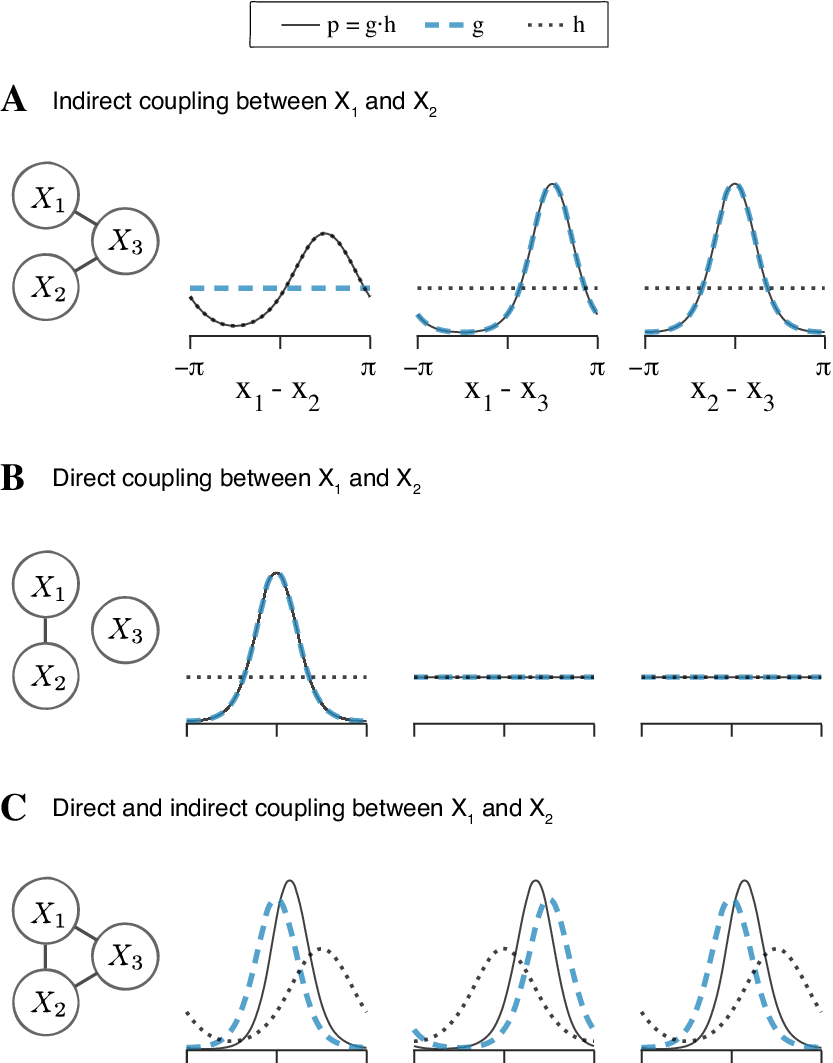}
    \caption{Examples of trivariate torus graph densities and the resulting densities of phase differences for each pair of variables. 
    As shown in Equation \ref{eq:trivphasediff}, in general, the density of phase differences, $p$, is affected not only by direct coupling (through $g$) but also by indirect connections (through $h$).  
    As a result, bivariate phase coupling measures like PLV will generally reflect both direct and indirect coupling. 
    (A) Left: ground truth graphical model, with no direct connection between $X_1$ and $X_2$ but an indirect connection through $X_3$. 
    Right: analytical phase difference densities for each pair of angles ($p$, solid) which are each a product of a direct coupling factor ($g$, dashed) and an indirect coupling factor ($h$, dotted). 
    The concentration in the phase difference $X_1 - X_2$ arises solely due to indirect connections. (B) Similar to A, but with direct connection only between $X_1$ and $X_2$; in this case, $p$ reflects the direct coupling. (C) Similar to A, but with direct connections between all nodes. Notice that indirect connections ($h$) still influence the distribution of phase differences $X_1 - X_2$ by multiplying with the direct connection term ($g$), which increases the concentration of $p$ and shifts the mean (compared to $g$).}
    \label{FigTrivTGPLV}
\end{figure}

\subsection{Interpreting phase difference model parameters} \label{subsec:partialplv}
An appealing feature of PLV is that it always falls between 0 and 1, so it is easy to interpret its magnitude and to compare PLV values for different pairs of variables.
Unfortunately, the torus graph parameters lack these qualities.
However, for the special case of the phase difference model with uniform margins, we propose a generalization of PLV based on a transformation of the torus graph model parameters that offers increased interpretability, and that, unlike PLV, measures pairwise relationships conditional on the other nodes.

As shown in Equation \ref{eq:plvbessel}, if the marginal phase difference is distributed according to a von Mises distribution, then PLV corresponds to a function of the maximum likelihood estimator of the marginal concentration parameter.
Under the phase difference model with uniform margins, we showed the marginal density of the phase difference $X_1 - X_2$ factors into terms corresponding to direct and indirect connections; the direct connectivity term $g(w; \bs{\phi_{12}})$ of Equation \ref{eq:gfunc} has the form of a von Mises density depending only on the parameters $\bs{\phi_{12}}$.
Therefore, in analogy to the definition of PLV for von Mises-distributed phase differences, we propose the following transformation of the parameters:
\begin{align*}
    \tilde{P}_{jk} = \frac{I_1\left(\sqrt{\alpha_{jk}^2 + \beta_{jk}^2}\right)}{I_0\left(\sqrt{\alpha_{jk}^2 + \beta_{jk}^2}\right)}.
\end{align*}
Like PLV, the measure always falls between 0 and 1 and therefore may be used to compare relative edge strengths in the phase difference model with uniform margins.
\section{Torus graph estimation and inference}
\label{sec:tgest}
Because the normalization constant is intractable for the torus graph density and it cannot easily be approximated even for moderate dimension, estimation and inference are not straightforward.
In particular, maximum likelihood estimation is not feasible.
Instead, we turn to an an alternative procedure for estimation and inference called \textit{score matching}.
In Section \ref{subsec:estimation}, we establish the applicability of the score matching estimator, originally defined for densities on $\mathbb{R}^d$, to multivariate circular densities like torus graphs, then give the explicit form of the objective function and derive closed-form estimators that maximize the objective function. 
Section \ref{subsec:inference} discusses two main approaches for determining a graph structure, one based on the asymptotic distributions of score matching estimators and the second based on regularization, which is particularly relevant for high-dimensional problems.

\subsection{Estimation} \label{subsec:estimation}
Score matching is an asymptotically consistent estimation method that does not require computation of the normalization constant and is based on minimizing the expected squared difference between the model and data \textit{score functions} (gradients of the log-density functions), which leads to a tractable objective function for estimating the parameters \citep{Hyvarinen:2005wb}.
It can be seen as analogous to maximum likelihood estimation, which uses the negative log likelihood as a scoring rule; score matching instead uses the gradient and Laplacian of the log density (with respect to the data) as a scoring rule \citep{dawid2014theory}.
In addition, for real-valued exponential family distributions, the estimator comes from an unbiased linear estimating equation, so asymptotic inference is straightforward \citep{forbes2015linear,yu2018graphical}. 
However, the original score matching estimator requires the density to be supported on $\mathbb{R}^d$ and the proof of consistency relies on tail properties of such densities.
We show that score matching estimators applied to circular densities such as the torus graph model retain the same form and therefore remain consistent. 
Score matching estimators have been considered previously for the phase difference model with uniform margins \citep{Cadieu:2010bx} and the \textit{sine model} \citep{mardia2016score}, where the procedure requires modification because the \textit{sine model} is a curved exponential family distribution (Theorem \ref{thm:tgsubmodels}).

The score matching objective function is the expected squared difference between the log gradients:
\begin{align}
	\begin{split}
 	J(\bs{\phi}) &= \frac{1}{2} \int p_\mbf{X}(\mbf{x}) || \nabla_\mbf{x} \log q(\mbf{x};\bs{\phi}) - \nabla_\mbf{x} \log p_\mbf{X}(\mbf{x}) ||_2^2 \; d\mbf{x}. \label{eq:smobjdef}
	\end{split}
\end{align}
The objective function depends on the unknown data density $p_\mbf{X}(\mbf{x})$ in a nontrivial way, but we show in Theorem \ref{thm:scorematch}, using techniques similar to \cite{Hyvarinen:2005wb,Hyvarinen:2007je}, that the objective function may be simplified to depend on the data density only through an expectation, allowing it to be estimated as an average over the sample (proof in Section \ref{sec:s5}).

\begin{thm}[Score matching estimators for torus graphs]
\label{thm:scorematch}
Under some mild regularity assumptions (given in Section \ref{sec:s5}), the score matching objective function for the torus graph model takes the form
\begin{align*}
	J(\bs{\phi}) &= E_{\mbf{x}} \left\{  \frac{1}{2} \bs{\phi}^T \bs{\Gamma}(\mbf{x}) \bs{\phi} - \bs{\phi}^T \mbf{H}(\mbf{x}) \right\} 
\end{align*}
where 
\begin{align*}
    \mbf{H}(\mbf{x}) = [\mbf{S}^1(\mbf{x}), \,
 2\mbf{S}^2(\mbf{x})]^T
\end{align*}
is a vector with dimension $2d^2$ that is a simple function of the sufficient statistics and
\begin{align*}
    \bs{\Gamma}(\mbf{x}) = \mbf{D}(\mbf{x}) \mbf{D}(\mbf{x})^T
\end{align*}
where
\begin{align*}
    \mbf{D}(\mbf{x}) = \nabla_\mbf{x} \mbf{S}(\mbf{x})
\end{align*} 
is the $2d^2 \times d$ Jacobian of the sufficient statistic vector.
Specific expressions for the Jacobian elements are given in Section \ref{sec:s5}.
\end{thm}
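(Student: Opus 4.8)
The plan is to adapt Hyvärinen's derivation of the score matching objective \citep{Hyvarinen:2005wb,Hyvarinen:2007je}, exploiting the fact that the torus $[0,2\pi)^d$ is compact and boundaryless so that the integration-by-parts step generates no boundary terms (this is where the circular geometry helps rather than hurts). First I would write the torus graph log-density as $\log q(\mbf{x};\bs{\phi}) = \bs{\phi}^T\mbf{S}(\mbf{x}) - \log Z(\bs{\phi})$, so that its gradient with respect to $\mbf{x}$ is linear in $\bs{\phi}$: $\nabla_\mbf{x}\log q(\mbf{x};\bs{\phi}) = \mbf{D}(\mbf{x})^T\bs{\phi}$, where $\mbf{D}(\mbf{x}) = \nabla_\mbf{x}\mbf{S}(\mbf{x})$ is the $2d^2 \times d$ Jacobian. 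Expanding the squared norm in Equation \ref{eq:smobjdef} as $\tfrac12\|a\|^2 - a^T b + \tfrac12\|b\|^2$ with $a = \nabla_\mbf{x}\log q$ and $b = \nabla_\mbf{x}\log p_\mbf{X}$ splits $J(\bs{\phi})$ into a quadratic term $\tfrac12\int p_\mbf{X}\|\nabla_\mbf{x}\log q\|^2\,d\mbf{x}$, a cross term $-\int p_\mbf{X}\,(\nabla_\mbf{x}\log q)^T(\nabla_\mbf{x}\log p_\mbf{X})\,d\mbf{x}$, and a term $\tfrac12\int p_\mbf{X}\|\nabla_\mbf{x}\log p_\mbf{X}\|^2\,d\mbf{x}$ that does not involve $\bs{\phi}$ and is discarded (so the claimed identity holds modulo this additive constant).

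Next I would handle the cross term by writing $p_\mbf{X}\nabla_\mbf{x}\log p_\mbf{X} = \nabla_\mbf{x}p_\mbf{X}$ and integrating by parts in each coordinate $x_b$ over $[0,2\pi)$. Since $p_\mbf{X}$ and $\log q(\cdot;\bs{\phi})$ are both $2\pi$-periodic in every $x_b$, the evaluations at $0$ and $2\pi$ coincide and the boundary contributions vanish, so $\int p_\mbf{X}\,(\partial_{x_b}\log q)(\partial_{x_b}p_\mbf{X})\,d\mbf{x} = -\int p_\mbf{X}\,\partial^2_{x_b}\log q\,d\mbf{x}$; summing over $b$ turns the cross term into $\int p_\mbf{X}(\mbf{x})\sum_b \partial^2_{x_b}\log q(\mbf{x};\bs{\phi})\,d\mbf{x}$. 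At this point $J(\bs{\phi}) = \int p_\mbf{X}(\mbf{x})\big[\tfrac12\|\nabla_\mbf{x}\log q\|^2 + \sum_b \partial^2_{x_b}\log q\big]\,d\mbf{x}$ up to the additive constant, i.e.\ an expectation under the data distribution.

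Finally I would substitute the torus graph form. The quadratic term is immediate: $\|\nabla_\mbf{x}\log q\|^2 = \bs{\phi}^T\mbf{D}(\mbf{x})\mbf{D}(\mbf{x})^T\bs{\phi} = \bs{\phi}^T\bs{\Gamma}(\mbf{x})\bs{\phi}$ with $\bs{\Gamma}(\mbf{x}) = \mbf{D}(\mbf{x})\mbf{D}(\mbf{x})^T$. For the Laplacian term I would differentiate the sufficient statistics componentwise: each single-angle statistic ($\cos x_j$ or $\sin x_j$) satisfies $\partial^2_{x_j}(\cdot) = -(\cdot)$ with vanishing second derivatives in the other coordinates, so $\sum_b \partial^2_{x_b}\mbf{S}^1 = -\mbf{S}^1$; each pairwise statistic ($\cos(x_j \pm x_k)$ or $\sin(x_j \pm x_k)$) contributes $-(\cdot)$ from each of $\partial^2_{x_j}$ and $\partial^2_{x_k}$, so $\sum_b \partial^2_{x_b}\mbf{S}^2 = -2\mbf{S}^2$. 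Hence $\sum_b \partial^2_{x_b}\log q = \bs{\phi}^T\sum_b\partial^2_{x_b}\mbf{S} = -\bs{\phi}^T\mbf{H}(\mbf{x})$ with $\mbf{H}(\mbf{x}) = [\mbf{S}^1(\mbf{x}),\,2\mbf{S}^2(\mbf{x})]^T$, and collecting terms gives $J(\bs{\phi}) = E_\mbf{x}\{\tfrac12\bs{\phi}^T\bs{\Gamma}(\mbf{x})\bs{\phi} - \bs{\phi}^T\mbf{H}(\mbf{x})\}$. The explicit Jacobian entries, and hence those of $\bs{\Gamma}$, then come from differentiating the listed trigonometric statistics.

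The algebra above is routine; the real work is the regularity bookkeeping. I would need the assumptions of Section \ref{sec:s5} to ensure that $p_\mbf{X}$ is continuously differentiable with integrable gradient, that $J(\bs{\phi})$ and all intermediate integrals are finite (so Fubini and differentiation under the integral apply), and that the coordinatewise integration by parts is valid. Because $[0,2\pi)^d$ is compact and the trigonometric sufficient statistics and their derivatives are bounded and smooth, these conditions are mild and, crucially, no analogue of Hyvärinen's tail-decay hypotheses on $\mathbb{R}^d$ is required — this is precisely why the closed-form estimator and, in turn, its consistency carry over unchanged to the circular setting. I expect the main obstacle to be stating these conditions cleanly enough that the boundary-term cancellation and the interchange of limits are both fully justified.
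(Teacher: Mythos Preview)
Your proposal is correct and follows essentially the same route as the paper: expand the squared norm, use periodicity on $[0,2\pi)^d$ to kill the boundary terms in the coordinatewise integration by parts, and then substitute the exponential-family form to get the quadratic $\bs{\Gamma}$-term and the linear $\mbf{H}$-term. Your computation of the Laplacian of the sufficient statistics (giving $-\mbf{S}^1$ and $-2\mbf{S}^2$) is in fact more explicit than the paper's own justification of $\mbf{H}(\mbf{x}) = [\mbf{S}^1(\mbf{x}),\,2\mbf{S}^2(\mbf{x})]^T$; just fix the stray $p_\mbf{X}$ factor on the left side of your displayed integration-by-parts identity.
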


Theorem \ref{thm:scorematch} shows that the score matching objective may be estimated empirically by averaging over $N$ observed samples.
The empirical objective function is
\begin{align}
 	 \tilde J(\bs{\phi}) =\frac{1}{2} \bs{\phi}^T \hat{\bs{\Gamma}} \bs{\phi} - \bs{\phi}^T \hat{\mbf{H}}  \label{eq:smempirical}
\end{align}
where 
\begin{align*}
    \hat{\bs{\Gamma}}  = \frac{1}{N} \sum_{n=1}^N \bs{\Gamma}\left(\trial{\mbf{x}}\right), \; \hat{\mbf{H}} = \frac{1}{N} \sum_{n=1}^N  \mbf{H}\left(\trial{\mbf{x}}\right)
\end{align*}
with $\trial{\mbf{x}}$ denoting sample $n$.
Taking the derivative of Equation \ref{eq:smempirical} with respect to the parameter vector yields an unbiased estimating equation \citep{dawid2014theory}, which has a unique solution when $\hat{\bs{\Gamma}}$ is invertible:
\begin{align*}
	\hat{\bs{\Gamma}} \bs{\phi}  - \hat{\mbf{H}} = 0 \; \longrightarrow \; \hat{\bs{\phi}} = \hat{\bs{\Gamma}}^{-1} \hat{\mbf{H}}.
\end{align*}
The number of parameters for a $d$-dimensional torus graph is $2 d^2$ so sample sizes may not be sufficient for $ \hat{\bs{\Gamma}}$ to be invertible.
In particular, $ \hat{\bs{\Gamma}}$ is a sum with $Nd$ terms, so $N$ must be greater than $2d$ for $ \hat{\bs{\Gamma}}$ to be invertible.
In practice, the variance of estimated parameters will be high if $N$ is not much larger than $2d$, leading to less accurate point estimates and inference.
We investigate the effect of sample size on the resulting inferences, using simulated data, in Section \ref{sec:simstudy}.

For higher-dimensional problems, Equation \ref{eq:smempirical} is a convex objective function that may be minimized numerically with regularization.
In torus graphs, a group $\ell_1$ penalty may be placed on the groups of pairwise coupling parameters $\bs{\phi}_{jk}$ to enforce sparsity in the estimated edges, yielding the following objective function:
\begin{align*}
	\tilde{J}_\lambda(\bs{\phi}) =  \tilde{J}(\bs{\phi}) + \lambda \sum_{j < k} || \bs{\phi}_{jk} ||_2.
\end{align*}
Here, $\lambda$ is a tuning parameter that may be selected by criteria such as cross-validation or extended BIC \citep{Lin:2016wq}. 
Other structured penalties may be used to encourage the model toward specific submodels (such as the phase difference model or the uniform marginal model).
For instance, separate group $\ell_1$ penalties could be applied to the pairs of coupling parameters corresponding to positive and negative dependence, or an $\ell_2$ penalty on the marginal parameters $\bs{\phi}_j$ could encourage low concentration.
This type of penalization may improve behavior of the objective function, and could be especially useful when certain subfamilies appear reasonable based on exploratory data analysis.
The computational burden of calculating $\hat{\bs{\Gamma}}^{-1}$ may be reduced using the conditional independence structure of the graph, as we may estimate each four-dimensional group of parameters $\bs{\phi}_{jk}$ using score matching on the conditional distribution $p\left(x_j, x_k | \mbf{x}_{-jk} \right)$ ,
which involves only $8(d-1)$-dimensional sufficient statistics and thus uses only a subset of the rows of $\mbf{D}(\mbf{x})$, lessening the computational burden of matrix inversion \citep{yu2016statistical}.

\subsection{Inference} \label{subsec:inference}
In our setting, the goal of inference is to determine a graph structure by determining which pairs of variables $\{j,k\}$ have nonzero $\bs{\phi}_{jk}$, indicating an edge between nodes $j$ and $k$.
As shown in previous work, score matching estimators are asymptotically normal \citep{dawid2014theory,forbes2015linear,yu2018graphical}, that is, 
	\begin{align}
		\sqrt{N} \left( \bs{\hat{\phi}} - \bs{\phi} \right) \xrightarrow{d} \mathcal{N}\left(\mbf{0}, \mbf{\Sigma} \right) \label{eq:asympdist}
	\end{align}
where the asymptotic variance is given by
\begin{align*}
    \mbf{\Sigma} = \bs{\Gamma}_0^{-1}
    \mbf{V}_0
    \bs{\Gamma}_0^{-1}
\end{align*}
with
\begin{align*}
    \bs{\Gamma}_0 = E[\bs{\Gamma}(\mbf{x})], \; \mbf{V}_0 = E[(\bs{\Gamma}(\mbf{x}) \bs{\phi} - \mbf{H}(\mbf{x}))(\bs{\Gamma}(\mbf{x}) \bs{\phi} - \mbf{H}(\mbf{x}))^T].
\end{align*} 
Sample averages may be substituted for the expectations to obtain an estimate of the asymptotic variance, and because the true value of $\bs{\phi}$ is unknown, we may substitute either our estimate $\hat{\bs{\phi}}$ or a null hypothetical value. 

By considering the marginal Gaussian distribution of each element of $\bs{\phi}$, confidence intervals may be constructed in a standard way.
However, in the torus graph model, there are four parameters per edge, so individual parameters are not of primary interest.
In addition, we may be interested in testing hypotheses about groups of edges (for example, the null hypothesis might be that there are no edges between regions A and B). 
Fortunately, inference on groups of edges is also straightforward, as specified in the following lemma.
\begin{lemma}[Asymptotic distribution for groups of torus graph parameters.] \label{lemma:chisq}
        A vector of parameters indexed by an index set $E$ of size $|E|$, denoted $\bs{\phi_E}$, satisfies 
	\begin{align*}
		\sqrt{N} \left( \bs{\hat{\phi}_E} - \bs{\phi_E} \right) \xrightarrow{d} \mathcal{N} \left(\mbf{0}, \mbf{\Sigma_E} \right)
	\end{align*}
	where $\mbf{\Sigma_E}$ is the corresponding submatrix of the asymptotic variance $\mbf{\Sigma}$ of Equation \ref{eq:asympdist}. 
	We also have
	\begin{align*} 
		N \left(  \bs{\hat{\phi}_E} - \bs{\phi_E} \right)^T \mbf{\Sigma_E}^{-1} \left(  \bs{\hat{\phi}_E} - \bs{\phi_E} \right) \xrightarrow{d} \chi^2(|E|).
	\end{align*} 
\end{lemma}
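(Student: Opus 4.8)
The plan is to obtain both assertions as immediate consequences of the asymptotic normality of the full score matching estimator stated in Equation \ref{eq:asympdist}, which I take as given from the cited results \citep{dawid2014theory,forbes2015linear,yu2018graphical}. The essential point is that passing from the full parameter vector to the subvector indexed by $E$ is a \emph{linear} operation: let $\mbf{A}$ be the $|E| \times 2d^2$ matrix whose rows are the rows of the $2d^2 \times 2d^2$ identity matrix corresponding to the indices in $E$, so that $\bs{\phi_E} = \mbf{A}\bs{\phi}$ and $\bs{\hat{\phi}_E} = \mbf{A}\bs{\hat{\phi}}$, and hence $\sqrt{N}(\bs{\hat{\phi}_E} - \bs{\phi_E}) = \mbf{A}\,\sqrt{N}(\bs{\hat{\phi}} - \bs{\phi})$.

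First I would invoke the continuous mapping theorem for the (continuous, linear) map $\mbf{z} \mapsto \mbf{A}\mbf{z}$: since $\sqrt{N}(\bs{\hat{\phi}} - \bs{\phi}) \xrightarrow{d} \mathcal{N}(\mbf{0}, \mbf{\Sigma})$, we get $\sqrt{N}(\bs{\hat{\phi}_E} - \bs{\phi_E}) \xrightarrow{d} \mathcal{N}(\mbf{0}, \mbf{A}\mbf{\Sigma}\mbf{A}^T)$, and $\mbf{A}\mbf{\Sigma}\mbf{A}^T$ is by construction exactly the principal submatrix $\mbf{\Sigma_E}$ obtained by retaining the rows and columns of $\mbf{\Sigma}$ indexed by $E$; this is the first display. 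For the chi-squared statement I would then take the symmetric positive-definite inverse square root $\mbf{\Sigma_E}^{-1/2}$ (available because $\mbf{\Sigma_E}$ is positive definite, see below), apply the same linear-transformation argument once more to conclude $\sqrt{N}\,\mbf{\Sigma_E}^{-1/2}(\bs{\hat{\phi}_E} - \bs{\phi_E}) \xrightarrow{d} \mathcal{N}(\mbf{0}, \mbf{I}_{|E|})$, and finally apply the continuous mapping theorem to the continuous map $\mbf{z} \mapsto \mbf{z}^T\mbf{z}$. Since $N(\bs{\hat{\phi}_E} - \bs{\phi_E})^T \mbf{\Sigma_E}^{-1}(\bs{\hat{\phi}_E} - \bs{\phi_E}) = \| \sqrt{N}\,\mbf{\Sigma_E}^{-1/2}(\bs{\hat{\phi}_E} - \bs{\phi_E}) \|_2^2$, the limit is $\|\mbf{Z}\|_2^2$ with $\mbf{Z} \sim \mathcal{N}(\mbf{0}, \mbf{I}_{|E|})$, which is by definition $\chi^2(|E|)$.

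Neither step is a genuine obstacle once Equation \ref{eq:asympdist} is in hand; the only point requiring care is the invertibility of $\mbf{\Sigma_E}$, which reduces to showing $\mbf{\Sigma} = \mbf{\Gamma}_0^{-1}\mbf{V}_0\mbf{\Gamma}_0^{-1}$ is positive definite and then using the elementary fact that principal submatrices of a positive-definite matrix are positive definite. Positive definiteness of $\mbf{\Sigma}$ follows from the regularity assumptions: $\mbf{\Gamma}_0 = E[\bs{\Gamma}(\mbf{x})]$ is invertible (the torus graph is a regular full exponential family by Theorem \ref{thm:tgnatural}, so the sufficient statistics are affinely independent and the expected Jacobian outer product is nonsingular), and $\mbf{V}_0$, being the covariance of the mean-zero estimating function $\bs{\Gamma}(\mbf{x})\bs{\phi} - \mbf{H}(\mbf{x})$ at the true parameter, is positive definite under a mild non-degeneracy condition. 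A final remark I would add is that in applications one replaces $\mbf{\Sigma_E}$ by a consistent plug-in estimate, substituting sample averages for $\mbf{\Gamma}_0$ and $\mbf{V}_0$ and either $\bs{\hat{\phi}}$ or a null value for $\bs{\phi}$; by Slutsky's theorem this leaves the limiting $\chi^2(|E|)$ distribution unchanged, so the lemma may be used directly for testing hypotheses about groups of edges and for constructing confidence regions as described in the text.
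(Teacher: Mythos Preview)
Your argument is correct and is exactly the standard derivation one would give. The paper itself does not supply a separate proof of this lemma: it states the result immediately after the asymptotic normality in Equation~\ref{eq:asympdist} and treats both displays as direct consequences of that result together with standard multivariate asymptotics. Your write-up makes those implicit steps explicit---the selection matrix $\mbf{A}$, the continuous mapping theorem for the linear projection, the $\mbf{\Sigma_E}^{-1/2}$ standardization, and the Slutsky remark for plug-in estimates---which is precisely the reasoning the paper is relying on but does not spell out.
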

Lemma \ref{lemma:chisq} enables computation of $p$-values for single edges 
(if $E$ indexes the four parameters for a single edge) or for groups of edges. 
In particular, if $E$ indexes the four parameters corresponding to a single edge, then under the null hypothesis that $\bs{\phi_E} = 0$, 
	\begin{align*}
		X^2_E \equiv N \bs{\hat{\phi}_E}^T \mbf{\Sigma_E}^{-1} \bs{\hat{\phi}_E} \xrightarrow{d} \chi^2(4),
	\end{align*} 
so for an observed value of the test statistic $\hat{X}^2_E$, the probability statement $P\left(X^2_E  \ge \hat{X}^2_E \right)$, 
which gives a $p$-value for the edge, may be evaluated using the $\chi^2$ distribution with 4 degrees of freedom.
Similarly, a $\chi^2$ distribution with two degrees of freedom may be used to test for only rotational or only reflectional covariance, or to test for nonzero marginal parameters.

Inference after regularization is less straightforward.
Recent work has addressed inference for score matching estimators when using an $\ell_1$ penalty on each parameter \citep{yu2016statistical}, which could potentially be extended to torus graphs with a group $\ell_1$ penalty.
Other approaches for inference in high dimensions include the bootstrap or stability selection \citep{meinshausen:2010jg}.
One parametric bootstrap approach is as follows.
Assume we are interested in testing the null hypothesis that some particular subset of edges is missing from the graph.
We may first fit a null torus graph model in which the coupling parameters corresponding to the edge set of interest are set to zero, selecting the regularization parameter by cross-validation of the score matching objective function.
Next, $B$ times, we would draw samples of the same size as the data from the null torus graph model, re-select the regularization parameter by cross-validation, and fit the unrestricted torus graph model to the samples using this regularization parameter.
By computing the distribution of a suitable statistic (such as the number of nonzero edges or the maximal edgewise parameter vector norm) from these fitted null models, we obtain an empirical estimate of the null distribution of the statistic, which can then be used to judge the size of the same statistic computed on the original data.

\section{Simulation study}
\label{sec:simstudy}
As our analytical results of Section \ref{sec:tgphasecoupling} show, torus graphs can separate the effects of pairwise coupling and marginal concentration and have pairwise coupling parameters that represent direct connections between nodes.
In contrast, bivariate phase coupling measures like PLV are sensitive to the marginal distribution of the variables and can reflect not only direct connections but also indirect paths through the other nodes.
We conducted simulations to demonstrate these results.
In addition, we explored the performance of torus graphs in recovering graph structures in simulated data similar to real data to determine how well we expect torus graphs to perform in the real data.
Section \ref{subsec:sim} gives the simulation details and Section \ref{subsec:simresults} provides the results.

\subsection{Simulation methods} \label{subsec:sim}
When comparing PLV to torus graphs, we chose to generate data using the notion of positive rotational dependence discussed in Section \ref{sec:tgmodel}.
This was done to demonstrate that torus graphs recover interactions of this type even when data were not directly generated from a torus graph model.
To generate bivariate data with rotational dependence and nearly uniform marginal distributions, we first drew $N$ trials of phase angles $x_1$ from a von Mises distribution with low circular concentration $\kappa_1$. 
Then, for each trial, we let $x_2 = x_1 + \xi + \epsilon$ where $\xi$ is a fixed phase offset and $\epsilon$ is noise drawn from a concentrated mean-zero von Mises distribution with concentration $\kappa_\epsilon$ (where, on a small number of trials, we used less concentrated noise to emulate the noisiness present in real data).
Extending to more than two nodes follows a similar process, where data for an additional node is generated based on data from a neighbor in the graph.

We generated synthetic data with two ground truth phase coupling structures that are intended to reflect realistic scenarios (and with parameters chosen to produce samples that emulate real neural phase angle data; Figure \ref{fig:toyvalidation} compares the simulated data to real data, showing similar first- and second-order behavior and similar observed pairwise PLV values).
First, we constructed five-dimensional data meant to emulate the effects of spatial dependence, such as dependence between electrodes on a linear probe situated within a single functional region, which, under a nearest-neighbor Markov assumption, would induce sparse conditional independence graph structures (because each node would be directly dependent only on its nearest neighbors on the probe).
We coupled nodes in a linear chain and chose $\xi = \pi/100$ and $\kappa_\epsilon = 40$ (with 15 of 840 trials contaminated with extra noise with concentration 0.1).
Second, we constructed three-dimensional data meant to emulate the effects of indirect connections, which may occur when electrodes are in different regions, but not all of the regions are communicating directly.
In particular, $x_2$ was concentrated at $\kappa_2 = 0.01$, $x_1$ and $x_3$ had phase offsets of $\xi = \pi/6$ and $\xi = \pi/100$, respectively, from $x_2$, and the coupling noise had concentration $\kappa_\epsilon = 2$ (with 75 of 840 trials contaminated with extra noise with concentration 0.1). 
For each scenario, we simulated data of sample size 840 (to match the sample size of the real data).
Then, for each data set, we fitted a torus graph and selected edges based on Lemma \ref{lemma:chisq}; we also used the Rayleigh test of uniformity to construct a graph based on PLV \citep[p. 268]{kass2014analysis}.
For both tests, we used an alpha level of $\alpha = 0.001$ with Bonferroni correction for multiple tests.

To gain intuition on how well torus graphs could be expected to perform in the real LFP data we analyze in Section \ref{sec:lfpdata}, we investigated how well torus graphs recover the edges for varying dimensions, sample sizes, and underlying levels of sparsity in the edges .
For this simulation, we generated data from a torus graph model of varying dimension with zero marginal concentration and with either 25\% or 50\% of edges present in the generating distribution.
By varying the threshold on the edgewise $X^2$ statistics (Lemma \ref{lemma:chisq}), we computed an ROC curve for each simulated data set.
The ROC curves were averaged across 30 data sets, then the area under the curve (AUC) was calculated as a measure of performance.

\subsection{Simulation results} \label{subsec:simresults}
For the first set of simulations, Figure \ref{FigToySims} shows that in both the three-dimensional and five-dimensional cases, the torus graph recovered the correct structure while the PLV graph recovered a fully connected graph.
Although the performance of PLV may be better for other graph structures, our analytical results in Section \ref{sec:tgphasecoupling} suggest that graph structures with indirect paths between nodes are likely to induce excess edges in the PLV graph.
To follow up on this result, we further explored the False Positive Rate (FPR) and False Negative Rate (FNR) for PLV and torus graphs by repeating the simulations. 
We found that PLV graphs have low FNR (near 0), but also have a very high FPR (near 1), so PLV likely won’t miss a true edge but will also add many additional edges (Figure \ref{fig:fnrfpr}). 
This result agrees with the notion that hypothesis testing based on PLV, even when corrected for multiple comparisons, cannot be reliably used to control FPR for multivariate graphs because PLV measures both direct as well as indirect connectivity and thus tends to overestimate connectivity. 
In contrast, torus graphs are more conservative in assigning edges and control the FPR at the nominal level (though they tend to have higher FNR, especially for low sample sizes).

\begin{figure}
    \centering
    \includegraphics[width=0.7\linewidth]{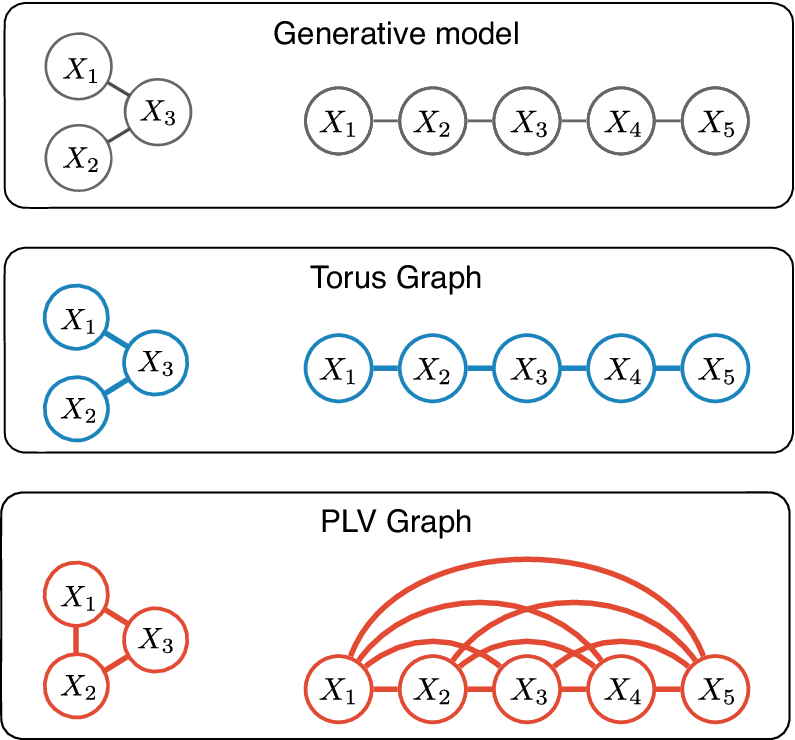}
    \caption{The torus graph recovers the ground truth graph structures (top panel) from realistic simulated data sets while a bivariate phase coupling measure, phase locking value (PLV), does not (edges shown for corrected $p < 0.001$). 
    Left: a 3-dimensional simulated example of cross-area phase coupling where regions $X_1$ and $X_2$ are not directly coupled, but are both coupled to region $X_3$. 
    Right: a 5-dimensional simulated example of a graph structure that could be observed for channels on a linear probe with nearest-neighbor spatial dependence.
    In both cases, PLV infers a fully-connected graph due to indirect connections.}
    \label{FigToySims}
\end{figure}

Figure \ref{fig:AUCbydensity} displays the results of the second set of simulations, which investigated the ability of torus graphs to recover the true structure as a function of true edge density, sample size, and data dimension. 
Importantly, for simulated data of dimension 24 and sample size of 840 (matching the real LFP data), the torus graph model is able to achieve 0.9 AUC as long as the graph is sufficiently sparse (around 25\% of all possible edges present).
In the real data results of Figure \ref{fig:AcrossROI}.B, we in fact observe approximately 25\% of edges present, suggesting that this graph density may be reasonable for the real data.
A more detailed investigation of the ROC curves and precision curves by dimension with fixed sample size 840 is given in Figure \ref{fig:ROCprecbydim}, which again demonstrates that for a sufficiently sparse underlying graph structure, the torus graph method is expected to perform well for the sample size and dimension in the real LFP data.
However, prior beliefs about the sparsity of the underlying graph will play a role in judging the likely accuracy of results.

\begin{figure}
    \centering
    \includegraphics[width=0.7\linewidth]{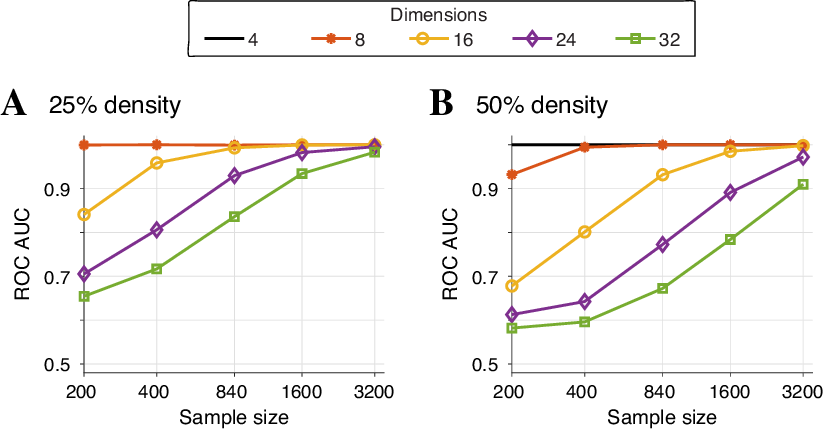}
    \caption{
    In simulated data with two different underlying edge densities, the average ROC curve area under the curve (AUC) was computed across 30 simulated data sets as a function of sample size (shown along the horizontal axis). 
    The dimension of the data is indicated by line color and different markers. 
    Panel A demonstrates that if the true underlying graph has only 25\% of all possible edges present, then even for 24 dimensional data (diamond markers), a sample size of 840 (the size of our real LFP data set) is sufficient to reach AUC above 0.9. While performance degrades when the underlying graph is more dense, panel B shows that performance is still reasonable (AUC near 0.8) for 24 dimensional data with 840 samples.}
    \label{fig:AUCbydensity}
\end{figure}

\section{Analysis of neural phase angles}
\label{sec:lfpdata}
We demonstrate torus graphs in a set of local field potentials (LFPs) collected from 24 electrodes in the prefrontal cortex (PFC) and hippocampus (HPC) of a macaque monkey during a paired-associate learning task. Previous analysis of these data in \cite{Brincat:2016cl} found that beta-band (16 Hz) phase coupling between PFC and HPC peaked during the cue presentation and also increased with learning after the subject received feedback on each trial.
Here, we sought a more fine-grained description of the phase coupling between PFC and HPC during the cue presentation period, and focused on describing relationships between PFC and three distinct subregions of HPC: subiculum (Sub), dentate gyrus (DG), and CA3.

First, we applied torus graphs to two different low-dimensional subnetworks: (i) a subnetwork consisting of five electrodes arranged linearly along a probe within CA3 and (ii) a collection of all trivariate subnetworks consisting of an electrode in each of the regions Sub, DG, and PFC.
The five-dimensional subnetwork was chosen as a proof of concept, because  electrodes in the same region and with a linear spatial arrangement ought to exhibit a nearest-neighbor conditional independence structure. 
We chose to examine connectivity between Sub, DG, and PFC because the patterns of connectivity between these three regions could be informative about whether hippocampal activity is leading prefrontal activity and because torus graphs should be able to disentangle the effect of direct and indirect connections to give a more informative connectivity structure than could bivariate phase coupling measures.
Second, we applied torus graphs to the full 24-dimensional data set by first testing for the presence of any cross-region edges between PFC, Sub, DG, and CA3, and then following up with post-hoc tests of individual cross-region and within-region edges to construct a full 24-dimensional graph.
Finally, we used a subset of the PFC electrodes to examine the goodness-of-fit of the torus graph model to the data and to investigate whether any torus graph subfamilies appeared to be appropriate for this data set.

We describe the data and preprocessing in Section \ref{subsec:data} and give an outline of our data analysis methods in Section \ref{subsec:dataanalysis}.
Section \ref{subsec:dataresults} presents the results and we discuss implications of the results in Section \ref{subsec:datadiscussion}.

\subsection{Experiment and data details} \label{subsec:data}
The experimental design and data collection procedures are described thoroughly in \cite{brincat2015frequency, Brincat:2016cl}.
We use data from a single animal in a single session comprising 840 trials in which a correct response was given. 
(The sample size here is 840; a very small number of animals, usually 1 or 2, is standard practice in nonhuman primate neurophysiology because, even though there is large subject-to-subject variability in the fine details of brain structure and function, the overall structure and function of major brain regions is conserved, as are, typically, the primary scientific conclusions, though it is common to replicate in a second animal results found in a single animal; we also note that while part of the purpose of the original experiment involved learning, we are here ignoring any transient learning effects, which take place rapidly.)
Briefly, four images of objects were randomly paired; the subject learned the associations between pairs through repeated exposure to the pairs followed by a reward for correctly identifying a matching pair. 
In each trial, the pairs of images were presented sequentially with a 750 ms delay period between the images, during which a fixation mark was shown. 
All procedures followed the guidelines of the MIT Animal Care and Use Committee and the US National Institutes of Health. 
(The experimental procedures were painless to the animals, as all forms of sensation originate outside the brain.) 
The data used in this paper contain 8 single-channel electrodes in PFC and a linear probe with 16 channels in HPC, with HPC channels categorized based on neural spiking characteristics into three subregions: dentate gyrus (DG), CA3, and subiculum (Sub). 
Recording regions and data processing steps are depicted in Figure \ref{fig:repeatedtrials}.
We focus on a time point at 300ms after initial cue presentation, as PLV pooled across all sessions identified phase coupling peaking near 16Hz at this time point \citep[Supplementary Figure 5]{Brincat:2016cl}; we verified that the session we used showed the same overall phase coupling relationship. 
After downsampling the data to 200Hz and subtracting the average (evoked) response, we used complex Morlet wavelets with 6 cycles to extract the instantaneous phase of each channel at 16Hz in each trial.

\begin{figure}
    \centering
    \includegraphics[width=0.7\linewidth]{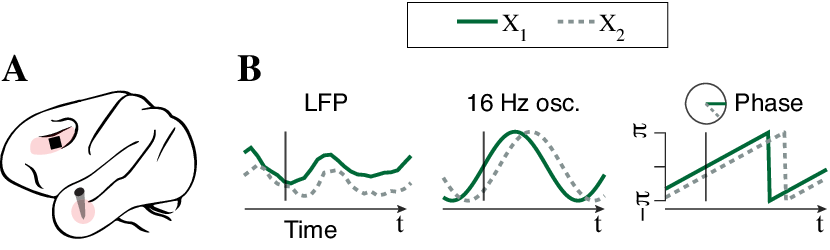}
    \caption{(A) Depiction of recording sites in ventrolateral prefrontal cortex (PFC) and hippocampus. (B) Preprocessing to obtain phase angles: local field potential (LFP) signals are filtered using Morlet wavelets to extract phase angles from 16 Hz oscillations at a time point of interest (two signals are shown for a single trial; repeated observations of phase angles are collected across repeated trials).}
    \label{fig:repeatedtrials}
\end{figure}

\subsection{Data analysis methods} \label{subsec:dataanalysis}
To examine whether torus graphs could recover the spatial features we would expect along the linear probe, we first applied torus graphs to the 5-dimensional network containing channels on a linear probe that are all within CA3, likely to exhibit strong spatial dependence between neighboring channels, and used a hypothesis test for each edge with $\alpha = 0.0001$.
We chose a stringent threshold because, based on the first simulation study of Section \ref{sec:simstudy}, we expected PLV to add extraneous edges, yet we wanted to demonstrate that torus graphs and PLV give very different results even when a small threshold is used.
Then, to examine whether torus graphs appeared to disentangle the effects of direct and indirect edges, we focused first on a trivariate network containing Sub, DG, and PFC where there is a simple interpretation of direct edges because CA3 and Sub send output signals from hippocampus while DG receives input signals to the hippocampus.
Therefore, prominent connections between CA3 and PFC and/or Sub and PFC would suggest hippocampal activity may be leading PFC activity during this period of the task, while dominance of connections between DG and PFC would suggest the opposite.
Because there are multiple channels in each of the three regions, we aggregated results across all possible triplets of channels from the three regions by inferring edges by majority vote across all possible trivariate graphs (using an alpha level of $p < 0.0001$ for each edgewise test).

We also investigated the graphical structure between the four regions (PFC, DG, CA3, and Sub) using a 24-dimensional torus graph on all electrodes; based on model selection techniques described in the next paragraph, we used a phase difference model.
First, we assessed between-region connectivity between each pair of regions using the results of Lemma \ref{lemma:chisq} to test the set of null hypotheses that there were no edges between each pair of regions.
For example, there are 40 total possible edges between CA3 and PFC with two parameters for each edge, so the hypothesis test for the existence of any edges between CA3 and PFC was based on a $\chi^2(80)$ distribution.
For each pair of regions, we obtained a $p$-value for the entire group of edges between the two regions under the null hypothesis that there are no edges between the two regions.
Because we were interested only in further investigating cross-region interactions with strong evidence, for this set of hypothesis tests, we chose a stringent alpha level of $\alpha = 0.001$ with a Bonferroni correction across all 6 between-region tests to control for multiple tests.
To better understand the individual connections driving this cross-region connectivity, and to investigate within-region connectivity patterns, we then applied post-hoc tests on the individual edges.
That is, for each possible edge between pairs of regions that were identified as having some connection by the first step, we obtained a $p$-value using a $\chi^2(2)$ distribution. 
Similarly, we calculated a $p$-value for each possible edge connecting electrodes within the same region.
In this case, we assigned edges using a less stringent alpha level of $\alpha = 0.05$ without correcting for multiple tests, as we expected the evidence for any specific edge would be weaker than the evidence for cross-region connections and because multiple comparisons were already taken into account in the first set of between-region tests. 
 
To assess the appropriateness of torus graphs for this data set and to assess whether any submodels were appropriate, we explored the first- and second-order behavior of the LFP phase angles.
To determine whether the uniform marginal model should be fitted, we tested for uniform marginal distributions using a Rayleigh test on each electrode, marginally, then obtained an overall decision regarding the null hypothesis that all distributions were uniform using Fisher's method to combine $p$-values \citep[p. 301]{kass2014analysis}.
Equation \ref{eq:trivphasediff} showed that the marginal distribution of phase differences in a torus graph model depends on the coupling parameters for all possible direct and indirect paths between two nodes, so that any observed concentration of phase differences, marginally, could indicate the presence of \textit{some} nonzero coupling parameters (possibly corresponding to indirect paths).
Therefore, we applied the Rayleigh test to the observed phase differences for each pair of variables, then combined $p$-values using Fisher's method to test the null hypothesis that all marginal phase difference distributions were uniform; we also performed a similar procedure on the observed pairwise phase sums.
For these exploratory tests, we used a $p$-value threshold of $p < 0.05$ so that we would be sensitive to departures from uniformity in either the marginal distributions or the distributions of phase differences/sums.
Finally, after fitting the chosen model, we used Kolmogorov-Smirnov (KS) goodness-of-fit tests to determine whether there was any evidence that LFP angles were drawn from a different distribution than the fitted theoretical model.
In particular, we tested for differences in the marginal distributions of the angles, then tested for differences in the marginal distributions of pairwise phase differences and pairwise phase sums, and used Fisher's method to combine $p$-values within each of the three groups of tests.
We used an alpha level of $0.05$ for each test.

\subsection{Data results} \label{subsec:dataresults}
In the low-dimensional subnetworks, we found that the torus graph recovered a structure consistent with nearest-neighbor coupling along the linear probe (Figure \ref{FigLFPlowdim}, top panel), while PLV suggested a fully-connected graph; Figure \ref{fig:three_five_var}B shows $p$-values for the five-dimensional graph for both torus graphs and PLV.
In addition, the top panel of Figure \ref{FigLFPlowdim} shows that torus graphs appear to capture an interesting trivariate network with no coupling between PFC and DG, but with each of PFC and DG coupled with Sub.
Figure \ref{fig:three_five_var}C shows $p$-values for all of the individual trivariate graphs, indicating that in the majority of individual trivariate graphs, there was no evidence for an edge between PFC and DG, while again, PLV suggested a fully connected graph.

\begin{figure}
\centering
	\includegraphics[width=0.7\linewidth]{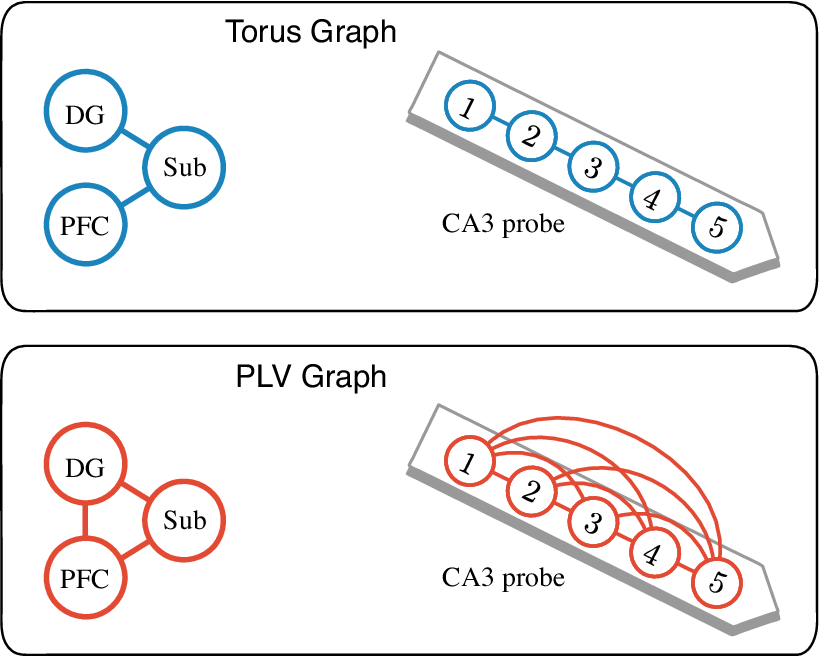}
	\caption{Torus graphs and PLV graphs from low-dimensional networks of interest in LFP data. Left: cross-region connectivity between dentate gyrus (DG), subiculum (Sub), and PFC, where the torus graph (top panel) indicated that DG and PFC are each coupled to Sub; in contrast, PLV (bottom panel) inferred a fully-connected graph. Right: within-region connectivity in CA3, where the torus graph indicated a spatial dependence structure which reflects the placement of channels along a linear probe, while PLV inferred a fully-connected graph.}
	\label{FigLFPlowdim}
\end{figure}

For the 24-dimensional torus graph applied to all electrodes, we found, first, using the overall tests for the presence of any edges between each pair of regions, that there was apparent connectivity between all hippocampal subregions (CA3, DG, and Sub) and connectivity from CA3 to PFC and Sub to PFC (Figure \ref{fig:AcrossROI}.A).
In the follow-up post-hoc tests of individual edges, we observed dense connectivity within regions and somewhat sparser connectivity between regions (as judged by the number of edges out of the possible number that could be present; graph and adjacency matrix shown in Figure \ref{fig:AcrossROI} panels B and C).
Interestingly, none of the individual CA3 to Sub connections were significant (the smallest $p$-values were around 0.1), suggesting that the aggregate effect of several weak edges led to the edge between CA3 and Sub in Figure \ref{fig:AcrossROI}.A. 
In the adjacency matrix corresponding to the 24-dimensional graph, which is ordered to respect the position of channels on the hippocampal linear probe, entries are colored by $p$-value (with white indicating non-significant entries at level $\alpha = 0.05$); notably, torus graphs recovered the linear probe structure across the entire hippocampus without prior knowledge of this structure being used in the model or estimation procedure.

\begin{figure}
\centering
	\includegraphics[width=0.7\linewidth]{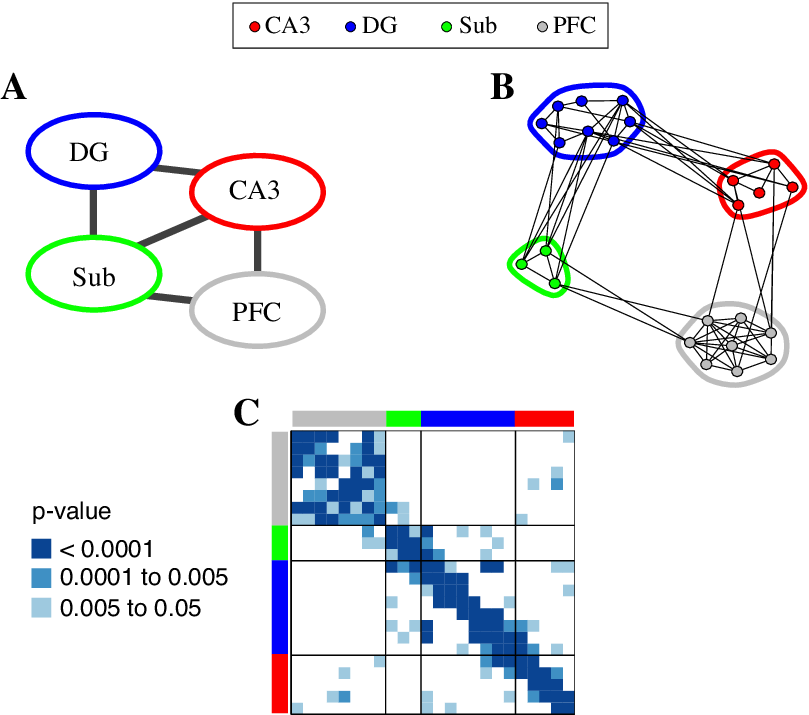}
	\caption[foo bar]{
	Torus graph analysis of coupling in 24-dimensional LFP data with four distinct regions: CA3 (red), dentate gyrus (DG; blue), subiculum (Sub; green), and prefrontal cortex (PFC; grey). 
	(A) Cross-region tests indicated evidence for edges between DG and CA3, DG and Sub, CA3 and Sub, PFC and CA3, and PFC and Sub (determined by testing the null hypothesis that there were no edges between a given pair of regions, corrected $p < 0.001$). 
	(B) For the significant cross-region connections, a post-hoc edgewise significance test examined the specific connections between regions, with the 24-dimensional graph containing edges for $p < 0.05$. 
	Compared to the cross-region graph in A, notice that no edges from Sub to CA3 were individually significant at $p < 0.05$. 
	(C) The 24-dimensional adjacency matrix (with hippocampal electrodes ordered by position on linear probe) with non-significant cross-region connections in white and other entries colored by edgewise $p$-value (with $p > 0.05$ in white). 
	Despite having no built-in knowledge of spatial information, the torus graph recovered the linear probe structure within hippocampus.}
	\label{fig:AcrossROI}
\end{figure}

In our investigation of the appropriateness of submodels and goodness-of-fit, we show results for three electrodes from PFC (results on the full data set were similar).
Based on a visual analysis of the marginal and pairwise behavior of the data, the phase difference model appeared to be a reasonable candidate for these data.
We found evidence that neither the marginal distributions nor the phase differences were uniform (Rayleigh/Fisher's method, $p < 0.001$); however, there was no evidence for concentration in the phase sums (Rayleigh/Fisher's method, $p > 0.05$).
As a result, we selected the phase difference model to investigate goodness-of-fit.
In Figure \ref{fig:eda}.A, we show data from three PFC electrodes along with their theoretical (fitted) phase difference model; along the diagonal, histograms for the real data (blue bins) are shown with theoretical model densities (solid red traces), indicating similar marginal distributions. 
Below the diagonal are two-dimensional histograms from the real data and above the diagonal are two-dimensional theoretical model densities, demonstrating that the torus graph appears capable of accurately representing the first- and second-order behavior present in the real data. 
Figure \ref{fig:eda}.B shows histograms of sufficient statistics from the real data (phase angles along the diagonal, phase sums for each pair above the diagonal, and phase differences for each pair below the diagonal), along with kernel density estimates (solid red traces) for the statistics from the theoretical model.
There is visual similarity between the distributions of sufficient statistics, and, in both cases, we observed the concentration of the pairwise phase differences indicating a prevalence of rotational dependence; this pattern held in the observed sufficient statistics for all 24 LFP channels shown in Figure \ref{fig:suppsuffstateda}.
The KS tests comparing the sufficient statistics of the fitted torus graph model to the data failed to reject the null hypothesis (KS/Fisher's method, $p > 0.05$, for each group of statistics), indicating no evidence that the data were drawn from a different distribution than the fitted model.

\begin{figure}
    \centering
    \includegraphics[width=0.8\linewidth]{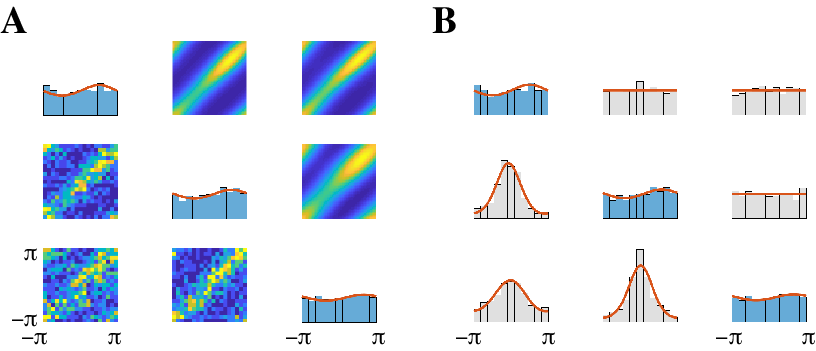}
    \caption{
    Comparison between phase angles from three LFPs located in PFC and the theoretical torus graph distribution demonstrates that torus graphs capture the salient first- and second-order behavior present in the LFP phase angles. 
    In contrast, the sine model fails to fit the data accurately Figure \ref{fig:edawithsinemodel}. 
    (A) Along the diagonal are the marginal distributions of the phase angles. 
    The real data are represented by blue histograms and the theoretical marginal densities from the torus graph model are overlaid as solid red traces. 
    Two-dimensional distributions (off-diagonal) show bivariate relationships, with theoretical densities above the diagonal and real data represented using two-dimensional histograms below the diagonal. 
    (B) Plots along the diagonal same as panel A. 
    Below the diagonal are distributions of pairwise phase differences and above the diagonal are distributions of pairwise phase sums, represented by histograms for the real data and by solid red density plots for the theoretical torus graph model. 
    Both the real data and theoretical distributions exhibit concentration of phase differences but not phase sums, suggesting prevalence of rotational covariance, and the sufficient statistics are very similar for the theoretical and real data.}
    \label{fig:eda}
\end{figure}

\subsection{Summary and discussion of results} \label{subsec:datadiscussion}
In the 24-dimensional analysis, PFC to hippocampal connections appeared to be driven by a relatively small number of significant connections from Sub to PFC and CA3 to PFC.
However, the results did not show evidence for connections between PFC and DG, which coincides with the analysis of the trivariate subnetwork.
In contrast, the PLV edgewise adjacency matrix (shown in Figure \ref{fig:plvadj}) was very densely connected and shows little resemblance to the structure recovered by torus graphs; for any reasonable $p$-value threshold, the PLV graph would be nearly fully connected, suggesting that PLV was unable to distinguish between direct and indirect connections.

Because the patterns of connectivity between hippocampal subregions and PFC recovered by these analyses suggested that hippocampal activity was leading prefrontal activity during this time period in the task, as a follow-up analysis, we considered whether lead-lag relationships could be detected in the distribution of phase differences between PFC and hippocampus (specifically, CA3 and Sub).
Since most of the edgewise dependence in this data set appears to correspond to positive rotational dependence (as judged by the relative magnitude of the torus graph coupling parameters and the concentration of phase differences but not phase sums in the observed sufficient statistics shown in Figure \ref{fig:suppsuffstateda}), the distributions of phase differences between PFC and Sub and between PFC and CA3 can summarize the overall PFC-hippocampus coupling.
Figure \ref{fig:phasediff}.A displays a circular histogram of the PFC to hippocampus phase differences with the mean phase difference and a 95\% confidence interval, pooling across all significant ($p < 0.05$) pairwise phase differences to compute an overall circular mean phase offset.
These phase differences were centered at $-8.5^\circ$ (95\% CI: $[-10.4^\circ,-6.6^\circ]$), indicating that on average, hippocampal phase angles led PFC phase angles, providing more evidence that hippocampal activity was leading PFC activity.
The phase differences agree with those displayed in \citealt[Figure 5.a]{brincat2015frequency} (though we examined a different time period of the trial and pooled only across significant PFC to hippocampus connections).
In contrast, Figure \ref{fig:phasediff}.B displays within-region phase differences tightly clustered around $0^\circ$ (mean: $-0.04$, 95\% CI: $[-0.2^\circ,0.2^\circ]$), indicating that within-region phase coupling may be driven mostly by spatial correlations in the recordings.
These results suggest that due to the CA3 to PFC and Sub to PFC connections identified in both the trivariate and 24-dimensional analyses and the overall negative phase difference (PFC - hippocampus), hippocampal activity leads PFC activity during this period of the task.
Importantly, while aggregated pairwise phase differences may have given some evidence of directionality, torus graphs provided detailed information about direct connections between hippocampal output subregions and PFC.

\begin{figure}
	\centering
	\includegraphics[width=0.7\linewidth]{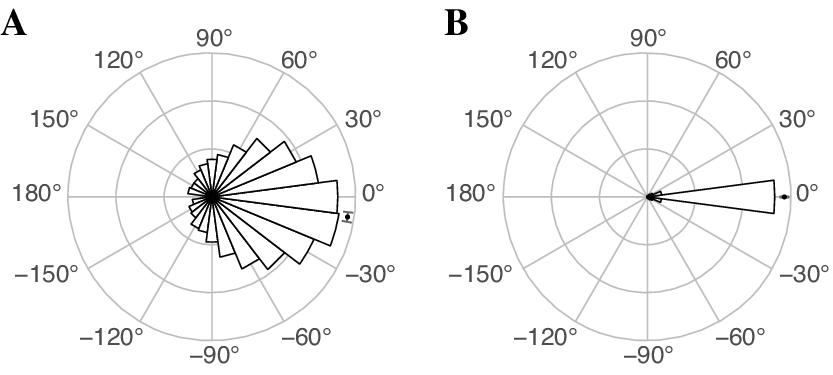}
	\caption{Circular histograms of phase differences, in degrees, from 24-dimensional LFP data for (A) significant connections between PFC and hippocampus (specifically, CA3 and Sub) and (B) significant PFC, CA3, and Sub within-region connections,  The mean phase offset with $95\%$ confidence interval is shown as a black dot with a narrow gray interval. Observations were pooled across all significant edges within or between the regions. The within-region phase differences were tightly concentrated around zero while the PFC-hippocampus phase differences were centered below zero, indicating a possible lead-lag relationship with hippocampus leading PFC.}
	\label{fig:phasediff}
\end{figure}

In the low-dimensional subnetworks, we found that torus graphs yielded intuitive results while PLV did not.
In particular, for the five-dimensional subnetwork consisting of electrodes along a linear probe, torus graphs inferred a nearest-neighbor conditional independence structure which we would expect for electrodes arranged linearly in space, while PLV inferred a fully connected graph.
In the trivariate subnetworks, torus graphs suggested connectivity between PFC and Sub and DG and Sub, but not between PFC and DG, while again, PLV inferred a fully connected graph.
Based on our analytic derivations in Section \ref{sec:tgphasecoupling} and simulation study in Section \ref{sec:simstudy}, PLV is likely not reflecting the correct dependence structures in either low-dimensional network, and is instead reflecting both direct and indirect connections between the nodes. 

When we investigated possible use of submodels, we found that a phase difference model appeared reasonable due to the lack of concentration in the phase sums, but that a uniform marginal model was not warranted.
While we used the phase difference model on the full 24-dimensional data set, we found that we would have obtained nearly the same results using the full torus graph model, with discrepancies for only a few edges in the post-hoc edgewise test results shown in Figure \ref{fig:AcrossROI}.B (which do not change the overall conclusions).
We found that torus graphs appeared to fit the data reasonably well, with both visual summaries and KS tests suggesting that the marginal distributions of each angle and of the pairwise sum and difference sufficient statistics were similar in the data and the fitted model.
In contrast, when we followed up by fitting a \textit{sine model} to the same data \citep[using code from][]{RodriguezLujan:2017eb}, we observed multimodality in the bivariate densities of the fitted model and a poor correspondence between the fitted and observed sufficient statistics, leading us to conclude that, as discussed in Section \ref{sec:tgsubmodels}, the \textit{sine model} fails to match the second-order dependence structure in the neural data due to low marginal concentration Figure \ref{fig:edawithsinemodel}.
KS tests comparing the fitted \textit{sine model} phase sums and differences to the data also suggested the data distribution does not match the \textit{sine model} distribution (KS/Fisher's method, $p < 0.0001$, for both phase sums and phase differences).

Torus graphs provided a good description of the neural phase angle data and provided substantive conclusions that could not have been obtained using bivariate phase coupling measures like PLV.
\section{Discussion}
\label{sec:disc}
We have argued that torus graphs provide a natural analogue to Gaussian graphical models: 
Theorem \ref{thm:tgnatural} and Corollary \ref{cor:tgprop} 
show that starting with a full torus graph, which is an exponential family  with two-way interactions, setting a specific set of interaction coefficients $\bs{\phi}_{jk}$ to zero results in conditional independence of the $j$th and $k$th circular random variables.
We provided methods for fitting a torus graph to data, including identification of the graphical structure, i.e., finding the non-zero interaction coefficients, corresponding to edges in the graph (code, tutorial, and data provided at \url{https://github.com/natalieklein/torus-graphs}).
We also demonstrated 
that previous models in the literature amount to special cases, and therefore make additional assumptions that may or may not be appropriate for neural data.
In particular, while the uniform marginal model or the phase difference model may be reasonable for neural phase angle data, the most widely studied model in multivariate circular statistics, the \textit{sine model}, is less well-behaved and does not appear to be capable of matching the characteristics of neural phase angle data.
In addition, we showed that PLV is a measure of positive circular correlation under the assumption of uniform marginal distributions of the angles, but that PLV is unable to recover functional connectivity structure that takes account of multi-way dependence among the angles. 
In our analysis of LFP phases 300 ms after cue presentation in an associative memory task, the fitted torus graph correctly identified the apparent dependence structure of the linear probe within CA3; it suggested Sub may be responsible for apparent phase coupling between PFC and DG; and it led to the conclusion that, at this point in the task, hippocampus phases lead those from PFC (by $8.5^{\circ}$ with SE $=0.95^{\circ}$).

Here, our torus graphs were based on phases of oscillating signals, with no regard to their amplitudes.
This is different than phase amplitude coupling in which the phase of one oscillation may be related to the amplitude of an oscillation in a different frequency band \citep{tort2010measuring}.
Also, like other graph estimation methods, interpretations based on torus graphs assume that all relevant signals have been recorded, while in reality, they could be affected by unmeasured confounding variables (e.g., activity from other brain regions). 
In addition, in applications such as phase angles in LFP, several preprocessing steps (referencing, localization, and filtering) are needed to extract angles from the signals. 
The torus graph implementation we have described here ignores these steps, and takes well-defined angles as the starting point for analysis.
Furthermore, local field potentials tend to be highly spatially correlated, suggesting that inclusion of spatial information might be helpful for identifying structure.

Future work could include further investigation and theoretical analysis of how well torus graphs perform when the sample size is smaller relative to the dimension of the data.
In some data sets, the full torus graph with $2d^2$ parameters may be overparameterized, and estimation and inference may be more accurate using one of the subfamilies; we demonstrated a model selection approach that indicated that the phase difference submodel would be reasonable for this data set.
Furthermore, even when a full torus graph model is used, interpretability of the results could be enhanced by assessing evidence for reflectional and rotational dependence separately; that is, instead of putting a single edge based on the test of all four coupling parameters, we could construct a graph based only on the two parameters corresponding to rotational (or reflectional) dependence.
Finally, in the uniform marginal phase difference model, the strength of coupling for each type of dependence could be quantified using the measure we introduced in Section \ref{subsec:partialplv}, which falls between 0 and 1, facilitating comparison of relative strengths of the connections.

By extending existing models, torus graphs are able to represent a wide variety of multivariate circular data, including neural phase angle data. 
Extensions to this work could study changes in graph structure across time or across experimental conditions, and could investigate latent variable models involving hidden states, or a spatial hierarchy of effects. 
We anticipate a new line of research based on torus graphs.

\section*{Acknowledgements}
The authors thank the reviewers for their valuable feedback. 
REK, NK, and JO are supported by National Institute of Mental Health (NIMH, \url{http://www.nimh.nih.gov}, R01MH064537). 
JO is also partially supported by the Commonwealth of Pennsylvania Department of Health SAP4100072542, and Presidential Ph.D. Fellowships from The Richard King Mellon Foundation and William S. Dietrich II.
EKM and SB are supported by National Institute of Mental Health (NIMH, \url{http://www.nimh.nih.gov}, R37MH087027) and The MIT Picower Institute Innovation Fund. 
The funders had no role in study design, data collection and analysis, decision to publish, or preparation of the manuscript.

\section*{Supplementary code and tutorial}
\url{https://github.com/natalieklein/torus-graphs}

\section*{Supplementary text and figures}
\renewcommand{\thefigure}{S\arabic{figure}}
\renewcommand{\theHfigure}{S\arabic{figure}}
\setcounter{figure}{0} 
\renewcommand{\theequation}{S\arabic{equation}}
\renewcommand{\thesubsection}{S\arabic{subsection}}
\renewcommand{\thethm}{S\arabic{subsection}.1}
Several sections of supplemental text with derivations, proofs, and further detailed discussion of points in the main text. Also included in this section are extra supporting figures (referenced in the main text) with descriptive captions to assist in interpretation.

\subsection{Proof of Theorem \ref{thm:tgnatural} (Torus graph model)} 
\label{sec:tgmodelcomplex}
To derive the appropriate first- and second-order sufficient statistics that correspond to first circular moments and to circular covariances, we first represent the angles as unit modulus complex random variables (where $i$ is the imaginary unit):
\begin{align*}
    Z_j = e^{i X_j}.
\end{align*}
The first circular moment is defined as
\begin{align*}
    E[Z_j] = R_j e^{i \mu_j}
\end{align*}
where $\mu_j$ is the mean direction and $R_j$ is the resultant length, so the corresponding complex first order sufficient statistic for a single observation is simply 
\begin{align*}
    z_j = \cos(x_j) + i \sin(x_j)
\end{align*}
which may be described as a real-valued sufficient statistic vector \begin{align*}
    \mbf{S}^1_j(x_j) = [\cos(x_j), \, \sin(x_j)]^T.
\end{align*} 

When considering second-order interactions between complex variables, there are two types of covariance \citep[Ch. 2.2]{schreier2010statistical}.
Rotational covariance between $Z_j$ and $Z_k$ is described by
\begin{align*}
    E[(e^{i X_j} - R_j e^{i \mu_j})\overline{(e^{i X_k} - R_k e^{i \mu_k})}] = E[e^{i(X_j-X_k)}] - R_j R_k e^{i (\mu_j-\mu_k)},
\end{align*}
where $\overline{Z_k}$ is the complex conjugate, while reflectional covariance is described by
\begin{align*}
    E[(e^{i X_j} - e^{i \mu_j})(e^{i X_k} - e^{i \mu_k})] = E[e^{i(X_j+X_k)}] - R_j R_k e^{i (\mu_j+\mu_k)}.
\end{align*}

This shows that in addition to the first-order statistics, we additionally need two more complex sufficient statistics to describe the second-order behavior:
\begin{align*}
    e^{i(x_j-x_k)} &= \cos(x_j-x_k) + i \sin(x_j-x_k) \\
    e^{i(x_j+x_k)} &= \cos(x_j+x_k) + i \sin(x_j+x_k).
\end{align*}
These may be collected into a real-valued vector 
\begin{align*}
     \mbf{S}^2_{jk}(x_j,x_k) = [\cos(x_i-x_j), \, \sin(x_i-x_j), \,
     \cos(x_i+x_j), \, \sin(x_i+x_j)]^T.
\end{align*}

Therefore, the canonical exponential family distribution given the first circular moments of each variable and the complete second-order interactions (rotational and reflectional) between each variable coincides with that given in Equation \ref{eq:tgdiff}:
\begin{align}   
p(\mbf{x}) &\propto
\exp\left\{ \sum_{j=1}^d \bs{\phi}_{j}^T \mbf{S}^1_j(x_j) + 
    \sum_{j<k} \bs{\phi}_{jk}^T 
    \mbf{S}^2_{jk}(x_j,x_k) \right\}\\
    &=
    \exp\left\{ \sum_{j=1}^d \bs{\phi}_{j}^T \begin{bmatrix}
    \cos (x_j) \\ \sin (x_j)
    \end{bmatrix} + 
    \sum_{j<k} \bs{\phi}_{jk}^T 
    \begin{bmatrix}
    \cos (x_j - x_k) \\ \sin (x_j - x_k) \\
    \cos (x_j + x_k) \\ \sin (x_j + x_k)
    \end{bmatrix} \right\}. \label{eq:tgmodel}
\end{align}
Thus, the torus graph model is maximum entropy with respect to constraints on the expected values of the sufficient statistics, that is, the circular first moments and complex covariances \cite{wainwright2008graphical}.
We note that, similar to the multivariate Gaussian distribution, the torus graph model only contains sufficient statistics for circular first moments and covariances, but it does not contain sufficient statistics corresponding to the second circular moment of a single angle $X_j$ (that is, it does not include interactions of the form $Z_j Z_j$); such a model was recently explored in \cite{navarro2017multivariate}.

The maximum entropy motivation for this model also offers some intuition for interpretation of the parameters; in particular, we see that the subvector $\bs{\phi}_{jk,1:2}$ corresponds to rotational covariance while the subvector $\bs{\phi}_{jk,3:4}$ corresponds to reflectional covariance.
However, the magnitude of each parameter is difficult to interpret directly because it depends not only on the covariance but also the marginal concentration of each variable (which is related to the resultant lengths $R_j$ and $R_k$) as well as the sum or difference of the mean directions.

\subsection{Reparameterization to compare to previous work}
\label{sec:s2}
While the canonical exponential family form in Equation  \ref{eq:tgmodel} is useful for understanding the maximum entropy constraints of the model and for deriving score matching estimators, it does not immediately appear similar to previous work in multivariate circular statistics (such as the \textit{sine model}).
To obtain another form that is easier to compare to previous work, we begin with an alternate parameterization that is similar to the \textit{sine model}, then show how it can be transformed into our parameterization.
Crucially, this transformation can also be reversed to potentially aid in interpretation of parameters.

Consider the \textit{mean-centered parameterization}
\begin{align*}
    p(\mbf{x} ; \bs{\theta}) \propto 
    \exp \left\{ 
        \sum_{j=1}^d \kappa_j \cos(x_j - \mu_j) + 
        \sum_{j < k} 
        \begin{bmatrix}
        \lambda^{cc}_{jk} \\ 
        \lambda^{cs}_{jk} \\
        \lambda^{sc}_{jk} \\
        \lambda^{ss}_{jk}
        \end{bmatrix}^T
        \begin{bmatrix}
        \cos(x_j - \mu_j)\cos(x_k - \mu_k) \\ 
        \cos(x_j - \mu_j)\sin(x_k - \mu_k) \\
        \sin(x_j - \mu_j)\cos(x_k - \mu_k) \\
        \sin(x_j - \mu_j)\sin(x_k - \mu_k)
        \end{bmatrix}
    \right\}
\end{align*}
where the parameters are $\bs{\theta} = \left[ \bs{\mu}, \bs{\kappa}, \bs{\lambda^{cc}}, \bs{\lambda^{cs}}, \bs{\lambda^{sc}}, \bs{\lambda^{ss}}\right]$ with the interpretation that $\mu_j \in [0, 2\pi)$ is the mean direction of $x_j$, $\kappa_j > 0$ is the marginal concentration of $x_j$, and the $\lambda$ parameters control interactions between angles.

In the univariate terms, we use trigonometric sum and difference formulas to rewrite
\begin{align*}
    \kappa_j \cos(x_j - \mu_j) = \kappa_j \cos(\mu_j) \cos(x_j) + \kappa_j \sin(\mu_j) \sin(x_j)
\end{align*}
so that in the parameterization of Equation \ref{eq:tgmodel}, 
\begin{align*}
    \bs{\phi}_j = 
    \begin{bmatrix} 
        \kappa_j \cos(\mu_j) \\
        \kappa_j \sin(\mu_j)
    \end{bmatrix}.
\end{align*}
Therefore, we can clearly calculate $\bs{\phi}_j$ for given $\kappa_j$ and $\mu_j$, and also, given $\bs{\phi}_j$, we have (using the Pythagorean theorem and definition of tangent)
\begin{align*}
    \kappa_j &= \sqrt{\bs{\phi}_{j,1}^2 + \bs{\phi}_{j,2}^2} \\
    \mu_j &= \arctan\left( \frac{\bs{\phi}_{j,2}}{\bs{\phi}_{j,1}}\right).
\end{align*}
Thus we have demonstrated a diffeomorphism between the two parameterizations for the marginal terms.

Similarly, for the pairwise coupling terms, we consider without loss of generality the pair $\{ X_j, X_k \}$ but for simplicity drop subscripts on the $\lambda$ parameters; using trigonometric sum and difference identities and simplifying, we find the pairwise coupling term is
\begin{align*}
    \frac{1}{2} 
    \begin{bmatrix}
        (\lambda^{cc}+\lambda^{ss})\cos(\mu_j-\mu_k) + (\lambda^{cs}-\lambda^{sc})\sin(\mu_j-\mu_k) \\
        (\lambda^{sc}-\lambda^{cs})\cos(\mu_j-\mu_k) + (\lambda^{cc}+\lambda^{ss})\sin(\mu_j-\mu_k) \\
        (\lambda^{cc}-\lambda^{ss})\cos(\mu_j+\mu_k) + (-\lambda^{cs}-\lambda^{sc})\sin(\mu_j+\mu_k) \\
        (\lambda^{cs}+\lambda^{sc})\cos(\mu_j+\mu_k) + (\lambda^{cc}-\lambda^{ss})\sin(\mu_j+\mu_k) 
    \end{bmatrix}^T
    \begin{bmatrix}
    \cos (x_j - x_k) \\ \sin (x_j - x_k) \\
    \cos (x_j + x_k) \\ \sin (x_j + x_k)
    \end{bmatrix} 
\end{align*}
so that it is straightforward to calculate $\bs{\phi}_{jk}$ given $\mu_j, \mu_k, \kappa_j, \kappa_k$, and the four $\lambda$ parameters.

The $\lambda$ parameters may be recovered as follows, where for brevity we use $\mu_{jk}^- = \mu_j - \mu_k$ and $\mu_{jk}^+ = \mu_j + \mu_k$:
\begin{align*}
\begin{split}
    \lambda^{cc} &= \bs{\phi}_{jk,1} \cos(\mu_{jk}^-) + 
                    \bs{\phi}_{jk,2} \sin(\mu_{jk}^-) + 
                    \bs{\phi}_{jk,3} \cos(\mu_{jk}^+) + 
                    \bs{\phi}_{jk,4} \sin(\mu_{jk}^+) \\
    \lambda^{cs} &= -\bs{\phi}_{jk,2} \cos(\mu_{jk}^-) + 
                    \bs{\phi}_{jk,1} \sin(\mu_{jk}^-) + 
                    \bs{\phi}_{jk,4} \cos(\mu_{jk}^+) - 
                    \bs{\phi}_{jk,3} \sin(\mu_{jk}^+) \\
    \lambda^{sc} &= \bs{\phi}_{jk,2} \cos(\mu_{jk}^-) - 
                    \bs{\phi}_{jk,1} \sin(\mu_{jk}^-) + 
                    \bs{\phi}_{jk,4} \cos(\mu_{jk}^+) - 
                    \bs{\phi}_{jk,3} \sin(\mu_{jk}^+) \\
    \lambda^{ss} &= \bs{\phi}_{jk,1} \cos(\mu_{jk}^-) + 
                    \bs{\phi}_{jk,2} \sin(\mu_{jk}^-) - 
                    \bs{\phi}_{jk,3} \cos(\mu_{jk}^+) - 
                    \bs{\phi}_{jk,4} \sin(\mu_{jk}^+). 
\end{split}
\end{align*}
This shows we have a diffeomorphism between the parameterizations.

Theorem 4.2.2 of \cite{kass2011geometrical} states that a subfamily of a regular exponential family is itself a (lower-dimensional) regular exponential family if and only if the subspace of the natural parameter space corresponding to the subfamily is an affine subspace of the natural parameter space. 
In the mean-centered parameterization, the \textit{sine model} has parameter constraints $\lambda^{cc} = \lambda^{cs} = \lambda^{sc} = 0$.
Given the equations above, the \textit{sine model} corresponds to a restriction of the natural parameter space of the torus graph density of Equation \ref{eq:tgmodel}:
\begin{align}
    \bs{\phi}_{jk} = \frac{1}{2} \lambda^{ss} [\cos(\mu_{jk}^-), \,
    \sin(\mu_{jk}^-), \, -\cos(\mu_{jk}^+), \,
    -\sin(\mu_{jk}^+)]^T.
\end{align}
This implies that the pairwise interactions must follow a specific structured form in the \textit{sine model}, where the magnitude of interactions is governed by $\lambda^{ss}$ and the following relationship between the parameters is observed (regardless of $\mu_j$ and $\mu_k$):
\begin{align}
    \bs{\phi}_{jk,1}^2 + \bs{\phi}_{jk,2}^2 = \bs{\phi}_{jk,3}^2 + \bs{\phi}_{jk,4}^2.
\end{align}
Because of the nonlinear relationship between the parameters, the subspace corresponding to the \textit{sine model} is not an affine subspace of the natural parameter space. Therefore, the \textit{sine model} is not itself a regular exponential family. On the other hand, the uniform marginal model and the phase difference model both are defined by setting components of the natural parameter to zero, as is the phase difference model with uniform marginals, so each of these families is itself a regular exponential family. This proves Theorem \ref{thm:tgsubmodels} in the main text.

\subsection{Proof of Corollary \ref{cor:tgprop} (Torus graph properties)}
\label{sec:s3}
\begin{enumerate}
    \item Because exponential family models are maximum entropy models subject to constraints on the expected values of the sufficient statistics \citep{wainwright2008graphical}, the torus graph is the maximum entropy model subject to constraints on the first circular moments and complex covariances between angles (following the derivation in Section \ref{sec:tgmodelcomplex} that relates the sufficient statistics to circular first moments and to complex covariances).
    \item The torus graph density is positive and continuous on ${[0,2\pi)}^d$ and factorizes into pairwise interaction terms as shown in Equation \ref{eq:tgmodel}.
By the Hammersley-Clifford theorem \citep{lauritzen1996graphical}, the random variables $X_j$ and $X_k$ are conditionally independent given all other variables if and only if $\bs{\phi}_{jk} = \mbf{0}$.
\end{enumerate}

\subsection{Derivations of phase differences in torus graph models}
\label{sec:s4}
First, we state the Harmonic Addition Theorem which will be very useful throughout this set of derivations (see \cite{weisstein} for proof).
\begin{thm}[Harmonic Addition Theorem] \label{thm:harmonicaddition}
	The weighted sum of cosine functions with the same period and arbitrary phase shifts is the cosine function
	\begin{align*}
		\sum_{i=1}^n a_i \cos(x-\delta_i) = A \cos(x-\Delta)
	\end{align*}
	where  
	\begin{align*}
		b_x &= \sum_{i=1}^n a_i \cos(\delta_i) \\
		b_y& = \sum_{i=1}^n a_i \sin(\delta_i) \\
		A &= \sqrt{b_x^2 +b_y^2} \\
		\Delta &= \arctan\left(\frac{b_y}{b_x}\right).
	\end{align*}
\end{thm}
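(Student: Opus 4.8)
The plan is to reduce the statement to the elementary identity that a linear combination $b_x\cos x+b_y\sin x$ equals a single phase-shifted cosine. First I would expand each summand using the cosine subtraction formula $\cos(x-\delta_i)=\cos x\cos\delta_i+\sin x\sin\delta_i$ and, since the sum over $i$ is finite, interchange it with the coefficients of $\cos x$ and of $\sin x$:
\begin{align*}
\sum_{i=1}^n a_i\cos(x-\delta_i)=\left(\sum_{i=1}^n a_i\cos\delta_i\right)\cos x+\left(\sum_{i=1}^n a_i\sin\delta_i\right)\sin x=b_x\cos x+b_y\sin x,
\end{align*}
which recovers exactly the definitions of $b_x$ and $b_y$ in the statement.

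Next I would put the vector $(b_x,b_y)\in\mathbb{R}^2$ into polar form. Assuming $b_x^2+b_y^2>0$, set $A=\sqrt{b_x^2+b_y^2}>0$ and let $\Delta$ be the angle (unique modulo $2\pi$) with $\cos\Delta=b_x/A$ and $\sin\Delta=b_y/A$; this is precisely $\arctan(b_y/b_x)$ when the quadrant-aware two-argument arctangent is intended. One more application of the cosine subtraction formula then gives
\begin{align*}
b_x\cos x+b_y\sin x=A\bigl(\cos\Delta\cos x+\sin\Delta\sin x\bigr)=A\cos(x-\Delta),
\end{align*}
which combined with the previous display proves the theorem. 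The degenerate case $b_x=b_y=0$ makes the left side identically zero, consistent with $A=0$ and $\Delta$ arbitrary.

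I expect no genuine obstacle; the only subtlety is the branch ambiguity of $\arctan(b_y/b_x)$, which fixes $\Delta$ only modulo $\pi$, whereas the sign conventions $\cos\Delta=b_x/A$, $\sin\Delta=b_y/A$ pin it down modulo $2\pi$, so the statement must be read with the two-argument arctangent. An equivalent and perhaps cleaner packaging is via complex exponentials: write $a_i\cos(x-\delta_i)=\operatorname{Re}\!\left(a_i e^{-i\delta_i}e^{ix}\right)$, sum to get $\operatorname{Re}\!\left(e^{ix}\sum_i a_i e^{-i\delta_i}\right)$, recognize $\sum_i a_i e^{-i\delta_i}=A e^{-i\Delta}$ with $A$ and $\Delta$ as above, and conclude $\operatorname{Re}\!\left(A e^{i(x-\Delta)}\right)=A\cos(x-\Delta)$. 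This complex form is also the most convenient for the repeated applications of the theorem in the phase-difference derivations that follow.
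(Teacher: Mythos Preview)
Your proof is correct and is the standard derivation. The paper itself does not supply a proof of this theorem; it merely states the result and refers the reader to \cite{weisstein} for the proof, so there is no in-paper argument to compare against. Your expansion via $\cos(x-\delta_i)=\cos x\cos\delta_i+\sin x\sin\delta_i$ followed by polar conversion of $(b_x,b_y)$ is precisely the classical route, and your remark about interpreting $\arctan(b_y/b_x)$ as the two-argument arctangent is apt and matches the paper's own convention (stated just after the theorem) that ``the angular value is chosen to fall in the same interval as the random variables.''
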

Throughout these derivations, when we use the arctangent function $\arctan(\cdot)$, it is understood that the angular value is chosen to fall in the same interval as the random variables (in this case, $[0, 2\pi)$, though other intervals such as $[-\pi,\pi)$ could be chosen). 
We will use the notation $\bs{\phi}_{jk} = [\alpha_{jk}, \beta_{jk}, \gamma_{jk}, \delta_{jk}]^T$ to refer to elements of the pairwise coupling parameter vector.

For the bivariate torus graph model, we derive the distribution of phase differences to compare with bivariate phase coupling measures, which depend on the distribution of phase differences.
Let $\theta = X_1 - X_2$ be a random variable with support $(-2\pi, 2\pi)$ (as $X_1 \in [0, 2\pi)$ and $X_2 \in [0, 2\pi)$)  and let $p_{X_1,X_2}(x_1,x_2)$ denote the bivariate phase difference model density. 
Applying the change of variables $\theta = X_1-X_2$ and trigonometric identity $\sin(\theta)=\cos\left(\theta-\tfrac{\pi}{2}\right)$ yields
\begin{align*}
	\begin{split}
		p_{\theta,X_2}(\theta,x_2) &= p_{X_1,X_2}(\theta+x_2,x_2) \\ &\propto \exp\left\{ \kappa_1 \cos(x_2-(\mu_1-\theta)) + \kappa_2 \cos(x_2-\mu_2)  \right\} \times \\
		&\hspace{2em}\exp\left\{ \alpha_{12}\cos(\theta)+ \beta_{12}\cos\left(\theta-\tfrac{\pi}{2}\right)\right\}.
	\end{split}
\end{align*}
Applying Theorem \ref{thm:harmonicaddition} to each factor,  
\begin{align*}
	p_{X_1,X_2}(\theta+x_2,x_2) \propto  \exp\left\{ A_1\cos(\theta-\Delta_1)\right\} \exp\left\{ A_2\cos(x_2-\Delta_2) \right\}
\end{align*}
where 
\begin{align*}
	A_1 &= \sqrt{\alpha_{12}^2+\beta_{12}^2}\\
	\Delta_1 &= \arctan\left(\tfrac{\beta_{12}}{\alpha_{12}}\right) \\
	A_2(\theta) &= \sqrt{\kappa_1^2+\kappa_2^2+2\kappa_1\kappa_2\cos(\theta-(\mu_1-\mu_2))} \\
	\Delta_2(\theta) &= \arctan\left(\tfrac{\kappa_1\sin(\mu_1-\theta) + \kappa_2\sin(\mu_2)}{\kappa_1\cos(\mu_1-\theta) + \kappa_2\cos(\mu_2)}\right),
\end{align*}
and we use the notation $A_2(\theta), \Delta_2(\theta)$ to indicate that these are functions of $\theta$.

To obtain the marginal density of $\theta$, we need to integrate over $x_2$ and also wrap the resulting distribution back to the support $[0, 2\pi)$ (though we could choose a different support of length $2\pi$, such as $[-\pi,\pi)$, if desired). 
Notice that $p_{X_1,X_2}(\theta+x_2,x_2)$ has constraints $X_1, X_2 \in [0,2\pi)$, implying that $0 \le \theta+X_2 < 2\pi$ so $-\theta \le X_2 < 2\pi-\theta$.
This means that when $\theta < 0$, $X_2 \in [-\theta, 2\pi)$ and when $\theta > 0$, $X_2 \in [0,2\pi-\theta)$, so the marginal distribution of $\theta$ is defined piecewise:
\begin{align*}
	p_\theta(\theta) \propto \left\{  
	\begin{array}{ll} 
		\mathbbm{1}_{(\theta \in [-2\pi, 0])} g(\theta) \int_{-\theta}^{2\pi} \exp\left\{ A_2(\theta)\cos(x_2-\Delta_2(\theta)) \right\} \, dx_2  \\
		\mathbbm{1}_{(\theta \in [0, 2\pi])} g(\theta) \int_{0}^{2\pi-\theta} \exp\left\{ A_2(\theta)\cos(x_2-\Delta_2(\theta)) \right\} \, dx_2
	\end{array} \right.
\end{align*}
where $g(\theta) =  \exp\left\{ A_1\cos(\theta-\Delta_1)\right\}$. 

Define $W = \theta \, (\text{mod} \, 2\pi)$ to be the wrapped version of $\theta$ so that $W \in [0,2\pi)$ has the wrapped distribution 
\begin{align*}
	p_W(w) &= p_\theta(w) + p_\theta(w-2\pi) \\
	&\propto g(w) \int_{0}^{2\pi-w} \exp\left\{ A_2(w) \cos(x_2-\Delta_2(w) ) \right\} \, dx_2 \\
	&\hspace{1em}+ g(w-2\pi) \int_{2\pi-w}^{2\pi} \exp\left\{ A_2(w-2\pi)\cos(x_2-\Delta_2(w-2\pi)) \right\} \, dx_2
\end{align*}
Using the fact that $g$, $A_2$, and $\Delta_2$ are $2\pi$-periodic functions and the definition of $I_0$ (the modified Bessel function of the first kind), we obtain
\begin{align*}
	p_W(w) &\propto g(w) \int_{0}^{2\pi} \exp\left\{ A_2(w)\cos(x_2-\Delta_2(w)) \right\} \, dx_2 = g(w) I_0(A_2(w)).
\end{align*}
Thus, $g(w)$ is the direct coupling term and $f(w) = I_0(A_2(w))$ is the marginal concentration term.

The derivation for phase differences from the trivariate torus graph model uses similar techniques.
Consider a trivariate torus graph with $\kappa_1=\kappa_2=\kappa_3=0$ for simplicity; applying trigonometric identities yields
\begin{align*}
	\begin{split}
	p(x_1,x_2,x_3)&\propto \exp \left\{ \sum_{(i,j)\in E} 
	\begin{bmatrix} 
	    \alpha_{ij} \\ 
	    \beta_{ij} \\ 
	    \gamma_{ij} \\ 
	    \delta_{ij}
	\end{bmatrix}^T 
	\begin{bmatrix} 
	    \cos (x_i - x_j) \\  
	    \sin (x_i - x_j) \\ 
	    \cos (x_i + x_j) \\ 
	    \sin (x_i + x_j)
	\end{bmatrix} \right\} \\
	&=\exp \left\{ \sum_{(i,j)\in E} \begin{bmatrix} 
	    \alpha_{ij} \\ 
	    \beta_{ij} \\ 
	    \gamma_{ij} \\
	    \delta_{ij}
	\end{bmatrix}^T 
	\begin{bmatrix} 
	    \cos (x_i - x_j-0) \\  
	    \cos (x_i - x_j-\pi/2) \\ 
	    \cos (x_i + x_j-0) \\ 
	    \cos (x_i + x_j-\pi/2)
	\end{bmatrix} \right\}	
	\end{split}
\end{align*}
where $E=\{ (1,2),\ (1,3), \ (2,3) \}$. 
Apply the variable transformation $\theta_{12}=x_1-x_2$ and expand the expression above:
\begin{align*}
	\begin{split}
	p(\theta_{12},x_2,x_3)&\propto \exp \left\{ 
	\begin{bmatrix} 
	    \alpha_{12} \\ 
	    \beta_{12} \\ 
	    \gamma_{12} \\ \delta_{12}
	\end{bmatrix}^T 
	\begin{bmatrix} 
	    \cos (\theta_{12}-0) \\  
	    \cos (\theta_{12}-\pi/2) \\ 
	    \cos (\theta_{12}+2x_2-0) \\ 
	    \cos (\theta_{12}+2x_2-\pi/2)
	\end{bmatrix} \right\} \times \\
	&\hspace{2em}\exp \left\{ 
	\begin{bmatrix} 
	    \alpha_{13} \\ 
	    \beta_{13} \\ 
	    \gamma_{13} \\ 
	    \delta_{13}
	\end{bmatrix}^T 
	\begin{bmatrix} 
	    \cos (x_1-x_3-0) \\  
	    \cos (x_1-x_3-\pi/2) \\ 
	    \cos (x_1+x_3-0) \\ 
	    \cos (x_1+x_3-\pi/2)
	\end{bmatrix} \right\} \times \\
	&\hspace{2em}\exp \left\{ 
	\begin{bmatrix} 
	    \alpha_{23} \\ 
	    \beta_{23} \\ 
	    \gamma_{23} \\ 
	    \delta_{23}
	\end{bmatrix}^T 
	\begin{bmatrix} 
	    \cos (x_2-x_3-0) \\  
	    \cos (x_2-x_3-\pi/2) \\ 
	    \cos (x_2+x_3-0) \\ 
	    \cos (x_2+x_3-\pi/2)
	\end{bmatrix} \right\}\\
	\end{split}			       
\end{align*}

To get the marginal distribution of $\theta_{12}$, we need to integrate out other variables, which is not tractable analytically for the full torus graph model, so we consider the phase difference model which corresponds to setting $\gamma = \delta = 0$ for all pairs. 
This has the effect of making the density depend only on phase differences.
\begin{align*}
	\begin{split}
		p(\theta_{12},x_2,x_3)&\propto \exp \left\{ 
		\begin{bmatrix} 
		    \alpha_{12} \\ 
		    \beta_{12} 
		\end{bmatrix}^T 
		\begin{bmatrix} 
		    \cos (\theta_{12}-0) \\  
		    \cos (\theta_{12}-\pi/2)
		\end{bmatrix} \right\} \times \\
		&\hspace{2em}\exp \left\{ 
		\begin{bmatrix} 
		    \alpha_{13} \\ 
		    \beta_{13} 
		\end{bmatrix}^T 
        \begin{bmatrix} 
            \cos (x_1-x_3-0) \\  
            \cos (x_1-x_3-\pi/2) 
        \end{bmatrix} \right\} \times \\
        &\hspace{2em}\exp \left\{ 
        \begin{bmatrix} 
            \alpha_{23} \\ 
            \beta_{23} 
        \end{bmatrix}^T 
		\begin{bmatrix} 
		    \cos (x_2-x_3-0) \\  
		    \cos (x_2-x_3-\pi/2)
		\end{bmatrix} \right\}.
	\end{split}
\end{align*}
Similarly to the bivariate case, we apply Theorem \ref{thm:harmonicaddition} to each factor; the first factor, $g(\theta_{12}) =  \exp\left\{ A_1\cos(\theta_{12}-\Delta_1)\right\}$, has the same form as in the bivariate case and represents the direct coupling between $X_1$ and $X_2$.
The second factor may also be written as
\begin{align*}
	\exp \left\{ A_{13}\cos(x_1-x_3- \Delta_{13}\right\}
\end{align*}
where $A_{13} = \sqrt{\alpha_{13}^2 + \beta_{13}^2}$ and $\Delta_{13} = \arctan(\beta_{13}/\alpha_{13})$,
and the third factor may also be written as
\begin{align*}
	\exp \left\{ A_{23}\cos(x_2-x_3- \Delta_{23}\right\},
\end{align*}
where $A_{23} = \sqrt{\alpha_{23}^2 + \beta_{23}^2}$ and $\Delta_{23} = \arctan(\beta_{23}/\alpha_{23})$.  
Next we apply Theorem \ref{thm:harmonicaddition} again to combine the second and third factors:
\begin{align*}
	p(\theta_{12},x_3)\propto \exp\left\{ A_1\cos(\theta_{12}-\Delta_1)\right\} \times \exp \left\{ A_{3}(\theta_{12})\cos(x_3- \Delta_{3}) \right\},
\end{align*}
where $A_3(\theta_{12})$ is
\begin{align*}
	A_3(\theta_{12}) = &\Bigg[\Big(   A_{13}\cos(x_1-\Delta_{13}) + A_{23}\cos(x_2-\Delta_{23}) \Big)^2 \\
	&\hspace{1em}+ \Big( A_{13}\sin(x_1-\Delta_{13}) + A_{23}\sin(x_2-\Delta_{23})  \Big)^2 \Bigg]^{1/2}.
\end{align*}
Because we will integrate out $x_3$, the form of $\Delta_3$ is not important (it will not affect the integral on this circular domain). 

Expanding the squares, simplifying, and using a trigonometric sum identity yields
\begin{align}
	A_3(\theta_{12}) = \sqrt{\alpha_{13}^2+\beta_{13}^2+\alpha_{23}^2+\beta_{23}^2+2t\cos(\theta_{12}-u)} \label{eq:a2triv}
\end{align}
where 
\begin{align*}
	t &= \sqrt{(\alpha_{13}^2+\beta_{13}^2)(\alpha_{23}^2+\beta_{23}^2)} \\
	u &= \Delta_{13}-\Delta_{23} = \arctan\left(\frac{\beta_{13}}{\alpha_{13}}\right) - \arctan\left(\frac{\beta_{23}}{\alpha_{23}}\right).
\end{align*}

Similar to the bivariate case, we integrate over $x_3$ and wrap the resulting distribution to obtain the marginal distribution of the wrapped phase difference $W = \theta_{12} \, (\text{mod} \, 2\pi)$:
\begin{align*}
	\begin{split}
	p(w) &\propto \exp \left\{ A_1 \cos \left( w - \Delta_1 \right) \right\} \int_0^{2\pi} \exp \left\{ A_3(w) \cos \left( x_3 - \Delta_3 \right) \right\} \, dx_3 \\
	&\propto \exp \left\{ A_1 \cos \left( w - \Delta_1\right) \right\} I_0(A_3(w) )
	\end{split}
\end{align*}
where $A_1 = \sqrt{\alpha_{12}^2 + \beta_{12}^2}$, $\Delta_1 = \arctan(\beta_{12}/\alpha_{12})$, and $A_3(w)$ is given in Equation \ref{eq:a2triv}.
Thus, $g(w)$ is the direct coupling term, and $h(w)$ is the indirect coupling term.

\subsection{Proof of Theorem \ref{thm:scorematch} (Score matching estimators for torus graphs)}
\label{sec:s5}
Let $p_\mbf{X}(\mbf{x})$ be the unknown $d$-dimensional circular data density and $p(\mbf{x};\bs{\phi}) = \frac{1}{Z(\bs{\phi})} q(\mbf{x};\bs{\phi})$ be a $d$-dimensional model density with parameter vector $\bs{\phi} \in \mathbb{R}^m$.
Define the log model density gradient  $\bs{\psi}:[0,2\pi)^d \rightarrow \mathbb{R}^d$ as $\bs{\psi}(\mbf{x};\bs{\phi}) = \nabla_\mbf{x} \log q(\mbf{x};\bs{\phi})$; similarly, let $\bs{\psi}_\mbf{X}(\mbf{x}) = \nabla_\mbf{x} \log p_\mbf{X}(\mbf{x})$.

To prove Theorem \ref{thm:scorematch} in the main text, make the following regularity assumptions:
\begin{enumerate}[label=\Alph*.]
	\item For all $i \in \left\{1,...,d\right\}$, $\bs{\psi}(\mbf{x};\bs{\phi})$ is differentiable with respect to $\mbf{x}_i$ on $[0,2\pi)$.
	\item For all $\bs{\phi}$, $E_\mbf{x} \left[ || \bs{\psi}(\mbf{x};\bs{\phi}) ||^2\right]$ and $E_\mbf{x} \left[ || \bs{\psi}_\mbf{X}(\mbf{x})  ||^2\right]$ are finite.
\end{enumerate}
These assumptions clearly hold for torus graphs as the log density is comprised of finite linear combinations of sine and cosine functions of $\mbf{x}$, each of which is infinitely differentiable with derivatives bounded within $[-1, 1]$.
Note that we need one less assumption than the original formulation of score matching in \cite{Hyvarinen:2005wb} due to the circular nature of the density.
\begin{proof}[Proof of Theorem \ref{thm:scorematch}]

First, we show that the score matching objective function only depends on the unknown data density through an expectation.

Expanding the squared difference gives
\begin{align*}
	\begin{split}
	J(\bs{\phi}) &= 
	\int_0^{2\pi} p_\mbf{X}(\mbf{x}) \left[ \tfrac{1}{2} || \nabla_\mbf{x} \log p_\mbf{X}(\mbf{x}) ||_2^2 \right] \; d\mbf{x}  \\
	&\hspace{2em}+\int_0^{2\pi} p_\mbf{X}(\mbf{x}) \left[ \tfrac{1}{2} || \nabla_\mbf{x} \log q(\mbf{x};\bs{\phi}) ||_2^2  \right] \; d\mbf{x}  \\
	&\hspace{2em}-\int_0^{2\pi}  p_\mbf{X}(\mbf{x}) [\nabla_\mbf{x} \log q(\mbf{x};\bs{\phi})]^T [\nabla_\mbf{x} \log p_\mbf{X}(\mbf{x})]\, d\mbf{x}.
	\end{split}
\end{align*}
The first term does not depend on $\bs{\phi}$ and the second term is already in terms of an expectation over the data density, so we focus now on the third term (call it $A$):
\begin{align*}
	A &= - \int_0^{2\pi} p_\mbf{X}(\mbf{x}) \left[ \sum_{i=1}^d \bs{\psi}_i(\mbf{x};\bs{\phi})\bs{\psi}_{\mbf{X},i}(\mbf{x}) \right] \; d\mbf{x} \\
	&= - \sum_{i=1}^d \int_0^{2\pi} \left[ \int_0^{2\pi} p_\mbf{X}(\mbf{x}) \bs{\psi}_i(\mbf{x};\bs{\phi})\bs{\psi}_{\mbf{X},i}(\mbf{x}) \; d\mbf{x}_i \right] \; d\mbf{x}_{-i} \\
	&= - \sum_{i=1}^d \int_0^{2\pi}  \left[ \int_0^{2\pi}  p_\mbf{X}(\mbf{x}) \frac{\partial}{\partial \mbf{x}_i} \log p_\mbf{X}(\mbf{x}) \bs{\psi}_i(\mbf{x};\bs{\phi}) \; d\mbf{x}_i \right] \; d\mbf{x}_{-i} \\
	&= - \sum_{i=1}^d \int_0^{2\pi}  \left[ \int_0^{2\pi}  \frac{\partial}{\partial \mbf{x}_i} p_\mbf{X}(\mbf{x}) \bs{\psi}_i(\mbf{x};\bs{\phi}) \; d\mbf{x}_i \right] \; d\mbf{x}_{-i}.
\end{align*}
Applying integration by parts, the inner integral becomes 
\begin{align*}
	p_\mbf{X}(\mbf{x}) \bs{\psi}_i(\mbf{x};\bs{\phi}) \Big|^{\mbf{x}_i=2\pi} _{\mbf{x}_i=0} - \int_0^{2\pi} p_\mbf{X}(\mbf{x}) \frac{\partial}{\partial \mbf{x}_i} \left(\bs{\psi}_i(\mbf{x};\bs{\phi}) \right) \; d\mbf{x}_i.
\end{align*}
Notice that because the variables are circular on $[0, 2\pi)$, 
\begin{align*}
	p_\mbf{X}(\mbf{x}) \big|_{\mbf{x}_i=0} &= p_\mbf{X}(\mbf{x}) \big|_{\mbf{x}_i=2\pi} \\
	\bs{\psi}_i(\mbf{x};\bs{\phi}) \big|_{\mbf{x}_i=0} &= \bs{\psi}_i(\mbf{x};\bs{\phi}) \big|_{\mbf{x}_i=2\pi}
\end{align*}
Therefore, $p_\mbf{X}(\mbf{x}) \bs{\psi}_i(\mbf{x};\bs{\phi}) \big|^{\mbf{x}_i=2\pi} _{\mbf{x}_i=0} = 0$, so $A$ becomes
\begin{align*}
	A &= - \sum_{i=1}^d \int_0^{2\pi}  \left[ - \int_0^{2\pi} p_\mbf{X}(\mbf{x}) \frac{\partial}{\partial \mbf{x}_i} \left(\bs{\psi}_i(\mbf{x};\bs{\phi}) \right) \; d\mbf{x}_i \right] \; d\mbf{x}_{-i} \\
	&= \int_0^{2\pi} p_\mbf{X}(\mbf{x}) \left[ \sum_{i=1}^d \frac{\partial}{\partial \mbf{x}_i} \left(\bs{\psi}_i(\mbf{x};\bs{\phi}) \right) \right] \; d\mbf{x}
\end{align*}
Therefore, the score matching objective is
\begin{align}
	\begin{split}
 	J(\bs{\phi}) &= C + \int_0^{2\pi} p_\mbf{X}(\mbf{x}) \left[ \tfrac{1}{2} || \bs{\psi}(\mbf{x};\bs{\phi}) ||^2 \right] \; d\mbf{x} \\
 	&\hspace{2em}+\int_0^{2\pi} p_\mbf{X}(\mbf{x}) \left[ \sum_{i=1}^d \frac{\partial}{\partial \mbf{x}_i} \left(\bs{\psi}_i(\mbf{x};\bs{\phi}) \right) \right] \; d\mbf{x} \\
	&= C + E_{\mbf{x}} \left\{ \tfrac{1}{2} || \bs{\psi}(\mbf{x};\bs{\phi}) ||^2 + \sum_{i=1}^d \frac{\partial}{\partial \mbf{x}_i} \left(\bs{\psi}_i(\mbf{x};\bs{\phi}) \right)\right\}
	\end{split} \label{eq:smobj}
\end{align}
where $C$ does not depend on $\bs{\phi}$ and may be ignored without affecting the minima of the objective function.
This coincides with the form of score matching given in \cite{Hyvarinen:2005wb} except with the integral over the circular domain $[0, 2\pi)$.

Next, we show the explicit form of the score matching estimator for torus graphs.
As shown in \cite{forbes2015linear,yu2018graphical}, for exponential families, this score matching estimator is quadratic in the parameters.
Specifically, the torus graph density in Theorem \ref{thm:tgnatural} has a log density of the form
\begin{align*}
	\log q(\mbf{x};\bs{\phi}) = \bs{\phi}^T \mbf{S}(\mbf{x})
\end{align*}
where $\bs{\phi}$ are vectors of length $m = 2 d^2$ (the number of sufficient statistics). 

Therefore,
\begin{align*}
	\bs{\psi}(\mbf{x};\bs{\phi}) = \bs{\phi}^T \mbf{D}(\mbf{x})
\end{align*}
where the Jacobian $\mbf{D}(\mbf{x})$ is $m \times d$ with $i,j$th element $\frac{\partial}{\partial \mbf{x}_j} \mbf{S}_i$. 
Thus the first term inside the expectation in the score matching objective of Equation \ref{eq:smobj} may be written
\begin{align*}
	\tfrac{1}{2} || \bs{\psi}(\mbf{x};\bs{\phi}) ||^2 = \tfrac{1}{2} \bs{\phi}^T \mbf{D}(\mbf{x}) \mbf{D}(\mbf{x})^T \bs{\phi} \equiv \tfrac{1}{2} \bs{\phi}^T \bs{\Gamma}(\mbf{x}) \bs{\phi}.
\end{align*}
The elements of $\mbf{D}(\mbf{x})$ correspond to partial derivatives of the sufficient statistics with respect to the data.
The derivatives of the univariate sufficient statistics $\mbf{S}^1$ are given by
\begin{align*}
    \frac{\partial}{\partial x_\ell} \cos(x_j) &= \left\{ \begin{array}{rr} 
        -\sin(x_j), & \ell = j \\
        0, & \ell \neq j  
    \end{array}\right. , \\
    \frac{\partial}{\partial x_\ell} \sin(x_j) &= \left\{ \begin{array}{rr} 
        \hphantom{-}\cos(x_j), & \ell = j \\
        0, & \ell \neq j  
    \end{array}\right. .
\end{align*}
Similarly, the derivatives of the pairwise sufficient statistics $\mbf{S}^2$ may be calculated as
\begin{align*}
    \frac{\partial}{\partial x_\ell} \cos(x_j-x_k) &= \left\{ \begin{array}{rl} 
        -\sin(x_j-x_k), & \; \ell = j \\
        \sin(x_j-x_k), & \; \ell = k \\
        0, & \;  \ell \not\in \{j,k\}  
    \end{array}\right. , \\
    \frac{\partial}{\partial x_\ell} \sin(x_j-x_k) &= \left\{ \begin{array}{rl} 
        \cos(x_j-x_k), & \; \ell = j \\
        -\cos(x_j-x_k), & \; \ell = k \\
        0, & \; \ell \not\in \{j,k\} 
    \end{array}\right. , \\
    \frac{\partial}{\partial x_\ell} \cos(x_j+x_k) &= \left\{ \begin{array}{rl} 
        -\sin(x_j+x_k), & \; \ell \in \{j,k\} \\
        0, & \; \ell \not\in \{j,k\}   
    \end{array}\right. , \\
    \frac{\partial}{\partial x_\ell} \sin(x_j+x_k) &= \left\{ \begin{array}{rl} 
        \hphantom{-}\cos(x_j+x_k), & \; \ell \in \{j,k\} \\
        0, & \; \ell \not\in \{j,k\}  
    \end{array}\right. .
\end{align*}

Now we show that the second term inside the expectation in Equation \ref{eq:smobj} may be written simply in terms of the sufficient statitsics.
Notice that the $i$th element of the gradient may be written in terms of columns of the Jacobian:
\begin{align*}
	\bs{\psi}_i(\mbf{x};\bs{\phi}) = \bs{\phi}^T [\mbf{D}(\mbf{x})]_{\cdot,i}
\end{align*}
so that
\begin{align*}
	\frac{\partial}{\partial \mbf{x}_i} \left(\bs{\psi}_i(\mbf{x};\bs{\phi}) \right) = \bs{\phi}^T \frac{\partial}{\partial \mbf{x}_i} [\mbf{D}(\mbf{x})]_{\cdot,i}.
\end{align*}
Therefore, the second term inside the expectation in the score matching objective may be written
\begin{align*}
	\sum_{i=1}^d \frac{\partial}{\partial \mbf{x}_i} \left(\bs{\psi}_i(\mbf{x};\bs{\phi}) \right) = \bs{\phi}^T \left[ \sum_{i=1}^d \frac{\partial}{\partial \mbf{x}_i} [\mbf{D}(\mbf{x})]_{\cdot,i} \right] \equiv \bs{\phi}^T \mbf{H}(\mbf{x})
\end{align*}
where
\begin{align*}
    \mbf{H}(\mbf{x}) = [\mbf{S}^1(\mbf{x}), \,
 2\mbf{S}^2(\mbf{x})]^T.
\end{align*}
This relation holds because all nonzero elements of $\mbf{D}(\mbf{x})$ come from derivatives of sines and cosines; due to the relations $\frac{d}{dx} \cos(x) = -\sin(x)$ and $\frac{d}{dx} \sin(x) = \cos(x)$, taking the derivative again essentially converts the elements back to sufficient statistics.
\end{proof}

\subsection{Proof of Theorem \ref{thm:conditional} (Conditional distributions in torus graphs)}
\label{sec:s6}
We prove that the distribution of one angle conditional on the other angles is von Mises as stated in Theorem \ref{thm:conditional}, enabling the use of Gibbs sampling for drawing samples from the distribution. 
We will use the notation $\bs{\phi}_{jk} = [\alpha_{jk}, \beta_{jk}, \gamma_{jk}, \delta_{jk}]^T$ to refer to elements of the pairwise coupling parameter vector.
\begin{proof}
	Let $c^-_{ij} = \cos(x_i-x_j)$, $c^+_{ij} = \cos(x_i+x_j)$, $s^-_{ij} = \sin(x_i-x_j)$, and $s^+_{ij} = \sin(x_i+x_j)$.
	Factor the torus graph density into terms containing $X_k$ and not containing $X_k$:
	\begin{align}
		\begin{split}
		p(\mbf{x}; \bs{\phi}) = 
			&C(\bs{\phi}) \exp\left\{ \sum_{i \neq k} \kappa_i \cos(x_i - \mu_i) + 
				\sum_{i < j, j \neq k} \begin{bmatrix} \alpha_{ij} \\ \beta_{ij} \\ \gamma_{ij} \\ \delta_{ij}\end{bmatrix}^T 
				\begin{bmatrix} c^-_{ij} \\  s^-_{ij} \\ c^+_{ij} \\ s^+_{ij}\end{bmatrix} \right\} \times \\
				&\hspace{2em}\exp\left\{ \kappa_k \cos(x_k - \mu_k) \right\} \times \\
				&\hspace{2em}\exp \left\{ \sum_{i<k} \begin{bmatrix} \alpha_{ik} \\ \beta_{ik} \\ \gamma_{ik} \\ \delta_{ik}\end{bmatrix}^T 
				\begin{bmatrix} c^-_{ik} \\  s^-_{ik} \\ c^+_{ik} \\ s^+_{ik} \end{bmatrix} 
				+ \sum_{i>k} \begin{bmatrix} \alpha_{ki} \\ \beta_{ki} \\ \gamma_{ki} \\ \delta_{ki}\end{bmatrix}^T 
				\begin{bmatrix} c^-_{ki} \\  s^-_{ik} \\ c^+_{ik} \\ s^+_{ik} \end{bmatrix}\right\}. \label{eq:factdens}
		\end{split}
	\end{align}
	Let the first factor (including the normalization constant) be denoted by $g(\mbf{x}_{-d}; \bs{\phi})$ and the second and third factors be denoted by $f(\mbf{x}; \bs{\phi})$.
	Then the conditional distribution is
	\begin{align*}
		p(x_k | \mbf{x}_{-k}; \bs{\phi}) = \frac{g(\mbf{x}_{-k}; \bs{\phi})f(\mbf{x}; \bs{\phi})}{g(\mbf{x}_{-k}; \bs{\phi}) \int f(\mbf{x}; \bs{\phi}) \, dx_k} = \frac{f(\mbf{x}; \bs{\phi})}{\int f(\mbf{x}; \bs{\phi}) \, dx_k}.
	\end{align*}
	Applying trigonometric identities to the third factor of Equation \ref{eq:factdens} and simplifying, we have
	\begin{align*}
		&\exp \left\{ \sum_{i<k} \begin{bmatrix} \alpha_{ik} \\ \beta_{ik} \\ \gamma_{ik} \\ \delta_{ik}\end{bmatrix}^T 
				\begin{bmatrix} c^-_{ik} \\  s^-_{ik} \\ c^+_{ik} \\ s^+_{ik} \end{bmatrix} 
				+ \sum_{i>k} \begin{bmatrix} \alpha_{ki} \\ \beta_{ki} \\ \gamma_{ki} \\ \delta_{ki}\end{bmatrix}^T 
				\begin{bmatrix} c^-_{ki} \\  s^-_{ik} \\ c^+_{ik} \\ s^+_{ik} \end{bmatrix} \right\} \\
				&= \exp \left\{ \sum_{i<k} \begin{bmatrix} \alpha_{ik} \\ \beta_{ik} \\ \gamma_{ik} \\ \delta_{ik}\end{bmatrix}^T 
				\begin{bmatrix} \cos(x_k-x_i) \\  \cos(x_k - x_i + \pi/2) \\ \cos(x_k + x_i) \\ \cos(x_k + x_i - \pi/2) \end{bmatrix} \right\} \times \\
				&\hspace{2em}\exp \left\{ \sum_{i>k} \begin{bmatrix} \alpha_{ki} \\ \beta_{ki} \\ \gamma_{ki} \\ \delta_{ki}\end{bmatrix}^T 
				\begin{bmatrix}  \cos(x_k-x_i)  \\  \cos(x_k - x_i - \pi/2) \\ \cos(x_k + x_i) \\ \cos(x_k + x_i - \pi/2) \end{bmatrix} \right\} \\
		&= \exp \left\{ \sum_{i \neq k} \begin{bmatrix} \alpha_{ik} \\ \beta_{ik} \\ \gamma_{ik} \\ \delta_{ik}\end{bmatrix}^T 
				\begin{bmatrix} \cos(x_k-x_i) \\  \cos(x_k - x_i +\sgn(i-k)\pi/2) \\ \cos(x_k + x_i) \\ \cos(x_k + x_i - \pi/2) \end{bmatrix} \right\}
	\end{align*}
	where, with slight abuse of notation, we let, for instance, $\alpha_{ik}$ denote either $\alpha_{ik}$ if $i < k$ or $\alpha_{ki}$ if $i > k$, and $\sgn(\cdot)$ is the signum function.
	Now we see $f(\mbf{x}; \bs{\phi})$ is a sum of cosine functions with argument $x_k$, so applying Theorem \ref{thm:harmonicaddition}, we have 
	\begin{align*}
		f(\mbf{x}; \bs{\phi}) = \exp(A \cos(x_k - \Delta))
	\end{align*}
	where $A = \sqrt{b_x^2 + b_y^2}$, $\Delta = \arctan\left(b_y/b_x\right)$, defined as
	\begin{align*}
	 	b_x &= \sum_m L_m \cos(V_m) \\
		b_y &= \sum_m L_m \sin(V_m) \\
		L &= \left[ \kappa_k, \bs{\alpha}_{\cdot,k}, \bs{\beta}_{\cdot,k}, \bs{\gamma}_{\cdot,k}, \bs{\delta}_{\cdot,k} \right] = [ \kappa_k, \bs{\phi}_{\cdot k}] \\
		V &= [\mu_k, \mbf{x}_{-k}, \mbf{x}_{-k}+\sgn(i-k)\tfrac{\pi}{2}, -\mbf{x}_{-k}, -\mbf{x}_{-k}+\tfrac{\pi}{2}] \\
		&= [\mu_k, \mbf{x}_{-k}, \mbf{x}_{-k}+\mbf{h}\tfrac{\pi}{2}, -\mbf{x}_{-k}, -\mbf{x}_{-k}+\tfrac{\pi}{2}],
	\end{align*}
	with, for example, $\bs{\alpha}_{\cdot,k}$ denoting all $\alpha$ parameters involving index $k$ and $\mbf{h}_j = -1$ if $j < k$ and $\mbf{h}_j = 1$ otherwise. 
	Then since $\int f(\mbf{x}; \bs{\phi}) \, dx_k = 2 \pi I_0(A)$ we find that the conditional density is von Mises with concentration $A$ and mean $\Delta$. 
\end{proof}

\subsection{Measures of positive and negative circular dependence}
\label{sec:s7}
The dependence between two circular variables can be measured using a correlation coefficient, $\rho_c$, analogous to the Pearson correlation coefficient for linear analysis \cite{jammalamadaka1988correlation},
\begin{align*}
	\rho_c = \frac{E\{\sin(X_i-\mu_i)\sin(X_j-\mu_j)\}}{\sqrt{\text{Var}(\sin(X_i-\mu_i))\text{Var}(\sin(X_j-\mu_j))}}
\end{align*}
where $\mu$ represents a mean circular direction.
Using variance properties and trigonometric identities we have
\begin{align}
	\rho_c = \frac{E\{\cos(X_i-X_j-(\mu_i-\mu_j))-\cos(X_i+X_j-(\mu_i+\mu_j))\}}{2\sqrt{E\{\sin^2(X_i-\mu_i)\}E\{\sin^2(X_j-\mu_j)\}}} .\label{eq:rhoc}
\end{align}
The first component of the numerator measures the positive correlations from the concentration of $X_i-X_j-(\mu_i-\mu_j)$ and the second component measures the  negative (or \textit{reflectional}) correlations from the concentration of $X_i-(-X_j)-(\mu_i-(-\mu_j))$.
Analogous to the real-valued data, it is important to note that both positive and negative circular correlations are possible and both are needed to fully define circular dependence between two variables.

The numerator of Equation \ref{eq:rhoc} can be rewritten as
\begin{align*}
	E\left\{
\begin{bmatrix}
         \cos(X_i-X_j) \\ \sin(X_i-X_j)
\end{bmatrix}^T 
\begin{bmatrix}
          \alpha \\ \beta
\end{bmatrix} -
\begin{bmatrix}
          \cos(X_i+X_j) \\ \sin(X_i+X_j)
\end{bmatrix}^T 
\begin{bmatrix}
           \gamma \\ \delta
\end{bmatrix}
\right\}  
\end{align*}
where $\alpha=\cos(\mu_i-\mu_j)$, $\beta=\sin(\mu_i-\mu_j)$, $\gamma=\cos(\mu_i+\mu_j)$, $\delta=\sin(\mu_i+\mu_j)$.
This shows that the dependence between angles may be decomposed into a four-term linear combination involving the sines and cosines of the phase differences and phase sums, where the phase difference terms correspond to the positive correlation and the phase sum terms correspond to the negative correlation. 
This corresponds to the phase sum and phase difference terms that appear in the torus graph density, reinforcing the interpretation of the different pairwise coupling parameters as reflecting positive and negative rotational dependence.
To illustrate the distinction between positive and negative rotational dependence, we show bivariate torus graphs with positive, negative, or both kinds of dependence in Figure \ref{fig:bivar_posneg}, and show what trial-to-trial rotational and reflectional covariance would look like in Figure \ref{fig:phasesum}.
For the case of uniform marginal distributions, the circular correlation coefficient becomes \cite{JammSengupta}: 
\begin{align*}
	\rho_c = \frac{R_{Xi-Xj}-R_{Xi+Xj}}{2\sqrt{E\{\sin^2(X_i-\mu_i)\}E\{\sin^2(X_j-\mu_j)\}}}
\end{align*}
where $R_{Xi-Xj} \equiv | E\{\exp(\textbf{i}(X_i-X_j)) \}|$ corresponds to the positive correlation and $R_{Xi+Xj} \equiv | E\{\exp(\textbf{i}(X_i+X_j)) \}|$ corresponds to the negative correlation.
The theoretical Phase Locking Value (PLV), for which an estimator is given in Equation \ref{eq:plv} of the main text, is equal to $R_{Xi-Xj}$. 
This shows that PLV is similar to a measure of positive circular correlation under the assumption of uniform marginal distributions (when the denominator of the circular correlation coefficient would be equal to 1).
\begin{figure}[h]
	\centering
	\includegraphics[width=0.9\linewidth]{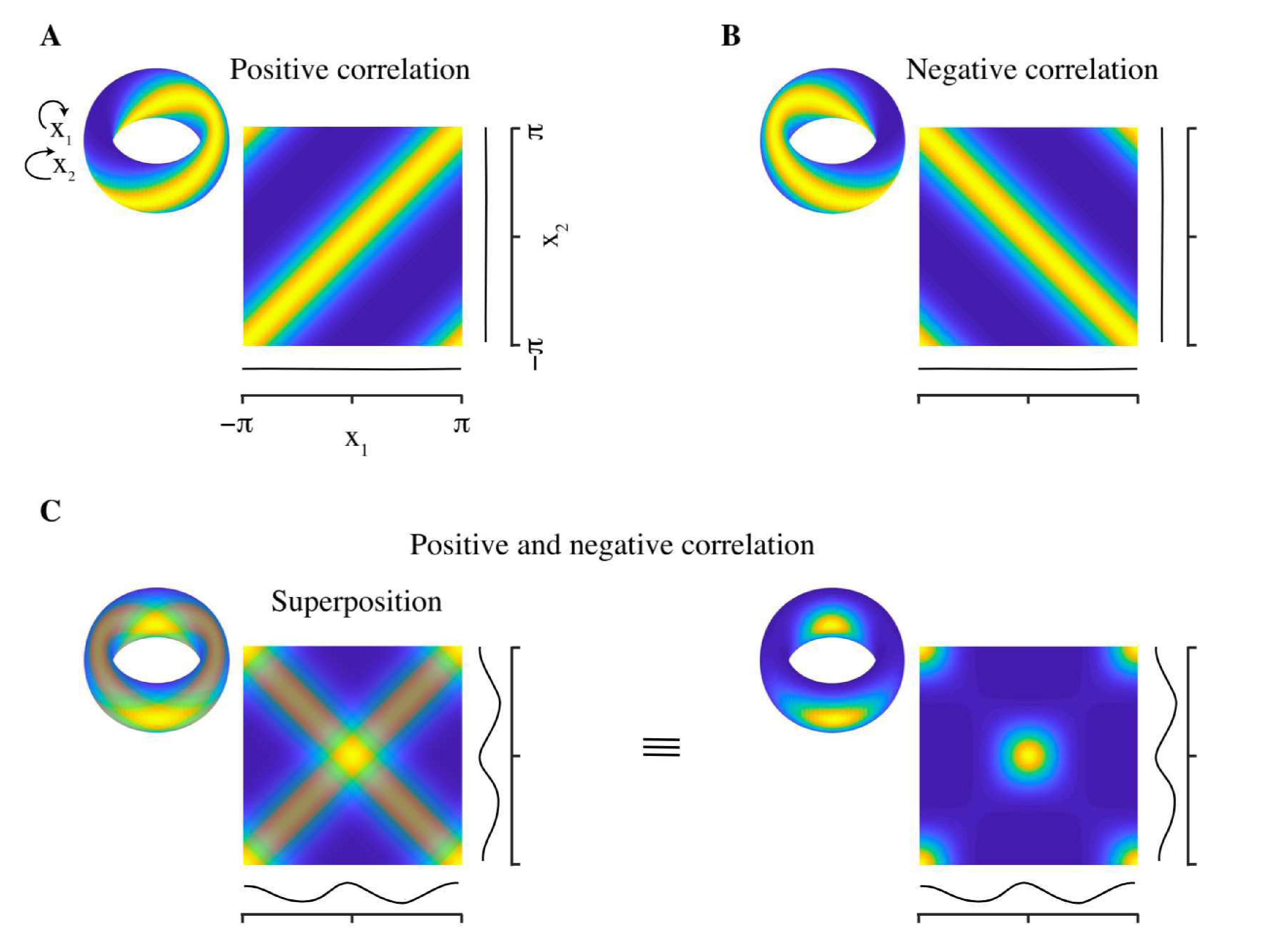}
	\caption{Bivariate torus graph densities with uniform marginal distributions  shown on the torus and flattened on $[-\pi,\pi]$ under positive, negative, or both kinds of circular covariance. (A) Coupling parameters chosen to induce only positive correlation (coupling based on phase differences).  (B) Coupling parameters chosen to induce only negative correlation (coupling based on phase sums). (C) Equal amounts of positive and negative coupling result in a distribution with two isotropic modes; the superposition figure (left) is shown to provide intuition about the resulting distribution (right).}
	\label{fig:bivar_posneg}
\end{figure}

\begin{figure}[h]
	\centering
	\includegraphics[width=1.0\linewidth]{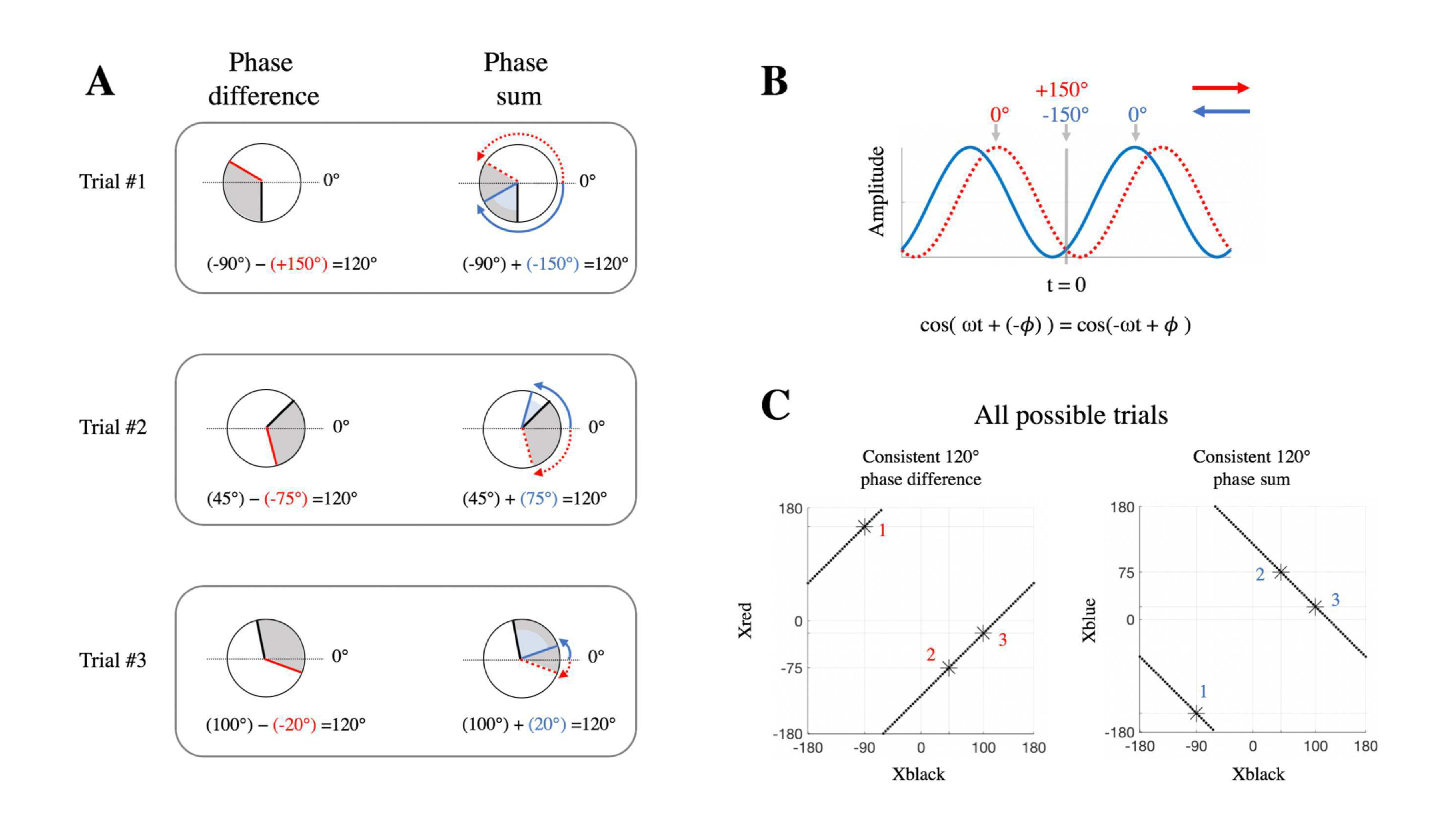}
	\caption{
	Intuition about rotational and reflectional dependence. A) Illustration of $120^\circ$ rotational (positive) and $120^\circ$ reflectional (negative) dependence across three hypothetical observations. 
    Rotational dependence implies a consistent phase offset between oscillations across trials (shaded gray angles) while reflectional dependence implies a consistent phase sum (shaded blue angles) that corresponds to a consistent phase offset between oscillations after one of the oscillations has been reflected with respect to $0^\circ$. 
    B) Example of an oscillation (blue) and its reflection with respect to $0^\circ$ (dashed red). 
    The reflected signal is leading in time, i.e. $\phi= 150^\circ$, whereas the blue signal is lagging in time, i.e. $\phi=-150^\circ$. 
    This demonstrates that we can think of the reflected signal as moving across time in the opposite direction. 
    In neural data this phenomenon could arise, for example, if there was bidirectional communication. 
    C) The lines indicate dependence between phase angles used in panel A, with rotational dependence on the left and reflectional dependence on the right and trials shown in panel A marked with stars. 
    When phases have uniform marginal distributions across trials, but exhibit phase coupling, positive dependence is observed in the bivariate relationship on the left as a line with fixed orientation at $45^\circ$; rotation produces a shift along the anti-diagonal. 
    On the other hand, negative dependence is observed in the bivariate relationship on the right as a line with fixed orientation at $135^\circ$; reflection produces a shift along the diagonal. 
    In the current example, the respective shifts are at $120^\circ$; see Figure \ref{fig:bivar_posneg} to compare with the case of $0^\circ$. 
    Weaker dependence blurs the relationship line but does not change the orientation. 
}
	\label{fig:phasesum}
\end{figure}

\begin{figure}
    \centering
    \includegraphics[width=\linewidth]{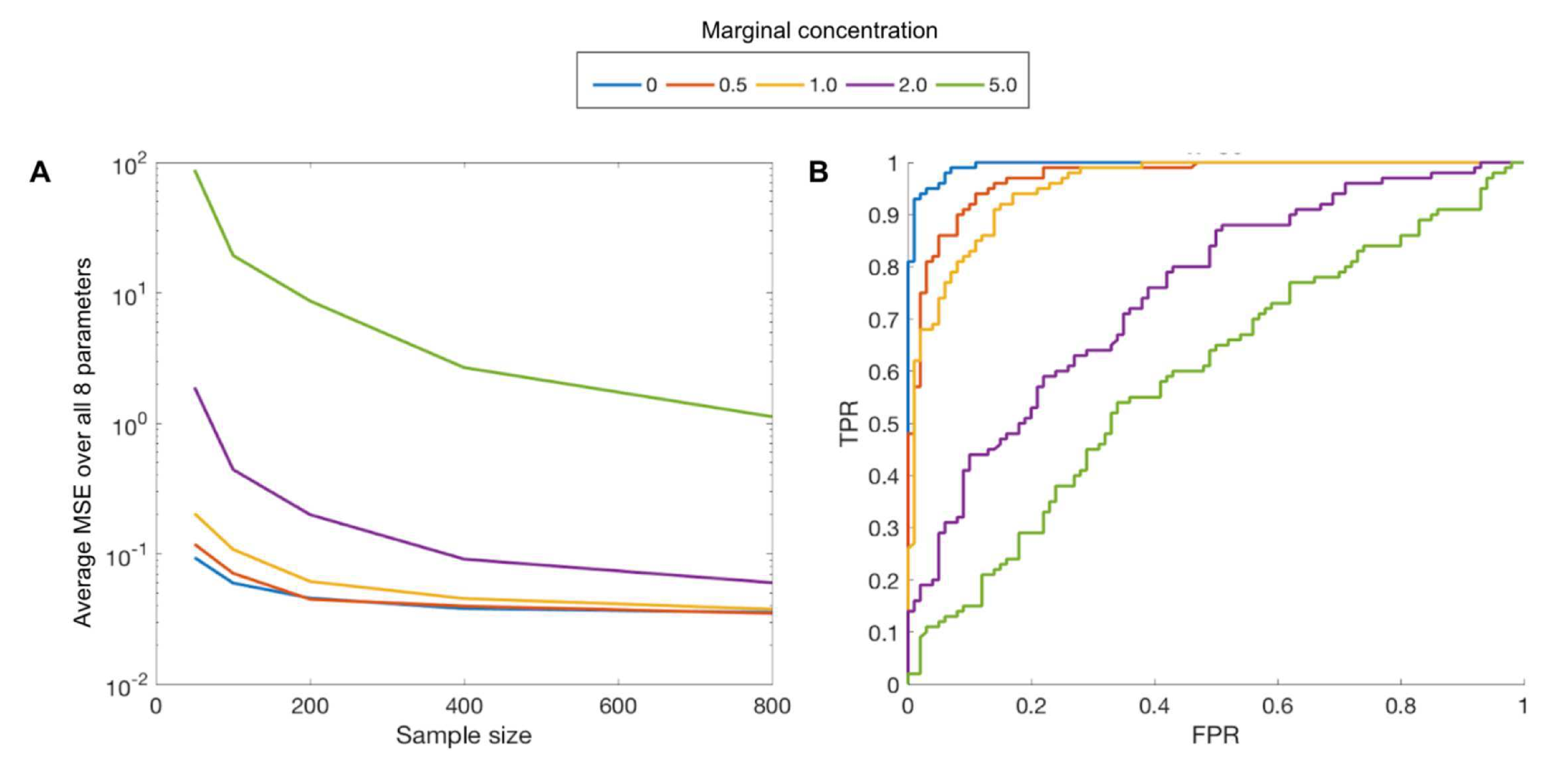}
    \caption{
    In data simulated from a bivariate torus graph, the average MSE over all parameters is shown in panel (A) as a function of sample size and marginal concentration. The MSE is higher overall when marginal concentration is high. (B) ROC curves averaged over 200 simulations, half of which had no edge and half of which had an edge between the variables, as a function of marginal concentration for a fixed sample size (N=50), suggesting that structure recovery is also diminished when high marginal concentration is present.
    }
    \label{fig:bivkappROCMSE}
\end{figure}

\begin{figure}
    \centering
    \includegraphics[width=0.9\linewidth]{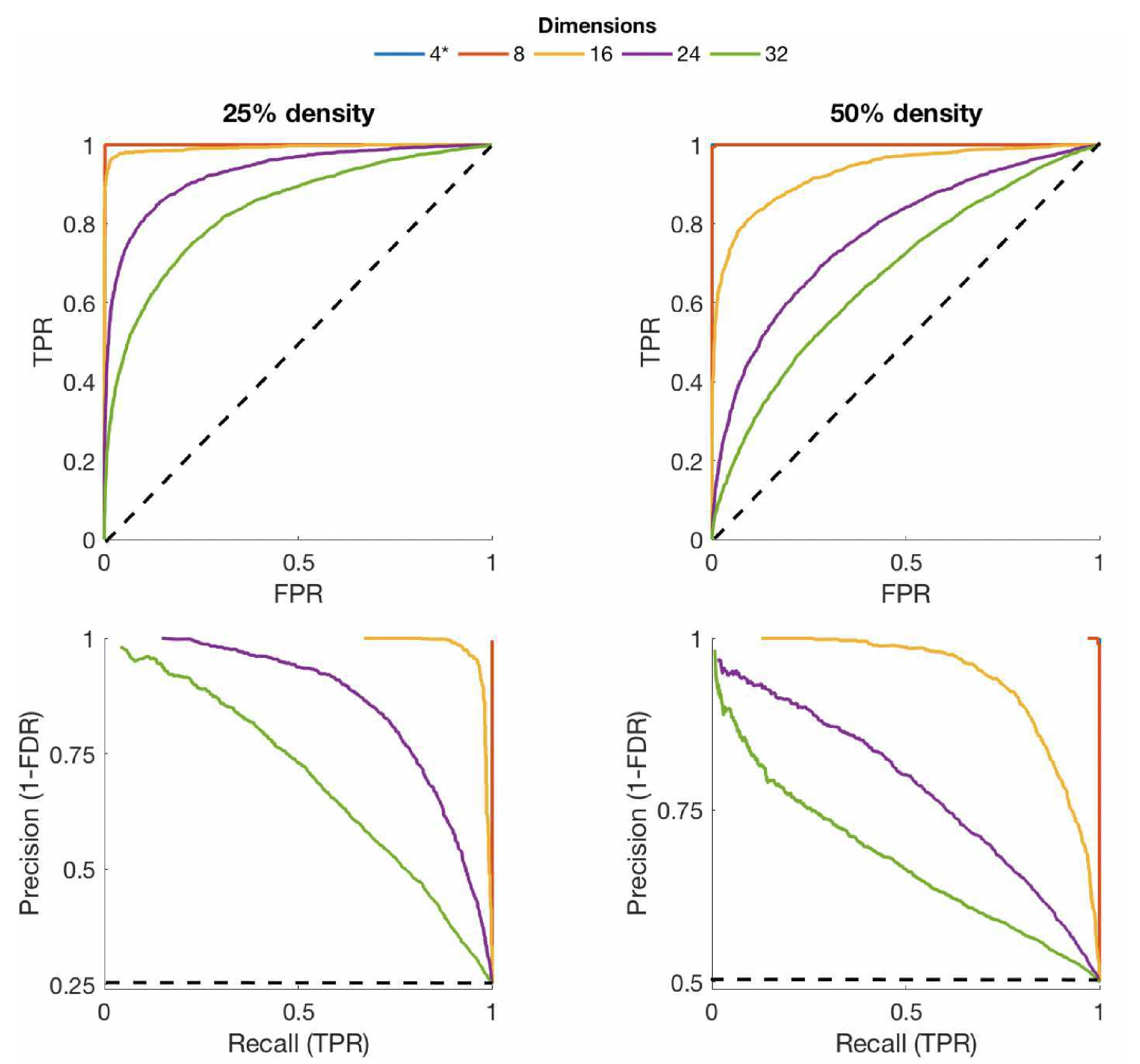}
    \caption{
    Further detail on the simulation results shown in Figure \ref{fig:AUCbydensity} for a sample size of 840 (matching the real LFP data). 
    Top row: ROC curves colored by dimension for two different underlying edge densities (averaged across 30 simulations). 
    Bottom row: averaged precision curves corresponding to the same densities as the top row.
    }
    \label{fig:ROCprecbydim}
\end{figure}

\begin{figure}
    \centering
    \includegraphics[width=0.7\linewidth]{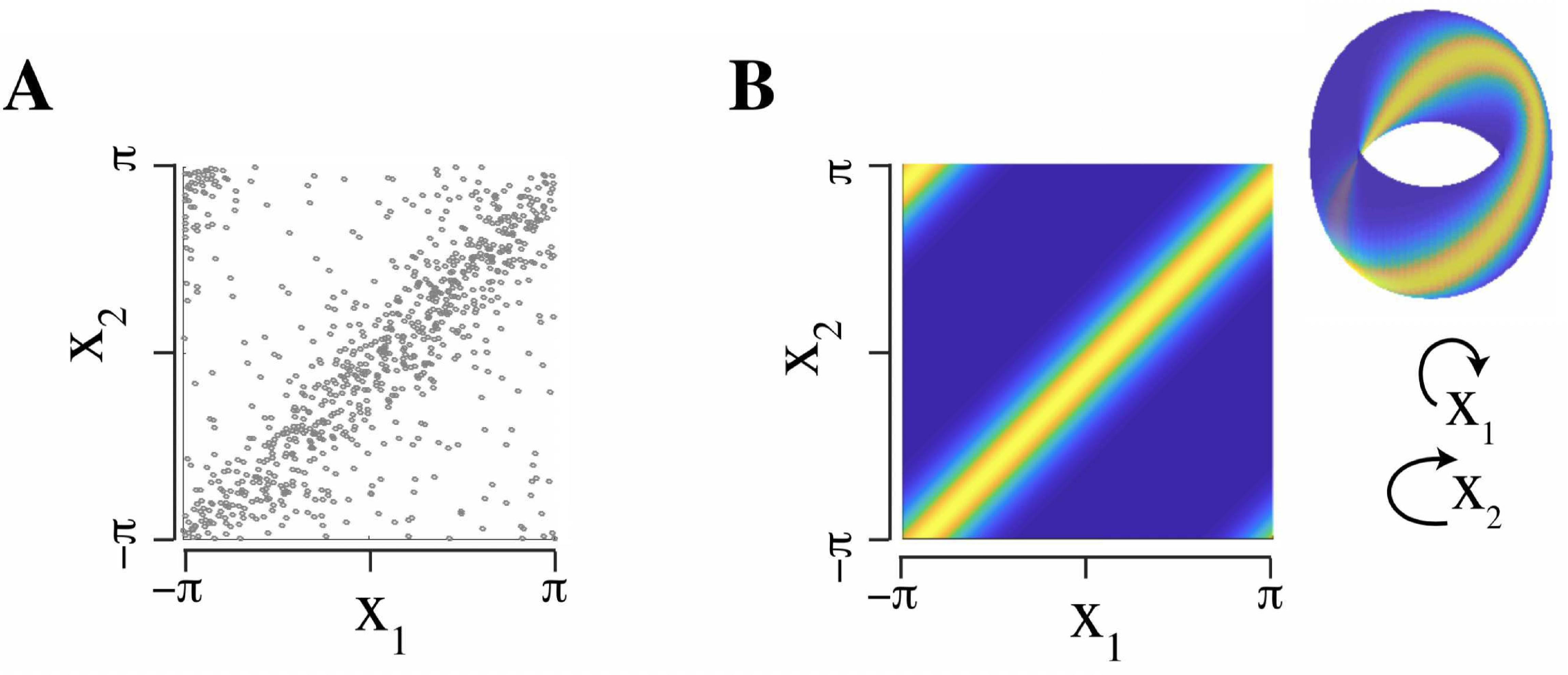}
    \caption{
    (A) Scatter plot of data from two channels in dentate gyrus (DG). The angles follow a pattern of positive dependence similar to the simulated data of Figure \ref{FigRepeatedTrials_part1}, which was used to demonstrate the need for circular wrapping when modeling dependent phase angles. (B) Fitted torus graph density on the plane and torus.
    }
    \label{fig:bivscatterDG}
\end{figure}

\clearpage

\begin{figure}
    \centering
    \includegraphics[width=0.8\linewidth]{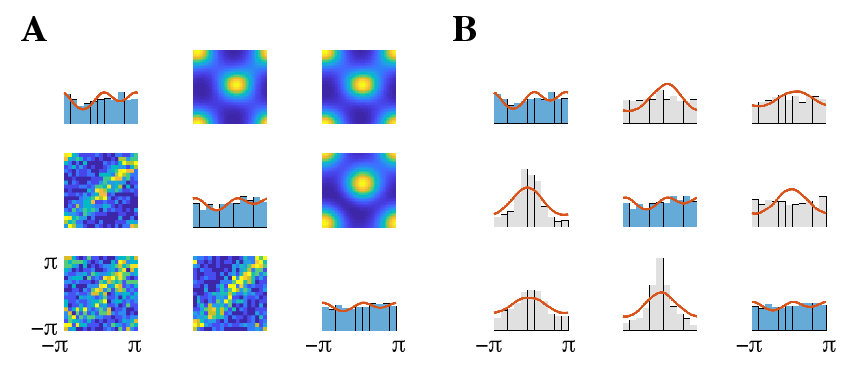}
    \caption{
    Similar to Figure \ref{fig:eda}, but using the sine model as the theoretical distribution. 
    The sine model fails to accurately fit this data set, which is evident in the bivariate dependence and sufficient statistics. 
    (A) Along the diagonal are the marginal distributions of the three phase angles, where the real data is represented by blue histograms and the theoretical marginal densities from the sine model are overlaid as solid red traces. 
    Two-dimensional distributions (off-diagonal) show bivariate relationships, with theoretical densities above the diagonal and real data represented using two-dimensional histograms below the diagonal. 
    The multimodal behavior of the sine model is apparent in the two-dimensional distributions, which do not appear to match the real data.
    (B) Plots along the diagonal same as panel A. 
    Below the diagonal are distributions of pairwise phase differences and above the diagonal are distributions of pairwise phase sums, represented by histograms for the real data and by solid red density plots for the theoretical torus graph model. 
    In contrast to the torus graph model, the sine model fails to accurately capture the distributions of the sufficient statistics from these data.
    }
    \label{fig:edawithsinemodel}
\end{figure}

\begin{figure}[h]
	\centering
	\includegraphics[width=\linewidth]{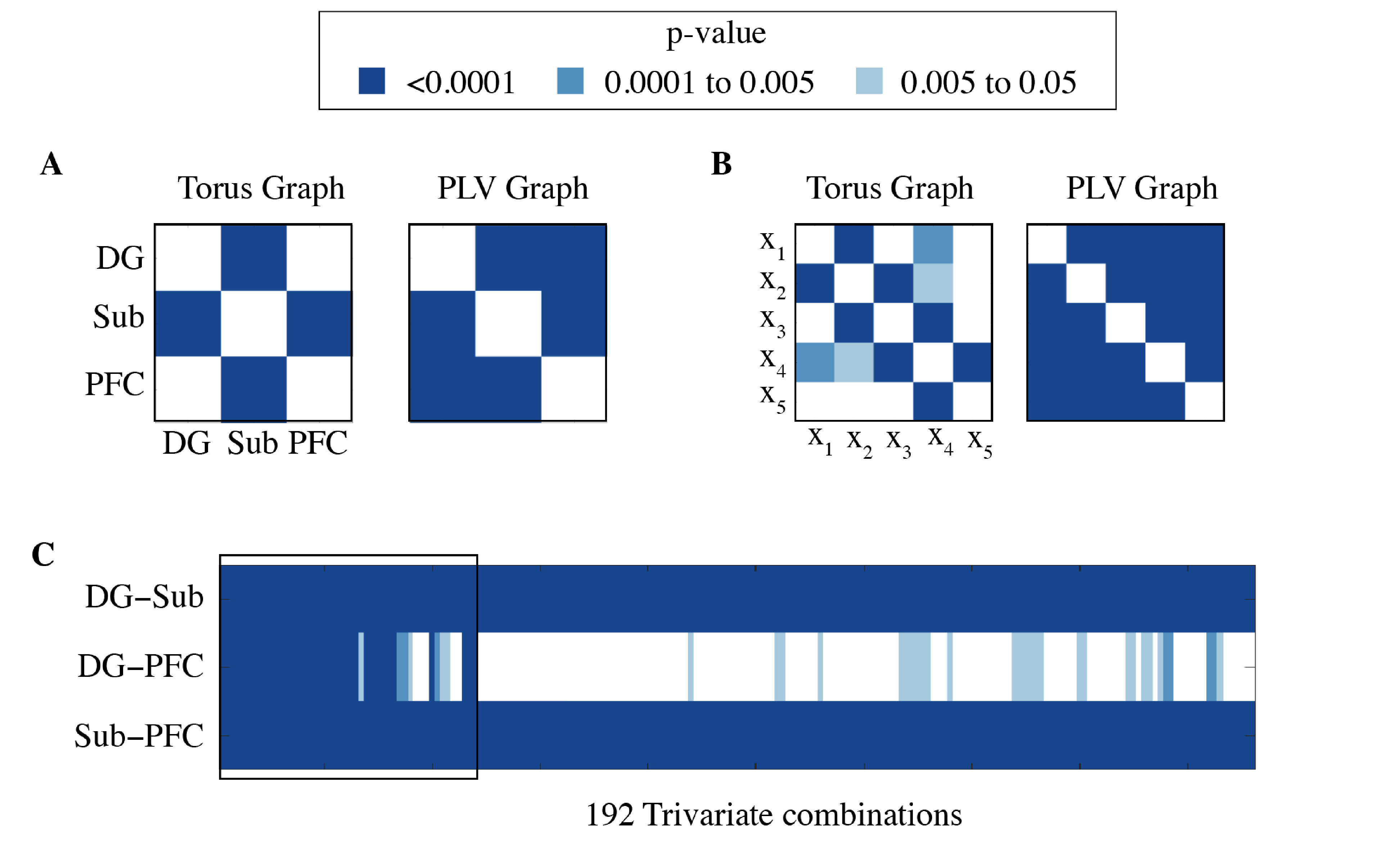}
	\caption{(A) Adjacency matrices for the three-dimensional LFP analysis with entries colored by $p$-value. PLV $p$-values are very small for all connections, while torus graph $p$-values reflect finer structure, such as PFC-Sub coupling that is apparently more salient than PFC-DG coupling. The adjacency matrix for the trivariate network is a representative combination of channels reflecting the pattern that dominates in all trivariate combinations. (B) Same as (A) but for the five-dimensional LFP analysis, where the torus graph reveals nearest-neighbor structure along the linear probe in CA3 that PLV misses. (C) Edgewise $p$-values for each edge in each possible trivariate graph (composed of each combination of electrodes from each of the three areas). Note that channels 6 and 7 (boxed region) of DG are on the border between DG and CA3 and may be picking up signals from CA3; omitting these channels gives stronger evidence of an overall lack of connections between DG and PFC (corresponding to the adjacency matrix in (A)).}
	\label{fig:three_five_var}
\end{figure}

\begin{figure}[h]
	\centering
	\includegraphics[width=0.9\linewidth]{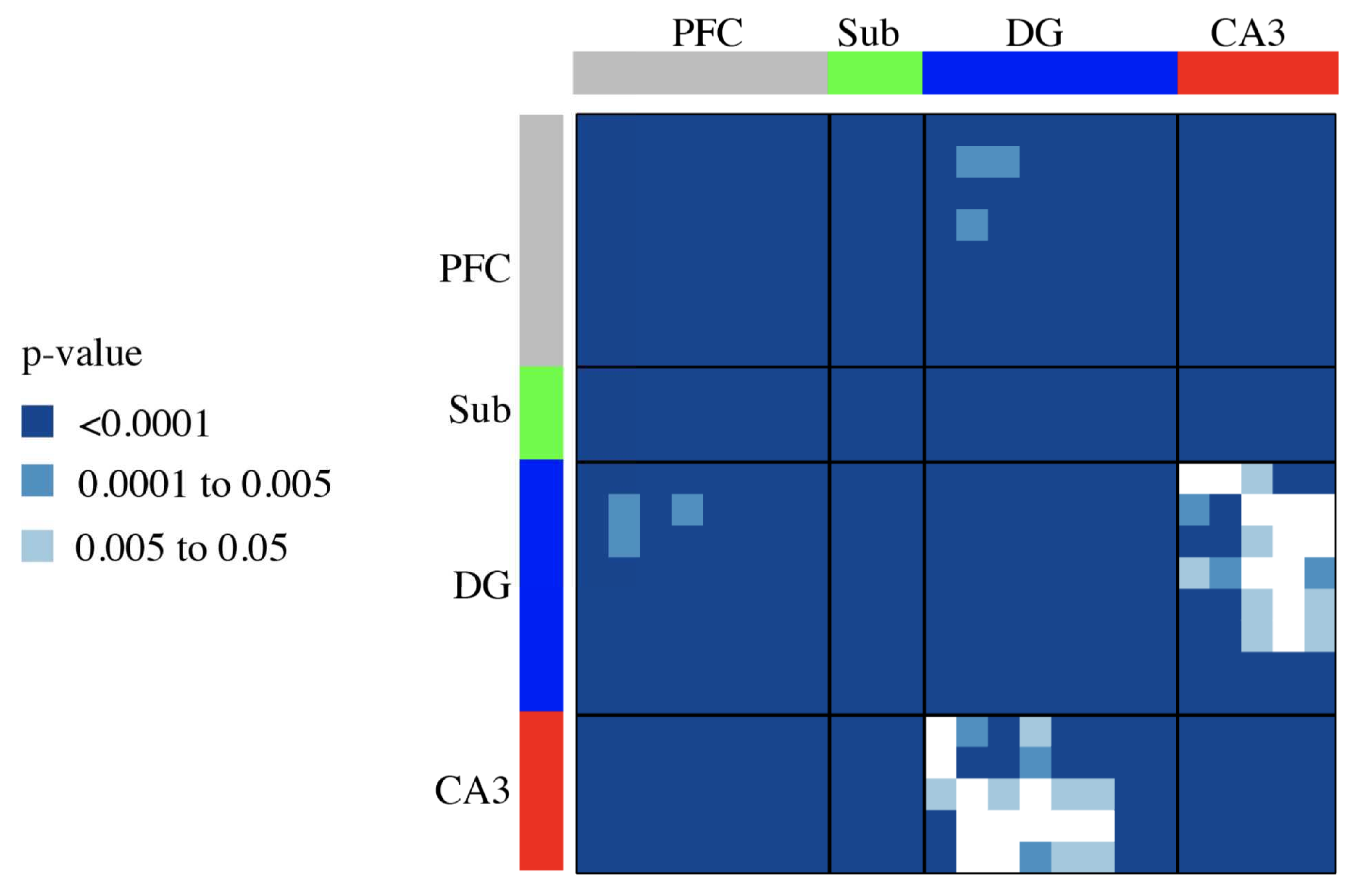}
	\caption{Adjacency matrix for PLV graph with edgewise $p$-values determined using Rayleigh's test of uniformity on the circle for each pairwise phase difference. Entries are colored by $p$-value and, compared to the torus graph adjacency matrix Figure \ref{fig:AcrossROI}C, there is very little noticeable structure in the graph even for very small $p$-value thresholds (aside from a lack of edges between CA3 and DG).}
	\label{fig:plvadj}
\end{figure}

\begin{figure}
    \centering
    \includegraphics[width=\linewidth]{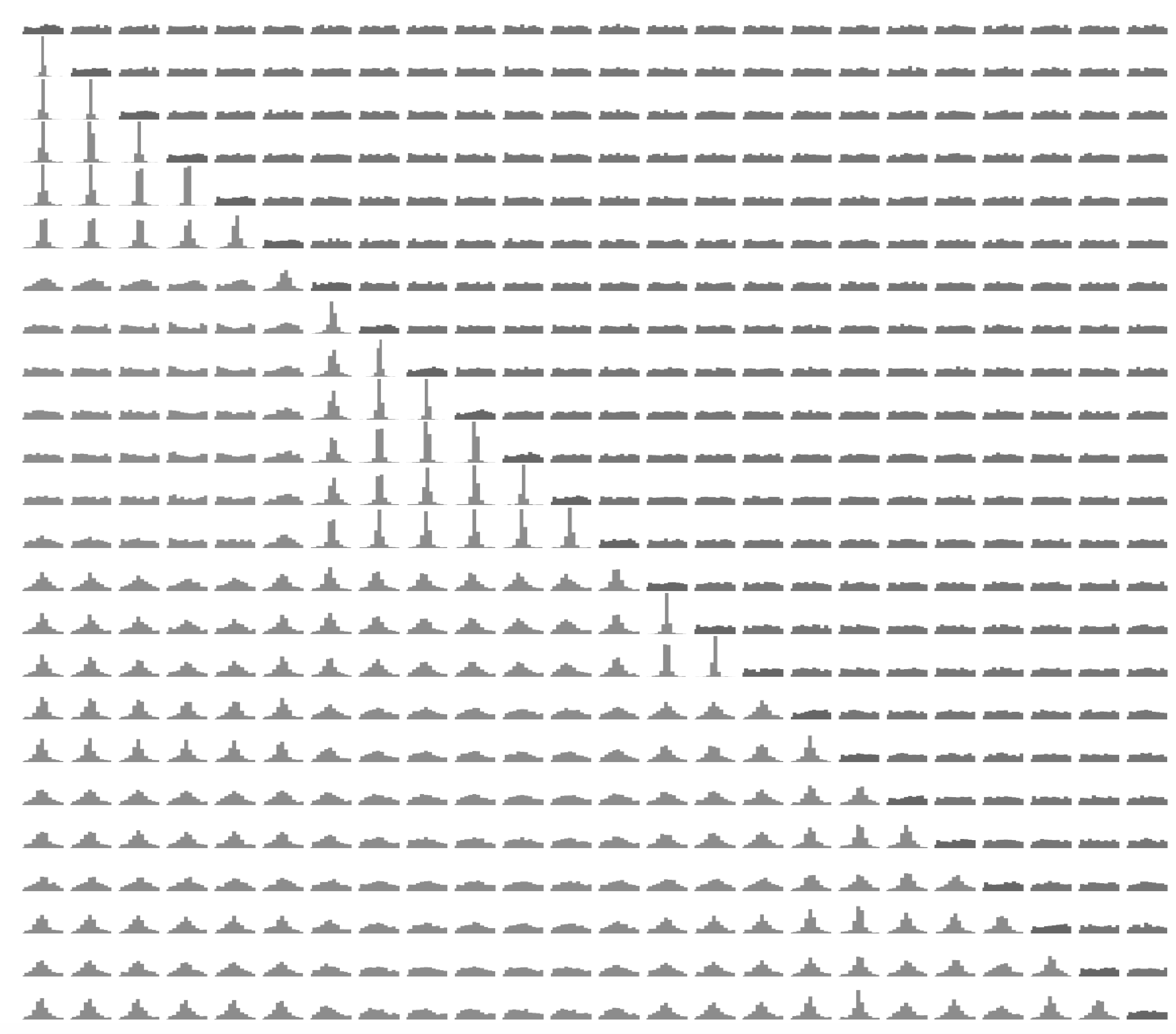}
    \caption{For the 24-dimensional real LFP data, the diagonal shows univariate histograms which appear to have low concentration in all cases. Below the diagonal are histograms of phase differences between pairs of angles, showing some highly concentrated distributions suggesting rotational dependence; above the diagonal are histograms of phase sums, showing very little concentration, suggesting there is not strong evidence for reflectional dependence in these data.}
    \label{fig:suppsuffstateda}
\end{figure}

\begin{figure}[h]
	\centering
	\includegraphics[width=\linewidth]{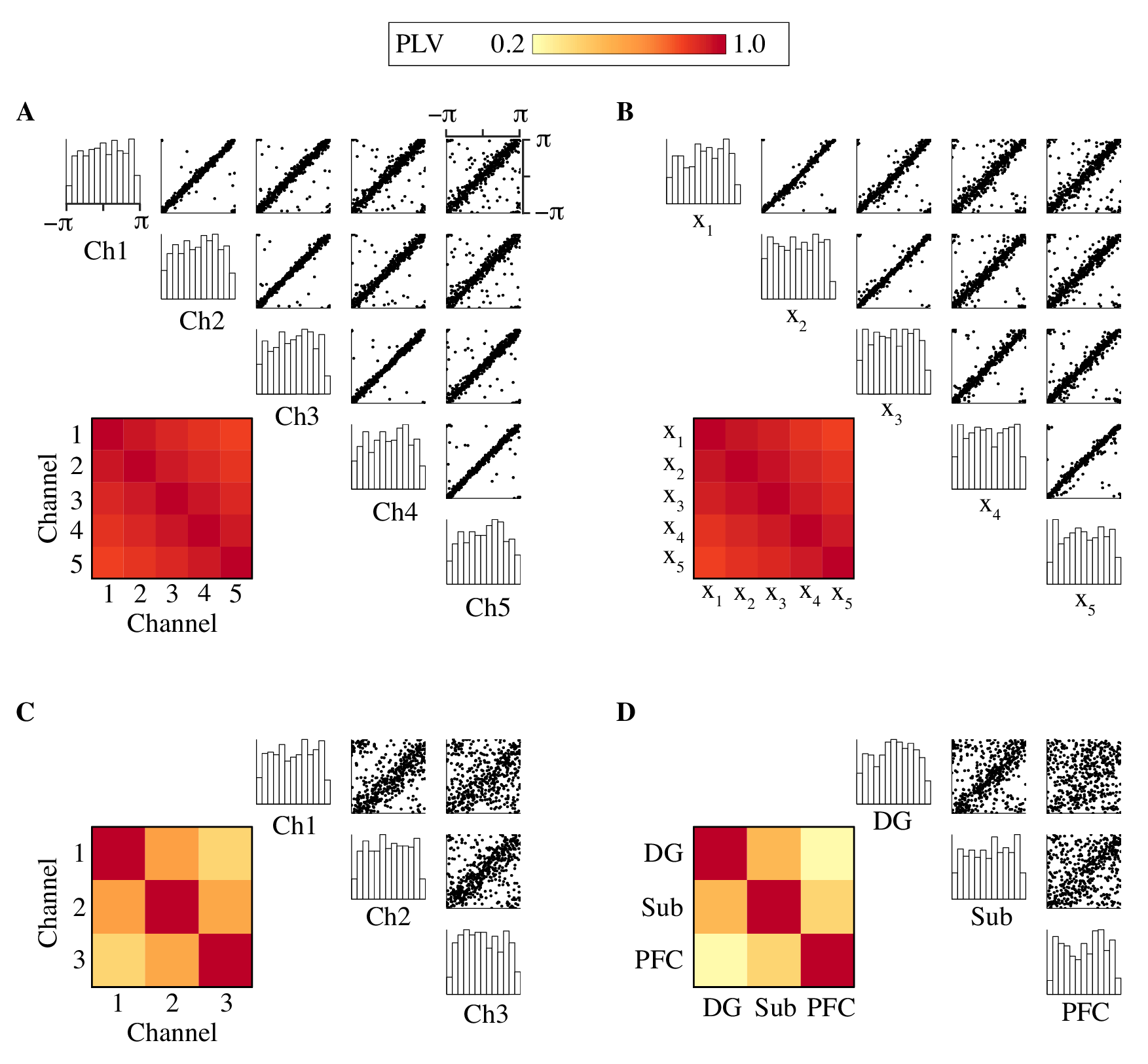}
	\caption{Examples of simulated and real data to demonstrate the validity of the simulation process. Upper right: histograms and pairwise scatter plots, bottom left: estimated PLV matrices (color scale 0.2 to 1, with red indicating higher PLV values). (A) Simulated 5-channel data with linear probe structure. (B) Real 5-channel data from CA3. (C) Simulated 3-channel data. (D) Real 3-channel data from separate regions (DG, Sub, and PFC). }
	\label{fig:toyvalidation}
\end{figure}

\begin{figure}[h]
	\centering
	\includegraphics[width=\linewidth]{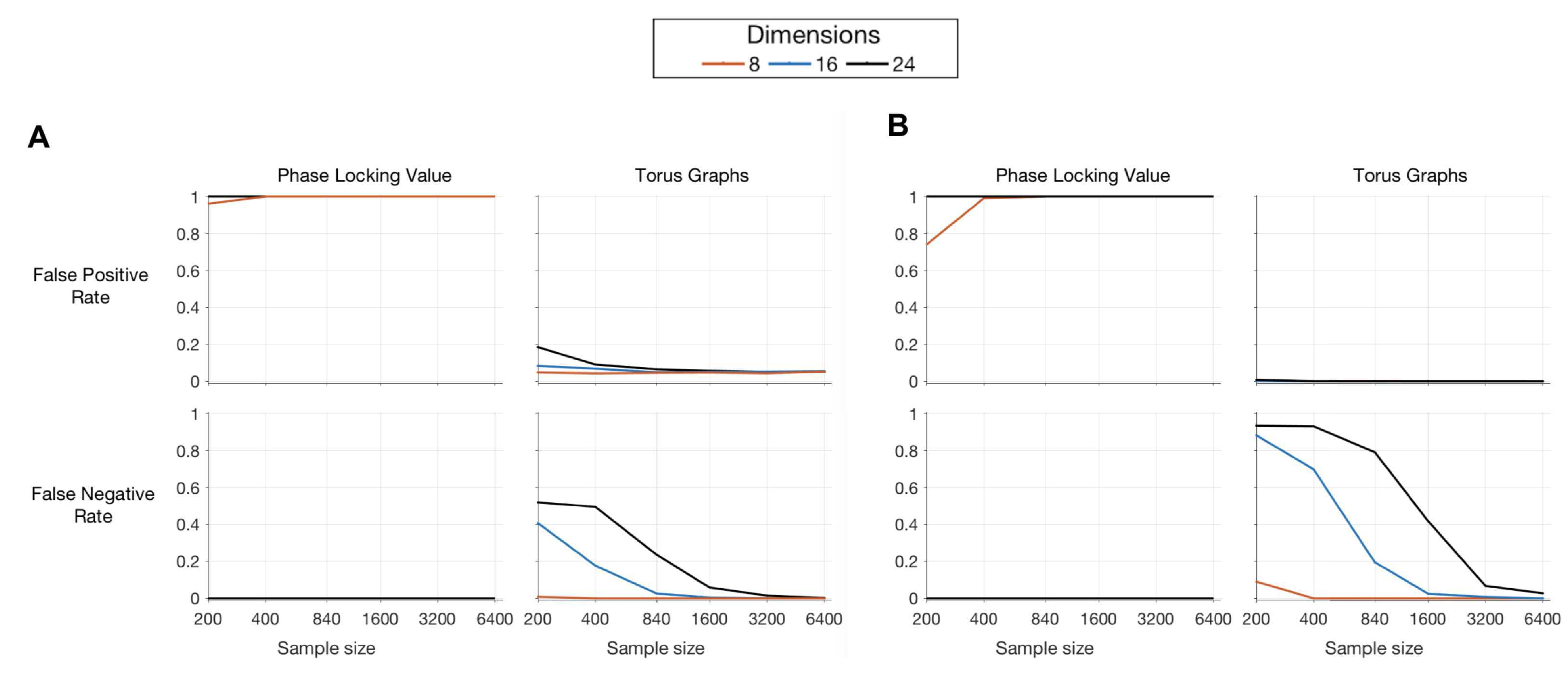}
	\caption{Investigation of False Positive Rate (FPR) and False Negative Rate (FNR) for graphs of varying dimensions as sample size increases. (A) FPR and FNR for PLV (left) and torus graphs (right) using an alpha level of 0.05 for the edgewise hypothesis tests with no Bonferroni correction for the number of edges. PLV has high FPR for all sample sizes while torus graphs control the FPR; on the other hand, PLV has low FNR, but torus graphs is more conservative and for low sample sizes may be missing some edges. (B) Same as A, but with Bonferroni correction.}
	\label{fig:fnrfpr}
\end{figure}

\clearpage
\bibliographystyle{apa}
\bibliography{Torus_Graphs}

\end{document}